\documentclass[11pt,a4paper]{article}
\usepackage[T1]{fontenc}
\usepackage{lmodern}
\usepackage[margin=25mm]{geometry}
\usepackage{amsmath,amssymb,amsthm,mathtools}
\usepackage{booktabs,array,microtype}
\usepackage[hidelinks]{hyperref}
\allowdisplaybreaks
\newtheorem{theorem}{Theorem}[section]
\newtheorem{proposition}[theorem]{Proposition}
\newtheorem{lemma}[theorem]{Lemma}
\newtheorem{corollary}[theorem]{Corollary}
\theoremstyle{definition}

\newtheorem{example}[theorem]{Example}
\theoremstyle{remark}
\newtheorem{remark}[theorem]{Remark}
\newcommand{\R}{\mathbb R}

\newcommand{\Z}{\mathbb Z}

\newcommand{\diag}{\operatorname{diag}}

\newcommand{\norm}[1]{\lVert#1\rVert}
\numberwithin{equation}{section}
\title{Moment Methods for Uniform Average Mixing\\
on Strongly Regular Graphs}
\author{Musung Kang}
\date{7 October 2026}
\begin{document}
\maketitle
\begin{abstract}
We study continuous-time quantum walks on connected strongly regular graphs that are not complete, observed at a random time drawn from a freely chosen probability law. Uniform average mixing (UAM) asks for a law under which every averaged transition probability equals $\displaystyle\frac{1}{n}$, where $n$ is the number of vertices. On a strongly regular graph this is equivalent to two affine constraints on three cosine moments. We construct a bounded, compactly supported time density for every strongly regular graph with nonintegral eigenvalues. For integral spectra we give an exact finite Toeplitz criterion and its Hankel form. Every averaged mixing matrix of such a graph is realized by at most two observation times.
Three elementary inequalities on the moment line, which also give a short proof of Chan's classification of complex Hadamard matrices in the Bose--Mesner algebra, lead to a determination of all strongly regular graphs that admit UAM. Apart from the conference graphs of nonsquare order and the graphs with instantaneous uniform mixing, these are the members of two infinite families of parameter sets and their complements, and for them we give explicit laws with two observation times. The Petersen graph and its complement are the smallest members. A strongly regular graph with UAM admits a bounded time density exactly when it has no instantaneous uniform mixing.
We also correct the classification of instantaneous uniform mixing on strongly regular graphs by Godsil, Mullin and Roy. Its sign condition excludes the halved $5$-cube, which mixes uniformly at time $\displaystyle\frac{\pi}{4}$. With the order $4\theta^2$ read literally, its parity condition also excludes the Clebsch graph and includes the parameters $(36,14,4,6)$, for which no time law gives uniform average mixing.
\end{abstract}

\medskip
\noindent\textbf{Keywords.} Quantum walk, uniform mixing, average mixing, strongly regular graph, moment problem, complex Hadamard matrix.

\noindent\textbf{Mathematics Subject Classifications.} 05E30, 05C50, 81P45.

\section{Scope and main results}
Let $G$ be a finite simple graph on $n$ vertices, with adjacency matrix $A$. Set
\[
 U(t)=e^{itA},\quad M(t)=U(t)\circ\overline{U(t)}.
\]
A Borel probability measure $\mu$ on $[0,\infty)$ is a \emph{UAM time law} if
\begin{equation}\label{eq:uam}
  \int_{0}^{\infty} M(t)\,d\mu(t)=\frac{1}{n}J.
\end{equation}
The same law must work for every starting vertex. This is averaging over a chosen law, not necessarily the uniform average over a long time interval. Since $M(-t)=M(t)$, allowing measures on $\R$ gives the same existence question. The general spectral-moment viewpoint is developed in \cite{BCM}, where a law is called a sampling distribution, and a spectral theory of such averages is posed as an open problem in \cite[Open Problem~4.6]{CG}. Approximate uniform mixing of discrete-time quantum walks on strongly regular graphs is studied in \cite{Zhan}. A point mass $\delta_t$ is a UAM time law exactly when every entry of $U(t)$ has modulus $\displaystyle\frac{1}{\sqrt n}$. This is \emph{instantaneous uniform mixing} (IUM) at time $t$.

Throughout Sections~\ref{sec:reduction}--\ref{sec:examples}, $G$ is connected and not complete, and it is strongly regular with parameters $(n,k,\lambda,\nu)$.
The symbol $\nu$ denotes the fourth graph parameter, while $\mu$ denotes a time law.
The two restricted eigenvalues are denoted by $r>s$, and $k$ is a simple eigenvalue because $G$ is connected.

We prove the following results.
\begin{enumerate}
  \item UAM is equivalent to the affine three-moment conditions of Theorem~\ref{thm:line}.
  \item Every such strongly regular graph with nonintegral restricted eigenvalues admits a bounded probability density supported in a finite interval $[0,T]$. Both the density and a sufficient $T$ are specified in Theorem~\ref{thm:conference}.
  \item For integral restricted eigenvalues, Theorem~\ref{thm:toeplitz} gives a necessary and sufficient finite semidefinite condition. Section~\ref{sec:hankel} gives the equivalent real Hankel condition and recovers observation times from its representing measure.
  \item Every averaged mixing matrix of a connected strongly regular graph that is not complete is realized by at most two observation times (Theorem~\ref{thm:twoatoms}), and the support interval can be retained. For nonintegral spectra exactly two atoms are necessary and sufficient (Corollary~\ref{cor:conference-twoatoms}).
  \item Theorem~\ref{thm:classification} determines all strongly regular graphs with UAM. Besides the conference graphs of nonsquare order and the graphs with IUM, they are the graphs whose parameters or complementary parameters are $(4\theta^2+4\theta+2,\,2\theta^2+\theta,\,\theta^2-1,\,\theta^2)$ with $\theta\ge1$, or $(4\theta^2-1,\,2\theta^2,\,\theta^2,\,\theta^2)$ with $\theta\ge2$. Propositions~\ref{prop:family-v} and~\ref{prop:family-iii} give explicit laws with two observation times, and Proposition~\ref{prop:no-single-time} shows that one time never suffices for them. By Proposition~\ref{prop:density} a graph with UAM admits a UAM law with a bounded density exactly when it has no IUM.
  \item Theorem~\ref{thm:gmr-corrected} corrects \cite[Theorem~5.1 and Lemma~5.2]{GMR}, and Example~\ref{ex:clebsch} gives counterexamples to the statements as printed.
\end{enumerate}
The key steps for the classification are Lemmas~\ref{lem:gram-hadamard} and~\ref{lem:three-ineq}. A positive semidefinite moment Gram matrix on the moment line exists exactly when the Bose--Mesner algebra of $G$ contains a complex Hadamard matrix. Three elementary inequalities then leave five parameter families (Proposition~\ref{prop:elementary-chan}). This recovers Chan's classification of such matrices \cite{Chan11}, in the form of \cite[Theorem~4.1]{GMR}, by a short argument, and we decide the five families one by one. The classification does not depend on the classification theorems of \cite{Chan11} and \cite{GMR}, and no proof uses computer calculation.

\section{Reduction to three moments}\label{sec:reduction}
Strong regularity is encoded in the identity
\begin{equation}\label{eq:srg}
 A^2=(\lambda-\nu)A+(k-\nu)I+\nu J.
\end{equation}
Consequently
\[
 r+s=\lambda-\nu,\quad rs=\nu-k,\quad
 \Delta:=r-s>0.
\]
The spectral projections are
\begin{equation}\label{eq:projectors}
 E_0=\frac{1}{n}J,\quad
 E_r=\frac{1}{\Delta}\left(A-sI-\frac{k-s}{n}J\right),\quad
 E_s=\frac{1}{\Delta}\left(rI-A+\frac{k-r}{n}J\right).
\end{equation}
They give
\begin{equation}\label{eq:mixingexpansion}
 M(t)= \sum_{j \in \{0,r,s\}} E_j \circ E_j
 +2\cos((k-r)t)\,E_0\circ E_r
 +2\cos((k-s)t)\,E_0\circ E_s
 +2\cos(\Delta t)\,E_r\circ E_s .
\end{equation}
Integrating \eqref{eq:mixingexpansion} against a time law gives the strongly regular case of \cite[Proposition~5]{BCM}.

Define the positive frequencies and their moments by
\begin{align}
 \omega_1&=k-r,&\omega_2&=k-s,&\omega_3&=r-s,
 &\omega_2&=\omega_1+\omega_3,  \notag\\
 x&= \int\cos(\omega_1t)\,d\mu(t),&
 y&= \int\cos(\omega_2t)\,d\mu(t),&
 z&= \int\cos(\omega_3t)\,d\mu(t).&\label{eq:xyz}
\end{align}
These are three labelled moments. If two frequencies coincide, their moments coincide as well.

For an adjacent pair of vertices $g,h$, the scaled entries $n(E_r)_{gh}$ and $n(E_s)_{gh}$ are
\[
 a=\frac{n-k+s}{\Delta},\quad -1-a.
\]
For a distinct nonadjacent pair they are
\[
 b=-\frac{k-s}{\Delta},\quad -1-b.
\]
In particular $\displaystyle a-b=\frac{n}{\Delta}\ne0$. For either $u=a$ or $u=b$, the averaged transition probability for the corresponding pair is
\begin{equation}\label{eq:entry}
 \frac{1+u^2+(1+u)^2+2ux-2(1+u)y-2u(1+u)z}{n^2}.
\end{equation}

\begin{theorem}[The SRG moment line]\label{thm:line}
Put
\begin{align}
 \kappa&=1-\frac{n}{2},\notag\\
 P&=\frac{(n-k+r)(k-r)}{\Delta^2}=-(a+1)(b+1),&
 Q&=\frac{(n-k+s)(k-s)}{\Delta^2}=-ab.\label{eq:PQ}
\end{align}
A probability law $\mu$ is a UAM time law if and only if its moments satisfy
\begin{equation}\label{eq:line}
 \boxed{\quad x=\kappa+P\eta,\quad y=\kappa+Q\eta,\quad z=1-\eta
 \quad\text{for some }\eta\in[0,2].\quad}
\end{equation}
Here $\eta=1-z$ is a moment parameter, not an observation time.
\end{theorem}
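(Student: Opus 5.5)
The plan is to reduce UAM to two scalar equations, one for adjacent pairs and one for nonadjacent pairs, and then solve that $2\times 3$ affine system explicitly. Throughout I use the expansion \eqref{eq:mixingexpansion} and the entry formula \eqref{eq:entry}.

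\emph{Step 1: reduce to off-diagonal entries.} Each $M(t)$ is doubly stochastic, since $U(t)$ is unitary. Hence $\overline M=\int M(t)\,d\mu(t)$ is doubly stochastic too. By \eqref{eq:mixingexpansion}, $\overline M$ lies in the Bose--Mesner algebra, because that algebra is closed under $\circ$. So $\overline M=\alpha I+\beta A+\gamma(J-I-A)$. The row-sum condition reads $\alpha+k\beta+(n-1-k)\gamma=1$. Therefore $\overline M=\frac1nJ$ if and only if $\beta=\gamma=\frac1n$, and then $\alpha=\frac1n$ is automatic. Since $G$ is neither complete nor empty, both kinds of pair occur. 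By \eqref{eq:entry}, UAM is thus equivalent to requiring, for $u\in\{a,b\}$,
\[
 2u\,x-2(1+u)\,y-2u(1+u)\,z \;=\; n-1-u^2-(1+u)^2 .
\]

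\emph{Step 2: solve the system.} These are two affine equations in $(x,y,z)$. Their coefficient rows are $\bigl(u,-(1+u),-u(1+u)\bigr)$ for $u=a,b$. The $2\times2$ minor in the $x,y$ columns is $-a(1+b)+b(1+a)=b-a=-\frac{n}{\Delta}\neq0$. So the rows are independent, and the solution set is a line on which $z$ is a free coordinate.

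It then suffices to check that the line in \eqref{eq:line} satisfies both equations.
\begin{itemize}
 \item At $\eta=0$ the point is $(\kappa,\kappa,1)$. The left side becomes $-2\kappa-2u-2u^2=n-2-2u-2u^2$, which equals the right side.
 \item The direction vector is $(P,Q,-1)$. With $P=-(a+1)(b+1)$ and $Q=-ab$, it contributes
 \[
 2\bigl[-u(a+1)(b+1)+(1+u)ab+u(1+u)\bigr]=2(u-a)(u-b),
 \]
 which vanishes for $u=a$ and $u=b$.
\end{itemize}
Hence $\mu$ is a UAM time law exactly when $(x,y,z)$ lies on this line.

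\emph{Step 3: the range of $\eta$.} Since $z$ is a cosine moment of a probability measure, $z\in[-1,1]$, so $\eta=1-z\in[0,2]$. Conversely, the theorem asserts only the equivalence with \eqref{eq:line} for some admissible $\eta$. Any moment triple on the line has $\eta\in[0,2]$ automatically, so the restriction loses nothing.

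The remaining routine point is the pair of identities in \eqref{eq:PQ}. From $a=\frac{n-k+s}{\Delta}$ and $b=-\frac{k-s}{\Delta}$ we get $-ab=\frac{(n-k+s)(k-s)}{\Delta^2}=Q$. For $P$, note $a+1=\frac{n-k+r}{\Delta}$ and $b+1=-\frac{k-r}{\Delta}$, using $\Delta=r-s$; this gives $-(a+1)(b+1)=P$. I do not expect a real obstacle here. The only care needed is Step 1, namely justifying that the diagonal condition follows from the off-diagonal ones through double stochasticity, together with checking the entry formula \eqref{eq:entry}, which is taken as given.
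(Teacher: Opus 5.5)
Your proposal is correct and takes essentially the paper's route: reduce UAM to the two off-diagonal entry equations for $u=a$ and $u=b$, with the diagonal handled by row sums, and then solve the affine system. The only difference is how you solve it. You verify the candidate line and use the nonzero $x,y$-minor $b-a=-n/\Delta$ to show it is the whole solution set, whereas the paper eliminates directly, first obtaining $x-y=-(a+b+1)(1-z)$ by subtraction.
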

\begin{proof}
Divide the numerator of \eqref{eq:entry} by $2$ to obtain
\[
 1-y+(1-z)u(u+1)+u(x-y).
\]
Requiring \eqref{eq:entry} to equal $\displaystyle\frac{1}{n}$ for both $a$ and $b$ and subtracting gives
\[
 x-y=-(a+b+1)(1-z).
\]
Substitution into the equation for $a$ gives
\[
 y=1-\frac{n}{2}-ab(1-z),\quad
 x=1-\frac{n}{2}-(a+1)(b+1)(1-z),
\]
which is \eqref{eq:line}. The range of $\eta$ follows from $-1\le z\le1$.
Conversely, these identities make every off-diagonal entry of the averaged mixing matrix equal to $\displaystyle\frac{1}{n}$. Each row sums to one, so its diagonal entry is also $\displaystyle\frac{1}{n}$.
\end{proof}

Equivalently, put $R(t)=(R_1(t),R_2(t))\in\R^2$ with
\begin{equation}\label{eq:R}
 R_1(t)=\cos(\omega_1t)-\kappa-P(1-\cos(\omega_3t)),\quad
 R_2(t)=\cos(\omega_2t)-\kappa-Q(1-\cos(\omega_3t)).
\end{equation}
Then $\mu$ is a UAM time law if and only if $\displaystyle\int R\,d\mu=0$, and $\delta_t$ is one if and only if $R(t)=0$. Note that $\displaystyle R(0)=(1-\kappa)(1,1)=\frac{n}{2}(1,1)$.

\begin{corollary}[Parameter invariance]\label{cor:parameter}
Graphs with the same SRG parameters have exactly the same UAM time laws.
\end{corollary}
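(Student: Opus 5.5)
The plan is to read the corollary directly off Theorem~\ref{thm:line}, by checking that every quantity in the criterion \eqref{eq:line} is a function of the parameter tuple $(n,k,\lambda,\nu)$ alone.

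First, the restricted eigenvalues are determined by the parameters. By \eqref{eq:srg}, $r$ and $s$ are the two roots of $X^2-(\lambda-\nu)X-(k-\nu)=0$, with $r>s$. Hence $r$, $s$ and $\Delta=r-s$ depend only on $(n,k,\lambda,\nu)$. It follows that the frequencies $\omega_1=k-r$, $\omega_2=k-s$ and $\omega_3=\Delta$ in \eqref{eq:xyz} depend only on the parameters. The constants $\kappa=1-n/2$, $P$ and $Q$ from \eqref{eq:PQ} are explicit expressions in $n,k,r,s,\Delta$, so they are parameter functions as well.

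Next, fix a Borel probability law $\mu$ on $[0,\infty)$. Its moments $x,y,z$ are the integrals of $\cos(\omega_it)$ against $\mu$, so they depend only on $\mu$ and the $\omega_i$. The condition \eqref{eq:line} asks for some $\eta\in[0,2]$ satisfying three equations whose coefficients are all parameter functions. Equivalently, in the form \eqref{eq:R}, it asks that $\int R\,d\mu=0$, where $R$ is built from the $\omega_i$, $\kappa$, $P$ and $Q$.

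Two graphs with the same parameters therefore impose literally the same condition on $\mu$. By Theorem~\ref{thm:line}, which applies to both graphs since each is a connected, non-complete strongly regular graph under the standing assumptions, their sets of UAM time laws coincide.

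There is no real obstacle here. The only point needing care is that Theorem~\ref{thm:line} is stated as an exact equivalence, not just a sufficient condition. This holds because its proof shows that every off-diagonal entry of the averaged mixing matrix is given by \eqref{eq:entry} with $u\in\{a,b\}$, and the row-sum argument then fixes the diagonal entries. So no graph-specific information beyond the parameters enters at any stage.
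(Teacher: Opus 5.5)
Your proposal is correct and is essentially the paper's argument: the paper simply notes that \eqref{eq:entry} depends only on the parameters and the three moments, and you reach the same conclusion through the equivalent criterion \eqref{eq:line} of Theorem~\ref{thm:line}, which is derived from \eqref{eq:entry}. Your check that $r$, $s$, the frequencies $\omega_i$ and the constants $\kappa,P,Q$ are all functions of $(n,k,\lambda,\nu)$ is the detail that the paper's one-line proof leaves implicit.
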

\begin{proof}
Formula \eqref{eq:entry} depends only on the parameters and the three moments.
\end{proof}

\subsection{A support bound with two observation times}\label{sec:twoatoms}
We record the continuous-curve form of Fenchel's refinement of
Carath\'eodory's theorem, see \cite[Section 1 and Theorem 1.5]{BK}.
An elementary proof is included to make the support reduction explicit.

\begin{lemma}[Convex combinations on a continuous curve]\label{lem:curve}
Let $d\ge1$ and let $F:[a,b]\to\R^d$ be continuous. Every point in
$\operatorname{conv}F([a,b])$ is a convex combination of at most $d$
points of $F([a,b])$.
\end{lemma}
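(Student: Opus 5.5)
The plan is a Carathéodory argument followed by a connectedness argument. Let $p\in\operatorname{conv}F([a,b])$. By Carathéodory's theorem, $p$ is a convex combination of at most $d+1$ points $F(t_0),\dots,F(t_d)$ with $t_0<\dots<t_d$. We may assume the weights are all positive and the points affinely independent, since otherwise Carathéodory already gives at most $d$ points. The task is to remove one point, and continuity of the curve is what makes this possible. The first step is to reduce to the case where $p$ lies in the interior of the full $d$-simplex $S=\operatorname{conv}\{F(t_0),\dots,F(t_d)\}$.

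Suppose, for contradiction, that $p$ is not a convex combination of $d$ or fewer points of the curve. Consider the compact set $K=F([a,b])$ and a hyperplane argument. For each unit vector $v$, let $h(v)=\max_{t}\langle v,F(t)-p\rangle\ge0$. Since $p\in\operatorname{conv}K$, every closed halfspace through $p$ contains a point of the curve. The goal is then to show that some hyperplane $H$ through $p$ meets the curve in a configuration whose convex hull contains $p$ inside $H$. This reduces the dimension to $d-1$ and allows an induction on $d$.

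Concretely, the induction runs as follows. For $d=1$, $F$ is a continuous real function and $p\in[\min F,\max F]$, so by the intermediate value theorem $p=F(t)$ for some $t$, which is one point. For general $d$, I would choose the hyperplane by a continuity/degree argument. Fix a point $q=F(t^*)\ne p$ and consider lines through $p$. I would instead use the following path argument, which I expect to be cleaner. Among the simplex vertices, $F(t_0)$ and $F(t_d)$ are joined by the arc $F([t_0,t_d])$. Move a point $c(\tau)=F(\tau)$ along the arc and consider the simplex with vertex set $\{c(\tau),F(t_1),\dots,F(t_d)\}$, that is, with $F(t_0)$ replaced by $c(\tau)$. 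At $\tau=t_0$ this simplex contains $p$ in its interior. Now write $p$ in barycentric coordinates relative to the hyperplane-type decomposition. The signed volume, or the barycentric coordinate of $p$, varies continuously in $\tau$. At $\tau=t_d$ the simplex degenerates, because $F(t_d)$ is repeated, so $p$ is no longer in the interior of a nondegenerate simplex. Hence at the first parameter where $p$ leaves the open simplex, $p$ lies on a facet. A facet has $d$ vertices, so $p$ is a convex combination of at most $d$ curve points. Degeneracy is also fine, since in that case the vertex set effectively has at most $d$ distinct points.

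The main obstacle is making this exit argument rigorous. Barycentric coordinates are undefined when the simplex degenerates, so I would work with the Cramer's-rule numerators. These are determinants $D_i(\tau)$, each continuous in $\tau$, which give the coordinates after dividing by $D(\tau)=\det$ of the vertex differences. Then $p\in\operatorname{conv}$ exactly when all $D_i$ share the sign of $D$, including zeros. Take $\tau^*=\sup\{\tau: p\in\text{current simplex}\}$. By closedness, $p$ lies in the simplex at $\tau^*$. If $\tau^*<t_d$, the maximality of $\tau^*$ forces some coordinate to vanish, or forces $D(\tau^*)=0$. If $\tau^*=t_d$, then a vertex is repeated. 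In every case $p$ lies in the convex hull of at most $d$ of the vertices, by Carathéodory applied within the affine hull of dimension at most $d-1$.
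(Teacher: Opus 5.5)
Your proposal is correct and is essentially the paper's argument: Carathéodory puts $p$ in the interior of a full simplex with vertices on the curve, you slide one vertex along the curve to another vertex, and at the last parameter where $p$ stays in the simplex, $p$ lies on a proper face, the simplex is degenerate, or a vertex is repeated (the hyperplane and induction ideas in your second paragraph are not needed). One small repair: your Cramer's-rule sign test is not a valid membership criterion when $D(\tau)=0$, so justify that $p$ lies in the simplex at $\tau^*$ directly by compactness of the barycentric weights, as the paper does, and use the determinants only when $D(\tau^*)\neq0$, where the coordinates $D_i/D$ are continuous and your maximality step works.
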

\begin{proof}
Ordinary Carath\'eodory gives at most $d+1$ points. If $d$ points do
not already suffice, write the target $q$ as an interior point of a
full-dimensional simplex with vertices $x_0,\ldots,x_d$ on the curve.
There is a continuous path $\Gamma:[0,1]\to F([a,b])$ from $x_0$
to $x_1$, obtained by traversing the corresponding parameter interval.
Set
\[
 S_u=\operatorname{conv}\{\Gamma(u),x_1,\ldots,x_d\},\quad
 E=\{u\in[0,1]:q\in S_u\}.
\]
The set $E$ is compact, because along a convergent sequence in $E$ the
barycentric weights have a convergent subsequence in the standard
simplex. Also $0\in E$. If $1\in E$, the simplex $S_1$ has at most
$d$ distinct vertices and we are done. Otherwise let $u_*=\max E<1$.
If $S_{u_*}$ is full dimensional and $q$ lies in its interior, its
strictly positive barycentric coordinates persist for $u>u_*$
sufficiently close to $u_*$, a contradiction. Thus $q$ lies on a
proper face, or $S_{u_*}$ has dimension at most $d-1$. In either case
Carath\'eodory in that affine span expresses $q$ using at most $d$
of its vertices.
\end{proof}

\begin{lemma}[Atomic compression of continuous observables]\label{lem:observable-compression}
Let $F:[0,\infty)\to\R^d$ be bounded and continuous, with $d\ge1$,
and let $\mu$ be a Borel probability law. There are at most $d$
times and nonnegative weights of sum one such that
\[
 \int F(t)\,d\mu(t)=\sum_{j=1}^{\ell}p_jF(t_j),\quad \ell\le d.
\]
If $\mu$ is concentrated on a compact interval $[a,b]\subset[0,\infty)$,
all the $t_j$ can be chosen in that interval.
\end{lemma}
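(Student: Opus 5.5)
We reduce to Lemma~\ref{lem:curve}. The point $m=\int F\,d\mu$ lies in the closed convex hull of $F([0,\infty))$, but $[0,\infty)$ is not compact and $F$ is only bounded, so Lemma~\ref{lem:curve} does not apply directly. First we therefore show that $m$ lies in $\operatorname{conv}F([a,b])$ for some compact interval $[a,b]$, with $[a,b]$ the given interval in the concentrated case. Lemma~\ref{lem:curve} then gives at most $d$ points $F(t_j)$ with $t_j\in[a,b]$ whose convex combination is $m$. This is exactly the claim, with $\ell\le d$.

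In the compact case, when $\mu$ is concentrated on $[a,b]$, we argue by separation. Here $F([a,b])$ is compact, so $K=\operatorname{conv}F([a,b])$ is compact and hence closed. Suppose $m\notin K$. Then some linear functional $\phi$ and constant $c$ satisfy $\phi(m)>c\ge\phi(F(t))$ for all $t\in[a,b]$. Integrating over $\mu$ gives $\phi(m)=\int\phi(F)\,d\mu\le c$, a contradiction. Hence $m\in K$.

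In the general case, the main obstacle is that $m$ may lie only in the closure of $\operatorname{conv}F([0,\infty))$, and in principle on its boundary. To handle this, we work in the affine span and induct on the dimension.
\begin{itemize}
\item Let $C=\operatorname{conv}F(\operatorname{supp}\mu)$, and let $L$ be the affine hull of $F(\operatorname{supp}\mu)$.
\item Claim: $m$ lies in the relative interior of $\overline{C}$ taken within $L$.
\item Proof of the claim: if not, a supporting functional $\phi$ that is nonconstant on $L$ satisfies $\phi(F)\le\phi(m)$ on $\operatorname{supp}\mu$. Since the integral of $\phi(F)$ equals $\phi(m)$, we get $\phi(F)=\phi(m)$ $\mu$-almost everywhere, and by continuity on all of $\operatorname{supp}\mu$. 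This contradicts the choice of $\phi$ as nonconstant on $L$.
\end{itemize}

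With the claim in hand, $m$ is a relative interior point of $\overline{C}$. Choose finitely many points of $\overline C$ within $L$ whose convex hull contains $m$ in its relative interior. Perturb each of them slightly to a point of $C$. This produces finitely many points $F(s_i)$, and the relative interior persists under small perturbation. So $m\in\operatorname{conv}\{F(s_i)\}\subset\operatorname{conv}F([0,b])$, where $b=\max s_i$.

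Applying Lemma~\ref{lem:curve} to $F|_{[0,b]}$ finishes the proof. If $\operatorname{supp}\mu$ is contained in $[a,b]$, we use the compact case instead, so all $t_j$ lie in $[a,b]$.
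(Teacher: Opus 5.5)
Your proof is correct. It reaches $m\in\operatorname{conv}F(\text{compact interval})$ by a different route than the paper.

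\textbf{The paper's route.} The paper treats both cases at once. For an arbitrary Borel set $X$ carrying $\mu$, it shows $q\in\operatorname{conv}F(X)$ by induction on $\dim\operatorname{aff}F(X)$. Each step passes, via a supporting hyperplane, to the smaller Borel set $X'=\{t\in X:\ell(F(t))=\ell(q)\}$. Carath\'eodory then gives $d+1$ times in $X$, which lie in a compact interval, and only at that point is Lemma~\ref{lem:curve} (and hence continuity) used.

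\textbf{Your route.} You split into two cases.
\begin{itemize}
\item In the compact case, $\operatorname{conv}F([a,b])$ is closed, so a single strict separation suffices.
\item In the general case, you replace $X$ by the topological support. Continuity of $F$ then upgrades $\phi(F)=\phi(m)$ $\mu$-a.e.\ to equality on all of $\operatorname{supp}\mu$. This yields the stronger fact $m\in\operatorname{ri}\operatorname{conv}F(\operatorname{supp}\mu)$ in one step. The induction you announce is therefore never actually needed.
\end{itemize}

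\textbf{Comparison.} The paper's argument is uniform over both cases and does not use continuity before the curve lemma. Yours is shorter and avoids induction, but it relies on continuity and on $\mu(\operatorname{supp}\mu)=1$. That fact holds because $[0,\infty)$ is second countable, and you should state it explicitly.

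\textbf{Wording points to tidy.}
\begin{itemize}
\item In the proof of the claim, treat $m\notin\overline C$ separately by strict separation. The separating functional need not be nonconstant on $L$, but $\phi(F)\le c<\phi(m)$ on $\operatorname{supp}\mu$ already contradicts $\phi(m)=\int\phi(F)\,d\mu$.
\item The perturbed points lie in $C$, so each is a finite convex combination of values $F(s)$ with $s\in\operatorname{supp}\mu$, not a single point $F(s_i)$. Alternatively, $\operatorname{ri}\overline C=\operatorname{ri}C\subseteq C$ gives $m\in C$ directly, with no perturbation needed.
\end{itemize}
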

\begin{proof}
Let $\mu$ be concentrated on a Borel set $X\subseteq[0,\infty)$, and put $\displaystyle q=\int F\,d\mu$ and $K=\operatorname{conv}F(X)$. We show $q\in K$ by induction on the dimension of the affine hull of $F(X)$, the case of a single point being trivial. The point $q$ lies in this affine hull, and it lies in the closure of $K$ because $\displaystyle\ell(q)=\int\ell(F)\,d\mu\le\sup_K\ell$ for every linear functional $\ell$. If $q$ is not in the relative interior of $K$, a supporting hyperplane gives a linear functional $\ell$, not constant on $F(X)$, with $\ell\le\ell(q)$ on $K$. Then $\displaystyle\int\left(\ell(q)-\ell(F)\right)d\mu=0$ with a nonnegative integrand, so $\mu$ is concentrated on $X'=\{t\in X:\ell(F(t))=\ell(q)\}$, and the affine hull of $F(X')$ has smaller dimension. By induction $q\in\operatorname{conv}F(X')\subseteq K$. This is the elementary fact behind the atomic representation theorem \cite[Corollary 2]{BT}.

By Carath\'eodory's theorem $q$ is a convex combination of at most $d+1$ points $F(t_j)$ with $t_j\in X$. These times lie in a compact interval, which can be taken to be $[a,b]$ when $\mu$ is concentrated on $[a,b]$. Lemma~\ref{lem:curve}, applied to $F$ on that interval, reduces the number of points to $d$. In particular $q$ lies in the convex hull itself, not merely in its closure, and no assumption on rational relations among the frequencies is used.
\end{proof}

\begin{theorem}[Two-time realization of every SRG average]\label{thm:twoatoms}
Let $G$ be a connected strongly regular graph that is not complete, and let $\mu$ be any Borel probability
law on $[0,\infty)$. Then there exist $t_1,t_2\ge0$ and $p\in[0,1]$
such that
\begin{equation}\label{eq:twoaverage}
 \boxed{\quad
 \int M(t)\,d\mu(t)=pM(t_1)+(1-p)M(t_2).
 \quad}
\end{equation}
If $\mu$ is concentrated on a compact interval $[a,b]$, both times can
be chosen in $[a,b]$. In particular,
\begin{equation}\label{eq:twouam}
 \boxed{\quad
 G\text{ admits UAM}
 \quad\Longleftrightarrow\quad
 \exists\,p\in[0,1],\ t_1,t_2\ge0\ \text{ with }
 \ pM(t_1)+(1-p)M(t_2)=\frac{1}{n}J .
 \quad}
\end{equation}
\end{theorem}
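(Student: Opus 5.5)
The idea is to apply Lemma~\ref{lem:observable-compression} with $d=2$. We need a bounded continuous map $F:[0,\infty)\to\R^2$ whose average determines $\int M\,d\mu$ through an affine map. By \eqref{eq:mixingexpansion}, $M(t)$ is an affine function of the three cosines $(\cos\omega_1t,\cos\omega_2t,\cos\omega_3t)$ with fixed matrix coefficients. So $\int M\,d\mu$ is the same affine function applied to the moment vector $(x,y,z)$ of \eqref{eq:xyz}. A naive application of the lemma with $d=3$ gives three times, so the work is in cutting one coordinate.

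To get down to two, I would use the structure of the averaged matrix rather than all three moments separately. Since $M(t)$ lies in the Bose--Mesner algebra, it is determined by its entries at an adjacent pair and at a nonadjacent pair. These are the two values of \eqref{eq:entry} for $u=a$ and $u=b$. The diagonal entry is then forced, because each row of $M(t)$ sums to one: $M(t)$ is doubly stochastic, as $U(t)$ is unitary. Define
\[
F(t)=\bigl(n^2M(t)_{gh},\,n^2M(t)_{gh'}\bigr)
\]
with $g\sim h$ and $g\not\sim h'$, $h'\ne g$. By \eqref{eq:entry}, each coordinate is an affine function of the cosines. The map $F$ is continuous and bounded, since entries of $M(t)$ lie in $[0,1]$.

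The central step is to check that $\int F\,d\mu=pF(t_1)+(1-p)F(t_2)$ implies \eqref{eq:twoaverage}. This holds because every entry of both sides is an affine function of $F$: the off-diagonal entries of a matrix in the algebra are read off from the two coordinates, and the diagonal entry is $1-kF_1/n^2-(n-1-k)F_2/n^2$. Here I use that the entries of $M(t)$ depend only on whether the pair is equal, adjacent or nonadjacent, since $M(t)=\sum E_i\circ E_j$ terms lie in the algebra. Affine identities are preserved under convex combinations and integration, so averaging $F$ averages $M$. Lemma~\ref{lem:observable-compression} then supplies $\ell\le2$ times (repeat a time if $\ell=1$), with the times in $[a,b]$ when $\mu$ is concentrated there. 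Lemma~\ref{lem:curve} keeps everything inside that compact interval.

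The equivalence \eqref{eq:twouam} follows at once. If a UAM law exists, apply \eqref{eq:twoaverage} to it. Conversely, the measure $p\delta_{t_1}+(1-p)\delta_{t_2}$ is itself a UAM time law. The only point needing care is the bookkeeping showing that two scalar entries determine the whole averaged matrix. That uses only row sums and the three-class structure of a strongly regular graph that is not complete and connected, so both adjacency classes are nonempty.
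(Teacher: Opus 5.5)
Your proposal is correct and matches the paper's proof: it applies Lemma~\ref{lem:observable-compression} with $d=2$ to $F(t)=(\beta_1(t),\beta_2(t))$, the adjacent-pair and nonadjacent-pair entries of $M(t)$, and recovers the diagonal from $\beta_0+k\beta_1+(n-k-1)\beta_2=1$. Your scaling by $n^2$ and your use of \eqref{eq:mixingexpansion} to place $M(t)$ in the algebra, where the paper argues closure of the span of $I,A,J$ under Schur products, are immaterial differences.
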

\begin{proof}
By \eqref{eq:srg}, the space spanned by $I,A,J$ is an algebra under
ordinary matrix multiplication. Its basis $I,A,J-I-A$ has disjoint
entry supports, so the algebra is also closed under Schur products
and entrywise conjugation. Hence $U(t)$ and $M(t)$ belong to it, and
there are bounded continuous real functions $\beta_0,\beta_1,\beta_2$ with
\begin{equation}\label{eq:threeclasses}
 M(t)=\beta_0(t)I+\beta_1(t)A+\beta_2(t)(J-I-A).
\end{equation}
These functions are the diagonal, adjacent-pair, and nonadjacent-pair
transition probabilities. Stochasticity gives
\begin{equation}\label{eq:rowconstraint}
 \beta_0(t)+k\beta_1(t)+(n-k-1)\beta_2(t)=1.
\end{equation}
Apply Lemma~\ref{lem:observable-compression} with $d=2$ and
$F(t)=(\beta_1(t),\beta_2(t))$. The two resulting times preserve both
off-diagonal class averages, and \eqref{eq:rowconstraint} preserves the
diagonal average as well. This proves \eqref{eq:twoaverage} and the
interval assertion. Equation \eqref{eq:twouam} follows, with the
reverse implication given by the probability law
$p\delta_{t_1}+(1-p)\delta_{t_2}$.
\end{proof}

\begin{corollary}[An exact two-time test]\label{cor:two-time-test}
Put
\[
 \Psi(t)=\left(\beta_1(t)-\frac{1}{n},\ \beta_2(t)-\frac{1}{n}\right)\in\R^2 .
\]
Then $G$ admits UAM if and only if either $\Psi(t)=0$ at some time, or
there are $t_1,t_2\ge0$ for which
\begin{equation}\label{eq:opposite}
 \det[\Psi(t_1),\Psi(t_2)]=0,\quad
 \Psi(t_1)\mathbin{\cdot}\Psi(t_2)<0 .
\end{equation}
In the second case an exact UAM law is
\begin{equation}\label{eq:two-weights}
 \mu=p\delta_{t_1}+(1-p)\delta_{t_2},\quad
 p=\frac{\norm{\Psi(t_2)}_2}
         {\norm{\Psi(t_1)}_2+\norm{\Psi(t_2)}_2}.
\end{equation}
\end{corollary}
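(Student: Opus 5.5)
The plan is to reduce everything to Theorem~\ref{thm:twoatoms} and then do elementary planar geometry with the vector $\Psi$. By \eqref{eq:threeclasses} and \eqref{eq:rowconstraint}, a law $\mu$ is a UAM time law exactly when $\int\Psi\,d\mu=0$. The reason is that once both off-diagonal class averages equal $\frac1n$, the row constraint forces the diagonal average to be $1-(n-1)\frac1n=\frac1n$ as well. By \eqref{eq:twouam}, $G$ admits UAM if and only if $p\Psi(t_1)+(1-p)\Psi(t_2)=0$ for some $p\in[0,1]$ and times $t_1,t_2\ge0$.

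For sufficiency, first suppose $\Psi(t)=0$. Then $\delta_t$ is a UAM law (IUM), and we are done. Otherwise suppose \eqref{eq:opposite} holds. The condition $\Psi(t_1)\cdot\Psi(t_2)<0$ forces both vectors to be nonzero, so the weights in \eqref{eq:two-weights} are well defined and lie in $(0,1)$. The vanishing determinant makes the two vectors parallel, and the negative inner product makes them antiparallel. Hence $\Psi(t_2)=-c\,\Psi(t_1)$ with $c=\norm{\Psi(t_2)}_2/\norm{\Psi(t_1)}_2>0$. A direct substitution then gives
\[
p\Psi(t_1)+(1-p)\Psi(t_2)=\frac{\norm{\Psi(t_2)}_2\Psi(t_1)-\norm{\Psi(t_1)}_2\,c\,\Psi(t_1)}{\norm{\Psi(t_1)}_2+\norm{\Psi(t_2)}_2}=0,
\]
which verifies that the law \eqref{eq:two-weights} is an exact UAM law.

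For necessity, assume UAM and take $p,t_1,t_2$ from \eqref{eq:twouam}. Consider the cases in turn. If $p\in\{0,1\}$, or if $t_1=t_2$, then the relation collapses to $\Psi(t_i)=0$ for a single time. If some $\Psi(t_i)=0$ with $0<p<1$, then the relation forces the other vector to vanish too, and we are in the first alternative. Otherwise $0<p<1$ and both vectors are nonzero. The relation then reads $\Psi(t_2)=-\frac{p}{1-p}\Psi(t_1)$, so the determinant vanishes. The inner product equals $-\frac{p}{1-p}\norm{\Psi(t_1)}_2^2<0$, so \eqref{eq:opposite} holds.

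The only real step is the appeal to Theorem~\ref{thm:twoatoms}. After that the argument is bookkeeping of degenerate cases, so I expect no genuine obstacle.
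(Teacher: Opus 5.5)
Your proof is correct and follows the paper's route: you reduce via Theorem~\ref{thm:twoatoms} to $p\Psi(t_1)+(1-p)\Psi(t_2)=0$, using \eqref{eq:rowconstraint} to handle the diagonal. Outside the IUM case this says that two nonzero vectors in the plane are opposite positive multiples, which is exactly \eqref{eq:opposite} with the cancelling weights \eqref{eq:two-weights}; you simply spell out the degenerate cases ($p\in\{0,1\}$, $t_1=t_2$, a vanishing $\Psi(t_i)$) that the paper's three-line proof leaves implicit.
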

\begin{proof}
The first case is instantaneous uniform mixing. Otherwise
\eqref{eq:twouam} holds precisely when two nonzero vectors $\Psi(t_1)$
and $\Psi(t_2)$ are opposite positive multiples. In two dimensions this
is equivalent to \eqref{eq:opposite}, and \eqref{eq:two-weights}
is their cancelling convex combination.
\end{proof}

\begin{remark}[What the support bound preserves]\label{rem:two-scope}
Theorem~\ref{thm:twoatoms} preserves the entire averaged mixing matrix.
It need not preserve all three labelled cosine moments $x,y,z$,
since their image under \eqref{eq:mixingexpansion} has only two independent
affine coordinates. If those three moments themselves are prescribed,
Lemma~\ref{lem:observable-compression} with
$F(t)=(\cos(\omega_1t),\cos(\omega_2t),\cos(\omega_3t))$ gives
at most three times, again retaining a compact support interval.
The two-time theorem is an existence reduction, not a uniform upper
bound on the times. It applies to arbitrary Borel time laws, and it does not
assert two-atom realization inside the subclass of densities.
The same proof shows that every averaged mixing matrix of a graph whose adjacency matrix lies in the Bose--Mesner algebra of a symmetric association scheme with $d$ classes, for instance a distance-regular graph of diameter $d$, is realized by at most $d$ observation times.
\end{remark}

\begin{remark}\label{rem:gram}
Every time law necessarily satisfies
\[
 C(x,y,z)=
 \begin{pmatrix}1&x&y\\x&1&z\\y&z&1\end{pmatrix}\succeq0.
\]
Indeed, this is the real part of the Gram matrix of the three phase functions
$1,e^{i\omega_1t},e^{i\omega_2t}$ in $L^2(\mu)$. Equivalently,
\[
 |x|,|y|,|z|\le1,\quad
 1+2xyz-x^2-y^2-z^2\ge0.
\]
This is only a necessary condition for a time law. Arithmetic relations among the frequencies can impose further moment constraints. The example with parameters $(25,12,5,6)$ in Section~\ref{sec:examples} passes this small Gram test on an interval of the moment line and nevertheless has no UAM law. For primitive graphs, Lemma~\ref{lem:three-ineq} and Proposition~\ref{prop:elementary-chan} show that this necessary condition alone already restricts the parameters to five families.
\end{remark}

\section{An explicit time density for nonintegral spectra}\label{sec:conference}
\begin{lemma}[The nonintegral parameters]\label{lem:conference}
If the restricted eigenvalues are nonintegral, then $n$ is not a square and
\begin{equation}\label{eq:conference}
 k=\frac{n-1}{2},\quad \lambda=\frac{n-5}{4},\quad
 \nu=\frac{n-1}{4},\quad
 r=\frac{-1+\sqrt n}{2},\quad s=\frac{-1-\sqrt n}{2}.
\end{equation}
The multiplicities of $r$ and $s$ are both $\displaystyle\frac{n-1}{2}$.
\end{lemma}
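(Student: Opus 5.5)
The plan is the classical multiplicity argument for strongly regular graphs. Let $f$ and $g$ be the multiplicities of $r$ and $s$. Counting the trace of $I$ and of $A$ gives
\[
 f+g=n-1,\quad k+fr+gs=0 .
\]
The eigenvalues $r,s$ are the roots of $x^2-(\lambda-\nu)x-(k-\nu)=0$, so
\[
 r,s=\frac{(\lambda-\nu)\pm\sqrt{D}}{2},\quad D=(\lambda-\nu)^2+4(k-\nu)=\Delta^2 .
\]
Since $\lambda-\nu$ is an integer, the hypothesis that $r,s$ are nonintegral forces $\sqrt D$ to be irrational. If $\sqrt D$ were rational it would be an integer with the same parity as $\lambda-\nu$, and then $r,s$ would be integers.

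Substituting into the trace equation gives
\[
 2k+(n-1)(\lambda-\nu)+(f-g)\sqrt D=0 .
\]
Here $f-g$ is an integer, $\sqrt D$ is irrational, and everything else is rational. Hence $f=g=\frac{n-1}{2}$ and $2k+(n-1)(\lambda-\nu)=0$, which proves the multiplicity claim. Because $G$ is connected and not complete, $0<k<n-1$. Together with $2k=(n-1)(\nu-\lambda)$ this gives $0<\nu-\lambda<2$, so $\nu-\lambda=1$ and $k=\frac{n-1}{2}$.

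To obtain $\nu$, I use the standard counting identity $k(k-\lambda-1)=(n-k-1)\nu$. With $k=n-k-1=\frac{n-1}{2}$ it reduces to $k-\lambda-1=\nu$. Combined with $\lambda=\nu-1$ this gives $2\nu=k$, so
\[
 \nu=\frac{n-1}{4},\quad \lambda=\frac{n-5}{4} .
\]
Then
\[
 D=1+4\left(\frac{n-1}{2}-\frac{n-1}{4}\right)=n ,
\]
and therefore $r,s=\frac{-1\pm\sqrt n}{2}$. The irrationality of $\sqrt D=\sqrt n$ established above shows that $n$ is not a square.

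I expect the main obstacle to be the step that passes from "nonintegral eigenvalues" to "$\sqrt D$ irrational". The parity remark must be stated carefully so that the rational-versus-irrational separation in the trace equation is valid. The counting identity for $\nu$ has not been stated in the paper so far. I will either quote it as standard or derive it from \eqref{eq:srg} by comparing the row sums of $A^2$: the row sum of $A^2$ is $k^2$, and \eqref{eq:srg} gives $(\lambda-\nu)k+(k-\nu)+\nu n$.
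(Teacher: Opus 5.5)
Your proposal is correct and follows the paper's proof: the irrational part of $k+fr+gs=0$ forces $f=g=\frac{n-1}{2}$, integrality of $r+s=\lambda-\nu$ together with $0<k<n-1$ forces $k=\frac{n-1}{2}$, and a second counting identity then gives $\nu=\frac{k}{2}$. The only difference is that you obtain $\nu$ from the row-sum identity $k(k-\lambda-1)=(n-k-1)\nu$ instead of the paper's $\operatorname{tr}(A^2)=nk$, and you argue via the parity of the discriminant rather than the rational root theorem; both are inessential variations.
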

\begin{proof}
The eigenvalues are roots of the monic integer quadratic supplied by \eqref{eq:srg}. A rational root is an integer, so nonintegral roots are irrational conjugates. If their multiplicities are $f$ and $g$, the irrational part of $k+fr+gs=0$ gives $\displaystyle f=g=\frac{n-1}{2}$. Thus
\[
 r+s=-\frac{2k}{n-1}.
\]
The left side is an integer and $0<k<n-1$, so it equals $-1$ and $\displaystyle k=\frac{n-1}{2}$. The identity $\operatorname{tr}(A^2)=nk$, together with $r+s=-1$ and $rs=\nu-k$, gives $\displaystyle\nu=\frac{k}{2}$. Then $\lambda=\nu-1$ and $(r-s)^2=n$, proving the formulas. Irrationality implies that $n$ is not a square.
\end{proof}

These are the usual conference parameters, and background on the spectra of strongly regular graphs is given in \cite{Cameron,BH}. The proof above records all facts needed here.

For \eqref{eq:conference}, the moment line reduces to
\begin{equation}\label{eq:conference-line}
 x=y,\quad 2x+\frac{n-1}{2}z=\frac{3-n}{2}.
\end{equation}
In particular, the target
\begin{equation}\label{eq:conference-target}
 x=0,\quad y=0,\quad z=z_*:=-\frac{n-3}{n-1}
\end{equation}
is sufficient.

Let
\begin{align}
 \gamma&=\sqrt n,&
 \omega_1&=\frac{n-\sqrt n}{2},&
 \omega_2&=\frac{n+\sqrt n}{2},&
 \omega_3&=\gamma,\notag\\
 N&=n-3,&
 \varepsilon&=\frac{1}{n-1},&
 \alpha&=1-\varepsilon=\frac{n-2}{n-1}.\label{eq:constants}
\end{align}
The Fej\'er polynomial is
\begin{equation}\label{eq:fejer}
 F_N(u)=\frac{1}{N+1}\left|\sum_{\ell=0}^{N}e^{i\ell u}\right|^2
 =1+2\sum_{\ell=1}^{N}\left(1-\frac{\ell}{N+1}\right)\cos(\ell u)\ge0.
\end{equation}
Define the positive bounded function
\begin{align}
 W(t)&=\varepsilon+\alpha F_N(\gamma t-\pi)
      =1+2\sum_{\ell=1}^{N}b_\ell\cos(\ell\gamma t),\label{eq:W}\\
 b_\ell&=\alpha\left(1-\frac{\ell}{N+1}\right)(-1)^\ell,\quad
 W(t)\ge\varepsilon.\notag
\end{align}
The function $W$ is not by itself a probability density on the whole half-line.
It will be restricted to a finite interval and corrected.

Since $\displaystyle\frac{\omega_1}{\gamma}=\frac{\sqrt n-1}{2}$ and $\displaystyle\frac{\omega_2}{\gamma}=\frac{\sqrt n+1}{2}$ are irrational, neither is an integer. Elementary averaging of the finitely many cosine terms gives
\begin{equation}\label{eq:means}
 \lim_{T\to\infty}\frac{1}{T}\int_0^T
 W(t)\begin{pmatrix}1\\\cos(\omega_1t)\\\cos(\omega_2t)\\\cos(\gamma t)\end{pmatrix}dt
 =
 \mathbf d:=\begin{pmatrix}1\\0\\0\\z_*\end{pmatrix}.
\end{equation}
The last coordinate is $\displaystyle b_1=-\frac{\alpha N}{N+1}=z_*$. Thus $W$ supplies the desired limiting moments with a positive margin. The next construction corrects the finite-interval errors exactly.

Set $\omega_0=0$, and let $\mathcal F$ be the finite set of the \emph{nonzero} numbers among
\[
 \left\{|\omega_i+\omega_j|,\ |\omega_i-\omega_j|\,\middle|\,0\le i,j\le3\right\}
 \ \cup\
 \left\{|\omega_i+\ell\gamma|,\ |\omega_i-\ell\gamma|\,\middle|\,
        0\le i\le3,\ 1\le\ell\le N\right\}.
\]
Define
\begin{equation}\label{eq:rho}
 \rho=\min\mathcal F>0,\quad
 B=1+2\sum_{\ell=1}^{N}|b_\ell|
   =1+\alpha N=\frac{n^2-4n+5}{n-1}.
\end{equation}
Zero frequencies are excluded from $\mathcal F$ because their averages are exactly one, not errors.

\begin{theorem}[A time density for nonintegral spectra]\label{thm:conference}
For every connected strongly regular graph that is not complete and has nonintegral restricted eigenvalues, choose
\begin{equation}\label{eq:Tbound}
 T\ge\frac{32(n-1)B}{\rho}
   =\frac{32(n^2-4n+5)}{\rho}.
\end{equation}
Let
\[
 f(t)=\begin{pmatrix}1\\\cos(\omega_1t)\\\cos(\omega_2t)\\\cos(\omega_3t)\end{pmatrix},
 \quad
 G_T=\frac{1}{T}\int_0^T f(t)f(t)^{\mathsf T}\,dt,\quad
 a_T=\frac{1}{T}\int_0^T W(t)f(t)\,dt
\]
and put
\begin{equation}\label{eq:correction}
 q_T=G_T^{-1}(\mathbf d-a_T).
\end{equation}
Then
\begin{equation}\label{eq:timedensity}
 \boxed{\quad
 d\mu_T(t)=\frac{\mathbf1_{[0,T]}(t)}{T}
          \left(W(t)+q_T^{\mathsf T}f(t)\right)\,dt
 \quad}
\end{equation}
is a probability law with \eqref{eq:conference-target}, and hence is a UAM time law.
Its density is bounded and is at least $\displaystyle\frac{1}{2T(n-1)}$ on $[0,T]$.
\end{theorem}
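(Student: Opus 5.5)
By construction, the moments of $\mu_T$ against $f$ are exactly $\mathbf d$. Indeed,
\[
 \frac1T\int_0^T \bigl(W+q_T^{\mathsf T}f\bigr)f\,dt=a_T+G_Tq_T=\mathbf d .
\]
The first coordinate gives total mass one, and the other three give $x=y=0$ and $z=z_*$. That is the target \eqref{eq:conference-target}, which lies on the line \eqref{eq:conference-line}. By Theorem~\ref{thm:line} it is therefore a UAM law. So the proof reduces to three quantitative facts: $G_T$ is invertible, $\norm{q_T}$ is small, and hence the density $W+q_T^{\mathsf T}f$ stays at least $\varepsilon/2=\frac{1}{2(n-1)}$ on $[0,T]$. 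Boundedness is then automatic, since $W\le B$ and $\norm{f}\le 2$.

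\emph{Step 1: error estimates.} Each entry of $G_T$ and of $a_T$ is $\frac1T\int_0^T$ of a finite combination of cosines. I would use the product formula $\cos\alpha\cos\beta=\frac12(\cos(\alpha+\beta)t+\cos(\alpha-\beta)t)$. The frequencies that occur are exactly the elements of $\mathcal F$, together with zero frequencies, and the zero frequencies contribute precisely their limiting values. For a nonzero frequency $\phi\ge\rho$ one has $\bigl|\frac1T\int_0^T\cos(\phi t)\,dt\bigr|\le\frac{1}{\phi T}\le\frac1{\rho T}$. It follows that $G_T=G_\infty+E$, where the limit is $G_\infty=\diag(1,\frac12,\frac12,\frac12)$; this uses that $\omega_1,\omega_2,\omega_3$ are distinct and nonzero, and that $2\omega_i\ne0$. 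Each entry of $E$ is bounded by $1/(\rho T)$. Likewise $a_T=\mathbf d+e$, and each coordinate of $e$ is bounded by $B/(\rho T)$, because $W$ has cosine coefficients of total absolute size $B$. In both estimates one must check that the zero frequencies arising are exactly those counted in \eqref{eq:means}. This uses the irrationality of $\omega_{1,2}/\gamma$, so that $\omega_i\pm\ell\gamma\ne0$, and also $\omega_3=\gamma$, so that $\omega_3-\gamma=0$ correctly produces $b_1$.

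\emph{Step 2: the perturbation bound.} Next I would bound $\norm{G_T^{-1}}$ with a crude operator-norm estimate. The choice \eqref{eq:Tbound} makes $\rho T\ge 32(n-1)B$, so $\norm{E}_{\mathrm{op}}\le 4/(\rho T)\le 1/(8(n-1)B)$, which is tiny compared with $\lambda_{\min}(G_\infty)=\frac12$. A Neumann series then gives that $G_T$ is invertible with $\norm{G_T^{-1}}\le 2\cdot\frac{1}{1-2\norm{E}}\le 3$, say. Consequently
\[
 \norm{q_T}\le 3\norm{e}_2\le \frac{6B}{\rho T}\le\frac{6}{32(n-1)} .
\]
Since $|q_T^{\mathsf T}f(t)|\le 2\norm{q_T}$, we get $|q_T^{\mathsf T}f|\le \frac{12}{32(n-1)}<\frac{1}{2(n-1)}$. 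Combining this with $W\ge\varepsilon=\frac1{n-1}$ gives density at least $\frac{1}{2(n-1)}$ against $dt/T$, which is the stated bound $\frac{1}{2T(n-1)}$.

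\emph{The main obstacle.} The delicate part is the bookkeeping in Step 1. I must verify that every nonzero frequency appearing in $G_T$ and $a_T$ really lies in $\mathcal F$, so that it is at least $\rho$, and that no unexpected frequency vanishes. I must also track the constants: factors of $\frac12$ from the product formula, the dimension factor $4$ in converting entrywise bounds to norms, and the bound $\norm{f}\le2$. These have to combine within the generous constant $32$. If the crude constants turn out to be too tight, I would first sharpen them by using $|f_j|\le1$ coordinatewise, together with $\ell^\infty$ rather than $\ell^2$ bounds.
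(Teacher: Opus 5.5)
Your proposal is correct and is essentially the paper's proof. Both use exact moment matching through $a_T+G_Tq_T=\mathbf d$, entrywise errors of size at most $1/(\rho T)$ from the cosine product formula (zero frequencies contributing exactly $G_\infty$ and $\mathbf d$), and a Neumann-series bound on $G_T^{-1}$ around $G_\infty=\diag(1,\frac12,\frac12,\frac12)$. The only difference is the choice of norms: you work in $\ell^2$ and operator norm to get $|q_T^{\mathsf T}f|\le\frac{3}{8(n-1)}$, whereas the paper uses the maximum-row-sum norm with $\norm{G_T^{-1}}_\infty\le4$ and $\norm{q_T}_1\le\frac{\varepsilon}{2}$; both fit comfortably within the constant $32$.
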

\begin{proof}
The three frequencies are distinct and positive. Indeed $\omega_1=\omega_2$ is impossible, $\omega_2=\gamma$ would force $n=1$, and $\omega_1=\gamma$ would force $n=9$, which is excluded because $n$ is not a square. The functions comprising $f$ are linearly independent on every interval, so their Gram matrix $G_T$ is positive definite.

Write $\displaystyle G_\infty=\diag\left(1,\frac{1}{2},\frac{1}{2},\frac{1}{2}\right)$. For every nonzero frequency $\xi\in\mathcal F$,
\[
 \left|\frac{1}{T}\int_0^T\cos(\xi t)\,dt\right|\le\frac{1}{T\rho}.
\]
The cosine product formula therefore gives
\[
 \|G_T-G_\infty\|_\infty\le\frac{4}{T\rho},\quad
 \|a_T-\mathbf d\|_\infty\le\frac{B}{T\rho}.
\]
Here the matrix infinity norm is the maximum absolute row sum. The second inequality counts the absolute Fourier coefficients of $W$, and the zero-frequency contributions are precisely $\mathbf d$ by \eqref{eq:means}.

Since $\|G_\infty^{-1}\|_\infty=2$, the Neumann-series bound gives
\[
 T\rho\ge16\quad\Longrightarrow\quad \|G_T^{-1}\|_\infty\le4.
\]
The stronger assumption \eqref{eq:Tbound} consequently implies
\[
 \|q_T\|_1\le4\|q_T\|_\infty
 \le\frac{16B}{T\rho}\le\frac{1}{2(n-1)}=\frac{\varepsilon}{2}.
\]
As every coordinate of $f$ has absolute value at most one,
\[
 W(t)+q_T^{\mathsf T}f(t)\ge\varepsilon-\|q_T\|_1\ge\frac{\varepsilon}{2}.
\]
Finally,
\[
 \int f(t)\,d\mu_T(t)=a_T+G_Tq_T=\mathbf d.
\]
The first coordinate proves normalization, and the remaining coordinates are exactly the required moments. Theorem~\ref{thm:line} proves UAM.
\end{proof}

The number $\rho$ is bounded below uniformly. Each element of $\mathcal F$ equals $\gamma|u|$, where either $|u|\ge1$ or $\displaystyle u=\pm\frac{\gamma-o}{2}$ with an odd integer $o$. In the second case $|\gamma-o|\ge1$, or else $o>0$ and $\displaystyle|\gamma-o|=\frac{|n-o^2|}{\gamma+o}\ge\frac{4}{2\gamma+1}$, because $n-o^2$ is a nonzero multiple of $4$ as $n\equiv1\pmod4$ is not a square. Hence $\displaystyle\rho\ge\frac{2\gamma}{2\gamma+1}>\frac{4}{5}$, and $T=40(n^2-4n+5)$ always satisfies \eqref{eq:Tbound}. Remark~\ref{rem:conference-explicit} below gives an explicit law with two atoms.

\subsection{All entries are elementary finite expressions}
No torus sampling or limiting measure is needed to evaluate \eqref{eq:timedensity}.
For
\[
 S_T(u)=
 \begin{cases}\dfrac{\sin(uT)}{uT},&u\ne0,\\[1ex] 1,&u=0,\end{cases}
\]
the matrix and vector in the theorem are
\begin{align}
 (G_T)_{ij}
 &=\frac{S_T(\omega_i-\omega_j)+S_T(\omega_i+\omega_j)}{2},
 &&0\le i,j\le3,\label{eq:Gramformula}\\
 (a_T)_i
 &=S_T(\omega_i)+\sum_{\ell=1}^{N}b_\ell
   \left(S_T(\omega_i-\ell\gamma)+S_T(\omega_i+\ell\gamma)\right),
 &&0\le i\le3.\label{eq:aformula}
\end{align}
Thus the construction consists of a finite minimum of explicit frequencies,
elementary sine evaluations, and one $4$ by $4$ linear solve.
The bound on $T$ is sufficient, and no optimality is asserted.

\begin{lemma}[No single observation time]\label{lem:no-single-time}
For the parameters \eqref{eq:conference} with $n$ not a square, no point mass $\delta_t$ is a UAM time law.
\end{lemma}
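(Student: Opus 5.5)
The plan is to work directly with the single-time condition $R(t)=0$ from \eqref{eq:R}. By Theorem~\ref{thm:line} and the reduced line \eqref{eq:conference-line}, $\delta_t$ is a UAM law exactly when $x=\cos(\omega_1t)$, $y=\cos(\omega_2t)$ and $z=\cos(\gamma t)$ satisfy
\[
x=y,\quad 2x+\tfrac{n-1}{2}z=\tfrac{3-n}{2}.
\]
Since $\omega_1+\omega_2=n$ and $\omega_2-\omega_1=\gamma$, the first equation is equivalent to
\[
\sin\!\left(\tfrac{nt}{2}\right)\sin\!\left(\tfrac{\gamma t}{2}\right)=0 .
\]
I would split into the two cases this produces.

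\emph{Case $\sin(\gamma t/2)=0$.} Then $z=1$, and the second equation forces $x=1-\tfrac n2$. Lemma~\ref{lem:conference} gives $n\equiv1\pmod 4$. Since $n$ is not a square and $G$ is not complete, $n\ge5$. Hence $x<-1$, which is impossible for a cosine.

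\emph{Case $\sin(nt/2)=0$.} Here $t=2\pi m/n$ with an integer $m\ge0$. Then $\omega_1t=\pi m-\gamma t/2$, so $x=\sigma c$ with $\sigma=(-1)^m$ and $c=\cos(\gamma t/2)$. Also $z=2c^2-1$. Substituting into the second equation gives the quadratic
\[
(n-1)c^2+2\sigma c-1=0 .
\]
Its roots are $c=\sigma/(\gamma+1)$ and $c=-\sigma/(\gamma-1)$. The value $m=0$ (that is, $t=0$) is excluded, since $c=1$ is not a root: the left side equals $n+2\sigma-2>0$. So the remaining task is to show that
\[
\cos\!\left(\frac{\pi m}{\sqrt n}\right)\in\left\{\pm\frac{1}{\sqrt n+1},\ \pm\frac{1}{\sqrt n-1}\right\}
\]
has no solution with $m\ge1$.

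This arithmetic step is the main obstacle. I would handle it by transcendence. Put $w=e^{i\pi m/\sqrt n}=(-1)^{\beta}$ with $\beta=m/\sqrt n$. Because $n$ is not a square, $\beta$ is an irrational algebraic number. The Gelfond--Schneider theorem then says $w$ is transcendental. On the other hand, if $\cos(\pi m/\sqrt n)=(w+w^{-1})/2$ equalled one of the algebraic numbers $\pm1/(\sqrt n\pm1)$, then $w$ would be a root of a quadratic with algebraic coefficients, hence algebraic. This contradiction rules out the second case.

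Both cases are then impossible, so no point mass is a UAM time law. If an elementary argument avoiding Gelfond--Schneider is preferred, I would first try Niven-type reasoning on $2\cos$ as an algebraic integer. However, $\pi m/\sqrt n$ is not a rational multiple of $\pi$, so I expect transcendence to be the cleanest route.
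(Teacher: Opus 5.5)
Your proposal is correct and matches the paper's proof essentially step for step. It uses the same split into $\sqrt n\,t\in2\pi\Z$, where $z=1$ forces $x=1-\frac n2<-1$, and $nt\in2\pi\Z$, which gives the quadratic $(n-1)\cos^2\varphi+2(-1)^m\cos\varphi-1=0$, and it closes with Gelfond--Schneider applied to $e^{i\pi m/\sqrt n}$; your explicit roots and separate treatment of $m=0$ are harmless extras the paper does not need.
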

\begin{proof}
Suppose $\delta_t$ is a UAM time law. By \eqref{eq:conference-line}, $\cos(\omega_1t)=\cos(\omega_2t)$. As $\omega_2-\omega_1=\sqrt n$ and $\omega_1+\omega_2=n$, either $\sqrt n\,t\in2\pi\Z$ or $nt\in2\pi\Z$. In the first case $z=1$, and \eqref{eq:conference-line} gives $\displaystyle x=1-\frac{n}{2}<-1$, which is impossible. In the second case $nt=2\pi m$ with an integer $m\ge1$. Put $\displaystyle\varphi=\frac{\pi m}{\sqrt n}$. Then $\sqrt n\,t=2\varphi$ and $\omega_1t=\pi m-\varphi$, so $z=2\cos^2\varphi-1$ and $x=(-1)^m\cos\varphi$. Substituting into \eqref{eq:conference-line} gives
\[
 (n-1)\cos^2\varphi+2(-1)^m\cos\varphi-1=0.
\]
Hence $\cos\varphi$ is algebraic, and so is $e^{i\varphi}$, a root of $X^2-2\cos\varphi\,X+1$. On the other hand $\displaystyle\frac{m}{\sqrt n}$ is algebraic and irrational, and $e^{i\pi}=-1$ is algebraic. The Gelfond--Schneider theorem, in the form \cite[Theorem~3.3]{GMR}, therefore shows that $\displaystyle e^{i\varphi}=\exp\left(\frac{m}{\sqrt n}\,i\pi\right)$ is transcendental. This contradiction proves the lemma.
\end{proof}

\begin{corollary}[Exactly two times for nonintegral SRGs]\label{cor:conference-twoatoms}
Every connected strongly regular graph that is not complete and has nonintegral restricted eigenvalues
admits a UAM law with two distinct observation times and two strictly
positive weights. For every $T$ satisfying \eqref{eq:Tbound}, both times
can be chosen in $[0,T]$. One observation time never suffices.
\end{corollary}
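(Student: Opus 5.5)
The plan combines three earlier results. Theorem~\ref{thm:conference} gives a UAM law, and Theorem~\ref{thm:twoatoms} compresses it to two atoms inside the same interval. Lemma~\ref{lem:no-single-time} then rules out one time, which forces the two atoms to be genuine.

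\textbf{Existence in $[0,T]$.} Fix any $T$ satisfying \eqref{eq:Tbound}. By Theorem~\ref{thm:conference}, the law $\mu_T$ in \eqref{eq:timedensity} is a UAM time law concentrated on the compact interval $[0,T]$. Applying Theorem~\ref{thm:twoatoms} to $\mu_T$ yields $t_1,t_2\in[0,T]$ and $p\in[0,1]$ with
\[
 pM(t_1)+(1-p)M(t_2)=\int M(t)\,d\mu_T(t)=\frac{1}{n}J .
\]
So $p\delta_{t_1}+(1-p)\delta_{t_2}$ is a UAM law with both times in $[0,T]$. Since such $T$ exist (for instance $T=40(n^2-4n+5)$), this also proves existence without reference to a particular $T$.

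\textbf{Nondegeneracy.} Suppose $t_1=t_2$, or $p\in\{0,1\}$. Then the law above collapses to a single point mass $\delta_t$ that is a UAM time law. This contradicts Lemma~\ref{lem:no-single-time}, since Lemma~\ref{lem:conference} places the graph in the conference parameters \eqref{eq:conference} with $n$ not a square. Hence $t_1\ne t_2$ and $0<p<1$, so both weights are strictly positive. The same lemma gives directly that one observation time never suffices.

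I expect no real obstacle. The only point needing care is that Theorem~\ref{thm:twoatoms} keeps the atoms inside the support interval $[0,T]$, and that is part of its statement. The substantive inputs, the density construction and the transcendence argument, are already established in Theorem~\ref{thm:conference} and Lemma~\ref{lem:no-single-time}.
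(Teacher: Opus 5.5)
Your proposal is correct and matches the paper's proof: it applies Theorem~\ref{thm:twoatoms} to the density $\mu_T$ of Theorem~\ref{thm:conference} to get at most two times in $[0,T]$. It then uses Lemma~\ref{lem:conference} and Lemma~\ref{lem:no-single-time} to exclude a point mass, which forces $t_1\ne t_2$ and $0<p<1$.
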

\begin{proof}
Apply Theorem~\ref{thm:twoatoms} to the density $\mu_T$ of
Theorem~\ref{thm:conference}. It gives at most two times in $[0,T]$.
By Lemma~\ref{lem:conference} the graph has conference parameters with $n$ not a square, and by Lemma~\ref{lem:no-single-time} no point mass is a UAM law. The two times must therefore be distinct,
with $0<p<1$.
\end{proof}

\begin{remark}[An explicit two-atom law]\label{rem:conference-explicit}
For the parameters \eqref{eq:conference} with $n$ not a square and an integer $m\ge1$, put $\displaystyle t_m=\frac{2\pi m}{n}$ and $\displaystyle X_m=(-1)^m\cos\frac{\pi m}{\sqrt n}$. Here $\displaystyle P=Q=\frac{n-1}{4}$, and the computation in the proof of Lemma~\ref{lem:no-single-time} gives
\[
 R(t_m)=g_m\,(1,1),\qquad g_m=\frac{(n-1)X_m^2+2X_m-1}{2}.
\]
The quadratic $(n-1)X^2+2X-1$ is negative exactly for $\displaystyle-\frac{1}{\sqrt n-1}<X<\frac{1}{\sqrt n+1}$. Since $\displaystyle X_m=\cos\left(\pi m\left(1+\frac{1}{\sqrt n}\right)\right)$ and $\displaystyle1+\frac{1}{\sqrt n}$ is irrational, the numbers $X_m$ are dense in $[-1,1]$, so $g_m<0$ for some $m$. As $\displaystyle R(0)=\frac{n}{2}(1,1)$, the law
\[
 \mu=p\,\delta_0+(1-p)\,\delta_{t_m},\qquad p=\frac{-2g_m}{n-2g_m},
\]
is then a UAM time law with two atoms.
\end{remark}

\begin{remark}[Matrix preservation versus moment preservation]
The preceding corollary preserves UAM, but need not retain the
particular triple $(0,0,z_*)$ in \eqref{eq:conference-target}.
If that triple is also required, Lemma~\ref{lem:observable-compression}
gives at most three times in the same interval $[0,T]$.
The density \eqref{eq:timedensity} remains an explicit construction.
The two-time reduction is existential and supplies an exact support
bound, without an explicit formula for the two times.
\end{remark}

\section{An exact finite criterion for integral spectra}\label{sec:integral}
Assume that $r,s$ are integers. Let
\begin{equation}\label{eq:indices}
 \delta=\gcd(k-r,r-s),\quad
 a_0=\frac{k-r}{\delta},\quad
 b_0=\frac{r-s}{\delta},\quad D=a_0+b_0=\frac{k-s}{\delta}.
\end{equation}
Both $a_0,b_0$ are positive integers. The subscript distinguishes these indices from the entry parameters $a,b$ used in Section~\ref{sec:reduction}.

For real numbers $c_0,\ldots,c_D$, use the real symmetric Toeplitz matrix
\[
 \mathcal T_D(c)=(c_{|i-j|})_{i,j=0}^{D}.
\]
\begin{theorem}[Exact integral-spectrum criterion]\label{thm:toeplitz}
A connected strongly regular graph that is not complete and has integral spectrum admits UAM if and only if there exist
$c_0,\ldots,c_D\in\R$ and $\eta\in[0,2]$ such that
\begin{equation}\label{eq:SDP}
 \boxed{\begin{gathered}
 c_0=1,\quad \mathcal T_D(c)\succeq0,\quad 0\le\eta\le2,\\
 c_{a_0}=\kappa+P\eta,\quad
 c_{a_0+b_0}=\kappa+Q\eta,\quad
 c_{b_0}=1-\eta .
 \end{gathered}}
\end{equation}
Repeated indices denote the same variable and must satisfy all displayed equations.
The constants $\kappa,P,Q$ are those of \eqref{eq:PQ}.
\end{theorem}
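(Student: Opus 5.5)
The plan is to reduce the statement to the trigonometric moment problem on the circle, via Theorem~\ref{thm:line}. With integral $r,s$, all three frequencies are integer multiples of $\delta$:
\[
 \omega_1=a_0\delta,\quad \omega_3=b_0\delta,\quad \omega_2=D\delta .
\]
For a time law $\mu$ on $[0,\infty)$, push it forward under $t\mapsto \theta=\delta t \bmod 2\pi$ to a probability measure $\sigma$ on the circle. Then the integrals $\int\cos(j\delta t)\,d\mu$ are exactly the real parts of the Fourier coefficients $\hat\sigma(j)=\int e^{ij\theta}d\sigma$. Replacing $\sigma$ by its symmetrization $\frac12(\sigma+\bar\sigma)$ under $\theta\mapsto-\theta$ makes these coefficients real, and does not change the cosine moments. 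So $x=c_{a_0}$, $y=c_D$, $z=c_{b_0}$ with $c_j=\int\cos(j\theta)\,d\sigma_{\rm sym}$.

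\emph{Necessity.} Given a UAM law, define $c_j$ as above for $0\le j\le D$. Then $c_0=1$, and the Toeplitz matrix $\mathcal T_D(c)$ is the Gram matrix of $1,e^{i\theta},\dots,e^{iD\theta}$ in $L^2(\sigma_{\rm sym})$. It is real because the coefficients are real, and it is positive semidefinite because $v^*\mathcal T_D v=\int|\sum v_j e^{ij\theta}|^2d\sigma\ge0$. Theorem~\ref{thm:line} supplies $\eta\in[0,2]$ with the three displayed equations. When indices coincide (for instance $a_0=b_0$, forcing $a_0=b_0=1$, $D=2$), the same variable satisfies both equations automatically, because coinciding frequencies have coinciding moments.

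\emph{Sufficiency.} This is the main step: a finite positive semidefinite Toeplitz section with $c_0=1$ must be the moment sequence of an actual probability measure. I would invoke the Carath\'eodory--Toeplitz theorem (the truncated trigonometric moment problem). Every Hermitian positive semidefinite Toeplitz matrix $(c_{i-j})_{i,j=0}^D$ admits a representation $c_j=\sum_{m}p_m e^{ij\theta_m}$ with at most $D+1$ atoms, $p_m\ge0$ and $\sum p_m=c_0=1$. In the singular case this is the Carath\'eodory--Fej\'er/Pisarenko decomposition. In the definite case one can extend by one step to a singular extension, since the one-step extension parameter ranges over a disc and a boundary choice makes the next determinant vanish; then apply the singular case. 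Since $c$ is real, we may symmetrize the atomic measure, keeping it a probability measure.

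Then set $t_m=\theta_m/\delta\in[0,2\pi/\delta)$ and $\mu=\sum p_m\delta_{t_m}$. Its cosine moments at $\omega_1,\omega_2,\omega_3$ are $c_{a_0},c_D,c_{b_0}$, which lie on the moment line \eqref{eq:line}, so $\mu$ is a UAM law by Theorem~\ref{thm:line}.

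The only delicate point I expect is citing the trigonometric moment theorem correctly in the singular case. I would state it as a lemma with the extension argument sketched, or cite it directly.
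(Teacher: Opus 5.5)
Your proposal is correct and follows the paper's proof essentially step for step. Both arguments push the law forward to the circle via $\phi=\delta t$, symmetrize, read off positive semidefiniteness of $\mathcal T_D(c)$ from a Gram matrix, and for the converse invoke the truncated trigonometric moment theorem, pull back by $t=\phi/\delta$ and conclude with Theorem~\ref{thm:line}. The only cosmetic difference is where the atoms land: the paper folds the circle measure onto $[0,\pi]$ via $\phi\mapsto\arccos(\cos\phi)$, giving support in $[0,\pi/\delta]$, while you place the atoms in $[0,2\pi/\delta)$, which is equally valid.
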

\begin{proof}
A time law yields the circle variable $\phi=\delta t$ modulo $2\pi$, and
$\displaystyle c_j=\int\cos(j\phi)\,d\mu$. For every complex vector $u$,
\[
 u^*\mathcal T_D(c)u
 =\int\left|\sum_{j=0}^{D}u_je^{ij\phi}\right|^2\,d\tilde\mu(\phi)\ge0,
\]
where $\tilde\mu$ is the symmetrization of the pushed-forward circle law. Symmetrization preserves all cosine moments and makes its Fourier moments real. The affine equations follow from Theorem~\ref{thm:line}.

Conversely, the truncated trigonometric moment theorem
\cite[Section~11.4]{Sch} represents a real Toeplitz sequence with $c_0=1$ and
$\mathcal T_D(c)\succeq0$ by a circle probability measure. Symmetrize it if necessary and push it to $[0,\pi]$ by $\phi\mapsto\arccos(\cos\phi)$. All integer cosine moments are preserved. The map $\displaystyle t=\frac{\phi}{\delta}$ then gives a genuine time law on $\displaystyle\left[0,\frac{\pi}{\delta}\right]$ with the prescribed three moments. Apply Theorem~\ref{thm:line}.
\end{proof}

This is a finite semidefinite feasibility problem with affine constraints. It is an exact mathematical criterion, and a numerical approximate feasible matrix is not by itself an exact certificate. In this integral case the phase orbit is periodic, so there is no distinction between a circle measure and a realizable time law, even on the boundary of the moment cone.

\subsection{Equivalent Hankel conditions and recovery of a measure}\label{sec:hankel}
Let $\mathsf T_j$ denote the Chebyshev polynomial with
$\mathsf T_j(\cos\phi)=\cos(j\phi)$. Seek a truncated real moment sequence
$m_0,\ldots,m_D$ with $m_0=1$, and define $L_m(x^j)=m_j$.
Replace the last three equations of \eqref{eq:SDP} by
\begin{equation}\label{eq:chebyshev}
 L_m(\mathsf T_{a_0})=\kappa+P\eta,\quad
 L_m(\mathsf T_{a_0+b_0})=\kappa+Q\eta,\quad
 L_m(\mathsf T_{b_0})=1-\eta.
\end{equation}
The Toeplitz condition is equivalently replaced by the following pair of Hankel conditions.

If $D=2h$, require
\begin{equation}\label{eq:Hankel-even}
 H_h=(m_{i+j})_{i,j=0}^{h}\succeq0,\quad
 K_{h-1}=(m_{i+j}-m_{i+j+2})_{i,j=0}^{h-1}\succeq0.
\end{equation}
If $D=2h+1$, require
\begin{equation}\label{eq:Hankel-odd}
 H_h^+=(m_{i+j}+m_{i+j+1})_{i,j=0}^{h}\succeq0,\quad
 H_h^-=(m_{i+j}-m_{i+j+1})_{i,j=0}^{h}\succeq0.
\end{equation}
These are precisely the truncated $[-1,1]$ moment conditions of
\cite[Section~10.1]{Sch}.

For completeness, necessity follows by integrating $p(x)^2$ and
$(1-x^2)q(x)^2$ in the even case, and $(1+x)p(x)^2$ and
$(1-x)q(x)^2$ in the odd case. Sufficiency follows from the corresponding
representations of every nonnegative polynomial on $[-1,1]$ as sums of these terms, followed by the compact moment representation theorem.

Given a representing measure
\[
 \varrho=\sum_{\ell=1}^{L}p_\ell\delta_{\xi_\ell},\quad
 -1\le\xi_\ell\le1,\quad p_\ell\ge0,\quad \sum_\ell p_\ell=1,
\]
the observation-time law is
\begin{equation}\label{eq:recover}
 \boxed{\quad
 \mu=\sum_{\ell=1}^{L}p_\ell\,\delta_{t_\ell},\qquad t_\ell=\frac{\arccos\xi_\ell}{\delta}.
 \quad}
\end{equation}
Indeed $\cos(j\delta t_\ell)=\mathsf T_j(\xi_\ell)$.
A finite $\varrho$ always exists. The resulting UAM law can subsequently be compressed to at most two atoms in $\displaystyle\left[0,\frac{\pi}{\delta}\right]$ by Theorem~\ref{thm:twoatoms}. This preserves UAM, though not necessarily the entire completed sequence $c_0,\ldots,c_D$.

\begin{remark}[Semidefiniteness is sufficient here]
No strict positive definiteness is required in \eqref{eq:SDP},
\eqref{eq:Hankel-even}, or \eqref{eq:Hankel-odd}. Requiring it would lose valid boundary laws. Also, one Hankel matrix alone is not sufficient for support in $[-1,1]$, and the second, localizing matrix must be retained.
\end{remark}

\section{Examples and Obstructions}\label{sec:examples}
\subsection{A two-atom law for the Petersen graph}
For the parameters $(10,3,0,1)$,
\[
 (k,r,s)=(3,1,-2),\quad
 (\omega_1,\omega_2,\omega_3)=(2,5,3),\quad
 (\kappa,P,Q)=\left(-4,\frac{16}{9},\frac{25}{9}\right).
\]
Choose $\displaystyle\eta=\frac{9}{5}$. The required moments are
\[
 x=-\frac{4}{5},\quad y=1,\quad z=-\frac{4}{5}.
\]
Define
\begin{equation}\label{eq:petersen}
 \boxed{\quad
 \mu=\frac{9\sqrt5-20}{25}\,\delta_0
      +\frac{9(5-\sqrt5)}{25}\,\delta_{t_1},\qquad t_1=\frac{2\pi}{5}.
 \quad}
\end{equation}
Both weights are positive and sum to one. At $t=t_1$,
\[
 \cos(2t)=\cos(3t)=-\frac{1+\sqrt5}{4},\quad \cos(5t)=1.
\]
Writing $\displaystyle p=\frac{9(5-\sqrt5)}{25}$,
\[
 1-p+p\left(-\frac{1+\sqrt5}{4}\right)
 =1-\frac{p(5+\sqrt5)}{4}
 =-\frac{4}{5}.
\]
Thus \eqref{eq:petersen} has exactly the required moments and proves UAM.

One observation time does not suffice. For a point mass the matrix $C(x,y,z)$ of Remark~\ref{rem:gram} is the Gram matrix of three unit vectors in the plane, so $\det C=0$. On the moment line this happens only at $z=1$, which is excluded because $x=\kappa<-1$ there, at $\displaystyle z=-\frac{4}{5}$, where $y=1$, and at $\displaystyle z=-\frac{11}{16}$, where $x=-1$. If $y=\cos(5t)=1$, then $5t\in2\pi\Z$ and $\displaystyle\cos(3t)\in\left\{1,\frac{-1\pm\sqrt5}{4}\right\}$, which never equals $\displaystyle-\frac{4}{5}$. If $x=\cos(2t)=-1$, then $\cos(3t)=0$. Hence the Petersen graph admits a two-atom UAM law but no one-atom law. It is the case $\theta=1$ of Propositions~\ref{prop:family-v} and~\ref{prop:no-single-time}, which treat a whole family.

\subsection{A boundary law for the parameters $(9,4,1,2)$}
Here $(k,r,s)=(4,1,-2)$ and the frequencies are $(3,6,3)$.
The repeated frequency forces $x=z$. The moment line gives
\[
 x=y=-\frac{7}{2}+2\eta,\quad z=1-\eta,
\]
so $\displaystyle\eta=\frac{3}{2}$ and $\displaystyle x=y=z=-\frac{1}{2}$. Consequently
\begin{equation}\label{eq:nine}
 \mu=\delta_{t_2},\qquad t_2=\frac{2\pi}{9},
\end{equation}
is a UAM law, and in fact it gives instantaneous uniform mixing.

The Toeplitz sequence $\displaystyle c_0=1,\ c_1=c_2=-\frac{1}{2}$ has a singular positive semidefinite
$3$ by $3$ Toeplitz matrix. This is a useful test of the boundary case in
Theorem~\ref{thm:toeplitz}. Moreover every UAM law in this example satisfies
\[
 \int (1+2\cos(3t))^2\,d\mu(t)=3+4c_1+2c_2=0.
\]
Its support is therefore contained in the discrete set where $\displaystyle\cos(3t)=-\frac{1}{2}$.
In particular, no absolutely continuous probability law can give UAM for these parameters. The compactly supported density conclusion of Theorem~\ref{thm:conference} is specifically for nonintegral spectra.

\subsection{An elementary obstruction for the parameters $(25,12,5,6)$}
Every graph with these parameters has $(k,r,s)=(12,2,-3)$ and frequencies
$(10,15,5)$. Put $\phi=5t$ and $\displaystyle c_j=\int\cos(j\phi)\,d\mu$.
The moment line, with $P=Q=6$, requires
\[
 c_1=1-\eta,\quad c_2=c_3=-\frac{23}{2}+6\eta.
\]
In particular,
\begin{equation}\label{eq:25relation}
 c_2+6c_1+\frac{11}{2}=0.
\end{equation}
But pointwise,
\[
 \cos(2\phi)+6\cos\phi+\frac{11}{2}
 =2\left(\cos\phi+\frac{3}{2}\right)^2\ge\frac{1}{2}.
\]
Its integral cannot be zero. Therefore no graph with parameters $(25,12,5,6)$ admits UAM.

Equivalently, the first two monomial moments would satisfy
\[
 m_2-m_1^2
 =\frac{1+c_2}{2}-c_1^2
 =-\left(c_1+\frac{3}{2}\right)^2<0,
\]
contradicting a $2$ by $2$ Hankel condition. This is an exact moment obstruction, not a numerical infeasibility report.

Notice that the affine moment line does contain
\[
 (x,y,z)=\left(0,0,-\frac{11}{12}\right).
\]
The matrix $C(x,y,z)$ of Remark~\ref{rem:gram} is positive definite at this point.
It nevertheless cannot be a time-moment vector for frequencies $(10,15,5)$.
Thus checking only the Gram matrix indexed by the three eigenvalues does not solve the SRG problem. This is the case $\theta=2$ of Proposition~\ref{prop:square-conference}, which excludes every square order except $n=9$.

\section{Degenerate and imprimitive cases}\label{sec:edge}
A disconnected graph on more than one component cannot have UAM on all its vertices, because transition probabilities between components vanish for every time.

For the complete graph $K_n$, $n\ge2$, each off-diagonal mixing entry is
\[
 \frac{2(1-\cos(nt))}{n^2}.
\]
Hence UAM requires $\displaystyle\int\cos(nt)\,d\mu=1-\frac{n}{2}$, which is possible exactly for
$n\le4$. For $2\le n\le4$ a point mass at
\[
 t_n=\frac{1}{n}\arccos\left(1-\frac{n}{2}\right)
\]
works. The one-vertex graph is trivial. These complete-graph conclusions also appear in \cite[Theorem~8]{BCM}, and the times $t_n$ are the instantaneous mixing times of $K_2$, $K_3$, $K_4$ found in \cite{ABTW}.

A connected imprimitive strongly regular graph that is not complete is a complete multipartite graph with
equal part sizes. Write it as $K_{m,\ldots,m}$ with $\ell\ge2$ parts and $m\ge2$.
Its parameters include $n=\ell m$, $k=(\ell-1)m$, $r=0$, $s=-m$.
Then $Q=0$, so \eqref{eq:line} requires
\[
 y=1-\frac{n}{2}.
\]
As $y\ge-1$, necessarily $n\le4$. Thus among these graphs only
$K_{2,2}=C_4$ admits UAM, with the point mass at $\displaystyle\frac{\pi}{4}$. For complete bipartite graphs with more than seven vertices this is \cite[Proposition~9]{BCM}, which rests on the trace conditions of \cite[Lemma~7]{BCM}.
For clarity, the complete multipartite description follows because the complement of a connected imprimitive strongly regular graph is a disconnected strongly regular graph, hence a union of equal cliques. Vertices in different components force the nonadjacent common-neighbour parameter to be zero. If a component were not complete, a shortest path between two nonadjacent vertices would contain a pair at distance two, with a common neighbour. This is impossible. Hence the components are cliques, and regularity makes their sizes equal.

\section{Classification}\label{sec:classification}
In this section $G$ is a connected strongly regular graph on $n$ vertices that is not complete. It is \emph{primitive} if its complement is connected as well, equivalently $\nu\ge1$ and $n-2k+\lambda\ge1$. By Section~\ref{sec:edge}, $C_4$ is the only imprimitive graph of this kind with a UAM time law. If $G$ has parameters $(n,k,\lambda,\nu)$ and restricted eigenvalues $r>s$, its complement has parameters $(n,\,n-k-1,\,n-2k+\nu-2,\,n-2k+\lambda)$ and restricted eigenvalues $-1-s>-1-r$.

\subsection{The Gram condition and complex Hadamard matrices}

\begin{lemma}[Gram condition and complex Hadamard matrices]\label{lem:gram-hadamard}
The following two statements are equivalent.
\begin{enumerate}
 \item[(a)] The moment line \eqref{eq:line}, with $\eta$ allowed to be any real number, contains a point $(x,y,z)$ with $C(x,y,z)\succeq0$.
 \item[(b)] There are complex numbers $\zeta_1,\zeta_2$ of modulus one such that $\mathcal H=I+\zeta_1A+\zeta_2(J-I-A)$ satisfies $\mathcal H\mathcal H^*=nI$.
\end{enumerate}
If $G$ admits UAM, then (a) holds by Theorem~\ref{thm:line} and Remark~\ref{rem:gram}.
\end{lemma}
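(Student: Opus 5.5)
The plan is to read (b) as the existence of a unitary matrix in the Bose--Mesner algebra with flat modulus. Then (a) and (b) both become statements about the three eigenphase cosines, linked by the same algebra that proves Theorem~\ref{thm:line}.

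\emph{Step 1: translate (b) into eigenphases.} The matrices $I,A,J-I-A$ span a commutative algebra with idempotent basis $E_0,E_r,E_s$. Suppose $\mathcal H$ is as in (b). Then $V=\mathcal H/\sqrt n$ is unitary and lies in this algebra, so $V=\sum_j e^{i\theta_j}E_j$ for some real phases. Conversely, let $V=\sum_j e^{i\theta_j}E_j$ be such a unitary, and suppose all its entries have modulus $1/\sqrt n$. Then $V$ is constant on the diagonal, on adjacent pairs and on nonadjacent pairs, so dividing by its diagonal entry gives a matrix of the form in (b). Multiplying by $\sqrt n$ then gives $\mathcal H\mathcal H^*=nI$. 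Hence (b) holds if and only if some phases $\theta_0,\theta_r,\theta_s$ make every entry of $\sum_j e^{i\theta_j}E_j$ have modulus $1/\sqrt n$.

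\emph{Step 2: reuse the computation of Section~\ref{sec:reduction}.} We compute $|V_{gh}|^2$ exactly as in \eqref{eq:mixingexpansion} and \eqref{eq:entry}, replacing $(k-r)t$, $(k-s)t$, $(r-s)t$ by the phase differences. This gives \eqref{eq:entry} with
\[
 x=\cos(\theta_0-\theta_r),\quad y=\cos(\theta_0-\theta_s),\quad z=\cos(\theta_r-\theta_s).
\]
The proof of Theorem~\ref{thm:line} is purely algebraic in $x,y,z$, and the diagonal is handled by row sums. So flat modulus is equivalent to this triple lying on the line \eqref{eq:line}, with $\eta=1-z$. Therefore (b) holds if and only if the line contains a point of the form $(\cos\alpha,\cos\beta,\cos(\beta-\alpha))$. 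Points of this form are exactly those where $C(x,y,z)$ is the Gram matrix of three unit vectors in $\R^2$, namely $1,e^{i\alpha},e^{i\beta}$. Equivalently, they are the points where $C\succeq0$ with rank at most $2$.

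\emph{Step 3: the equivalence.} The direction (b)$\Rightarrow$(a) is now immediate, since a Gram matrix is positive semidefinite. For (a)$\Rightarrow$(b), consider the set of real $\eta$ for which $C\succeq0$ on the line. This set is closed and convex, since it is the preimage of the PSD cone under an affine map. It is also bounded, because $|z|=|1-\eta|\le1$ forces $\eta\in[0,2]$. Hence it is a nonempty compact interval. At one of its endpoints the smallest eigenvalue of $C$ vanishes, so $\operatorname{rank}C\le2$ there. A PSD matrix of rank at most $2$ with unit diagonal is the Gram matrix of unit vectors in $\R^2$, so this endpoint has the form required in Step 2, and (b) follows. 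If the interval is a single point, that point is itself an endpoint. The final sentence of the lemma is just Theorem~\ref{thm:line} combined with Remark~\ref{rem:gram}.

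The main obstacle is Step 2: checking that the entry formula and the algebra of Theorem~\ref{thm:line} carry over verbatim with arbitrary phases. In particular one must check that the sign conventions in the cosines match, which they do because $z$ is symmetric under reflection of the phases. The other point needing care is Step 1, where the Hadamard normalization by the diagonal entry must be shown to be harmless.
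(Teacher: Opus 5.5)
Your proposal is correct and follows essentially the paper's proof: the same quadratic-form identity linking $|V_{gh}|^2$ to the entry formula, the same extremal-$\eta$ argument giving a singular, hence rank-at-most-two, Gram matrix realized by phases in $\R^2$, and the same normalization by the constant diagonal entry of modulus $1/\sqrt n$. The only step you leave implicit is why $C$ is singular at an endpoint of the $\eta$-interval; the paper supplies it by noting that a positive definite $C$ would remain positive definite for slightly larger $\eta$.
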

\begin{proof}
Index the rows and columns of $C(x,y,z)$ by $0,r,s$ in this order, so that $C_{0r}=x$, $C_{0s}=y$ and $C_{rs}=z$. For distinct vertices $g,h$ put $e_j=(E_j)_{gh}$ for $j\in\{0,r,s\}$. These numbers are real, and the left side of
\begin{equation}\label{eq:quadratic-entry}
 \sum_{j,l\in\{0,r,s\}}e_je_l\,C_{jl}(x,y,z)=\frac{1}{n}
\end{equation}
is the expression \eqref{eq:entry}. The proof of Theorem~\ref{thm:line} shows that \eqref{eq:quadratic-entry} holds for adjacent and for nonadjacent pairs if and only if $(x,y,z)$ lies on the moment line, as an identity in $x,y,z$.

Assume (a). The set of real $\eta$ with $C(\kappa+P\eta,\kappa+Q\eta,1-\eta)\succeq0$ is closed, convex, nonempty, and bounded because $|1-\eta|\le1$ on it. Let $\eta^*$ be its largest element and $C^*$ the corresponding matrix. If $C^*$ were positive definite, it would remain positive definite for slightly larger $\eta$. Hence $C^*$ is singular, so it has rank at most two and is the Gram matrix of three unit vectors in $\R^2$. There are therefore real numbers $\psi_0,\psi_r,\psi_s$ with $C^*_{jl}=\cos(\psi_j-\psi_l)$. Put $V=e^{i\psi_0}E_0+e^{i\psi_r}E_r+e^{i\psi_s}E_s$. Then
\[
 |V_{gh}|^2=\sum_{j,l}e_je_l\cos(\psi_j-\psi_l)=\sum_{j,l}e_je_lC^*_{jl}=\frac{1}{n}
\]
for all distinct $g,h$. Since $VV^*=E_0+E_r+E_s=I$, every row of the matrix $(|V_{gh}|^2)$ sums to one, so its diagonal entries equal $\displaystyle\frac{1}{n}$ as well. The matrix $V$ lies in the span of $I,A,J$, so its diagonal entries are all equal to one number $\zeta_0$ with $\displaystyle|\zeta_0|=\frac{1}{\sqrt n}$. Then $\mathcal H=\zeta_0^{-1}V$ has the form in (b), and $\mathcal H\mathcal H^*=|\zeta_0|^{-2}VV^*=nI$.

Conversely assume (b). The matrix $\mathcal H$ lies in the span of $E_0,E_r,E_s$, say $\mathcal H=\sum_jw_jE_j$, and $\mathcal H\mathcal H^*=\sum_j|w_j|^2E_j=nI$ gives $|w_j|=\sqrt n$. Put $\displaystyle u_j=\frac{w_j}{\sqrt n}$ and $C^*_{jl}=\operatorname{Re}(u_j\overline{u_l})$. Then $C^*$ is positive semidefinite with unit diagonal, and for distinct $g,h$
\[
 \sum_{j,l}e_je_lC^*_{jl}=\left|\sum_ju_je_j\right|^2=\frac{|\mathcal H_{gh}|^2}{n}=\frac{1}{n}.
\]
Hence the point $(C^*_{0r},C^*_{0s},C^*_{rs})$ lies on the moment line.
\end{proof}

\begin{theorem}[Chan {\cite[Theorem~2.5]{Chan11}}, see {\cite[Theorem~4.1]{GMR}}]\label{thm:chan}
Let $G$ be primitive and satisfy the equivalent conditions (a) and (b) of Lemma~\ref{lem:gram-hadamard}. Then $G$ or its complement has one of the following parameter sets, where $\theta$ denotes the positive restricted eigenvalue of the graph in the list.
\begin{enumerate}
 \item[(i)] $(4\theta^2,\,2\theta^2-\theta,\,\theta^2-\theta,\,\theta^2-\theta)$, the Latin square type.
 \item[(ii)] $(4\theta^2,\,2\theta^2+\theta,\,\theta^2+\theta,\,\theta^2+\theta)$, the negative Latin square type.
 \item[(iii)] $(4\theta^2-1,\,2\theta^2,\,\theta^2,\,\theta^2)$, the symplectic type.
 \item[(iv)] $(4\theta^2+4\theta+1,\,2\theta^2+2\theta,\,\theta^2+\theta-1,\,\theta^2+\theta)$, the conference graphs.
 \item[(v)] $(4\theta^2+4\theta+2,\,2\theta^2+\theta,\,\theta^2-1,\,\theta^2)$, the regular conference matrix type.
\end{enumerate}
\end{theorem}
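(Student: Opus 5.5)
The plan is to work with condition (a) of Lemma~\ref{lem:gram-hadamard} and derive arithmetic constraints on the parameters. This follows the paper's announced route: Lemma~\ref{lem:three-ineq} supplies three elementary inequalities, and Proposition~\ref{prop:elementary-chan} closes the argument. Suppose the moment line contains a point with $C(x,y,z)\succeq0$.

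First, the diagonal-type constraints $|x|,|y|,|z|\le1$ bound the parameter $\eta$ of the line. The constraint on $z$ gives $0\le\eta\le2$. Since $\kappa=1-n/2$, the constraint $x\ge-1$ gives $P\eta\ge n/2-2$, and likewise $Q\eta\ge n/2-2$. With $\eta\le2$ this yields
\[
 P\ge\frac{n-4}{4},\qquad Q\ge\frac{n-4}{4}.
\]
The constraints $x\le1$ and $y\le1$ give upper bounds $P\eta,Q\eta\le n/2$. Recall $P=(n-k+r)(k-r)/\Delta^2$ and $Q=(n-k+s)(k-s)/\Delta^2$, and use the multiplicity relations of the spectrum. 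The two lower bounds should then become upper bounds on $\Delta^2$ relative to $n$.

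Second, I would express everything in terms of $r,s$ and the multiplicities $f,g$. The goal is to show that $\Delta^2=(r-s)^2$ is forced to be within about one unit of $n$, so that $\Delta^2\in\{n-1,n,n+1,\dots\}$ roughly.

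Third, the determinant condition $1+2xyz-x^2-y^2-z^2\ge0$ on the line, near the admissible $\eta$, should pin down $k$ close to $n/2$ and $\lambda,\nu$ close to $n/4$. The cleanest way is to take the extreme $\eta^*$ from the proof of Lemma~\ref{lem:gram-hadamard}, where $C^*$ is singular and realized by three planar unit vectors with phases $\psi_j$. Equivalently, I would work directly with the complex Hadamard matrix $\mathcal H=I+\zeta_1A+\zeta_2(J-I-A)$ of (b). Expanding $\mathcal H\mathcal H^*=nI$ using \eqref{eq:srg} gives two complex equations, the coefficients of $A$ and of $J-I-A$. These are
\[
 \zeta_1+\bar\zeta_1+\lambda+|\zeta_2|^2(\cdots)+\cdots=0
\]
and an analogous one. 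Taking the real parts reduces them to two real equations in $\cos\arg\zeta_1$, $\cos\arg\zeta_2$ and $\cos\arg(\zeta_1\bar\zeta_2)$, with coefficients from $(n,k,\lambda,\nu)$. The triangle inequality $|\alpha+\beta+\gamma|\le|\alpha|+|\beta|+|\gamma|$ on the unit-modulus sums then gives three integer inequalities. I expect these to combine to $|n-2k-1|\le1$ (after possibly complementing), together with $|\lambda-\nu|\le$ a small bound.

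Fourth, I would case split on the finitely many small values left by the inequalities: $n-2k\in\{-1,0,1,2\}$ and $\lambda-\nu\in\{-2,-1,0,1\}$. In each case I would solve the standard feasibility identity $k(k-\lambda-1)=(n-k-1)\nu$ together with integrality of the eigenvalues, or the conference case from Lemma~\ref{lem:conference}. Writing the positive eigenvalue as $\theta$ should produce exactly the five families (i)–(v) up to complementation.

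The main obstacle is the third step: choosing the right three inequalities so that they bound $n-2k$ and $\lambda-\nu$ sharply rather than only asymptotically. I would first try evaluating the PSD condition at the singular extreme point, where the three phases make one of $x,y,z$ explicitly determined by the other two. That turns the determinant inequality into a quadratic inequality in $\eta$ whose discriminant must be nonnegative. I expect this discriminant condition, combined with the two linear bounds from the first step, to yield the needed integrality squeeze.
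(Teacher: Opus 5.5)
Your plan has a genuine gap. The problem is not only that the third step is left unspecified: the conclusion you aim for there is false.

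\textbf{The target inequality fails for three families.} The quantity $|n-2k-1|$ is unchanged by complementation. It equals $2\theta-1$ in family (i) and $2\theta+1$ in families (ii) and (v), so it is at least $3$ for every graph in these three families. For instance, the Petersen graph $(10,3,0,1)$ has $|n-2k-1|=3$. Yet it satisfies condition (a) of Lemma~\ref{lem:gram-hadamard}, since it admits UAM by Proposition~\ref{prop:family-v}. Hence no valid chain of inequalities can yield $|n-2k-1|\le1$, and a case split over $n-2k\in\{-1,0,1,2\}$ would lose (i), (ii) and (v) entirely. The quantity $\tau=2k-r-s-n$ is of size $\sqrt n$ in these families and can only be controlled quadratically. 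Your guess for $\lambda-\nu=r+s$ is consistent; in fact $r+s\in\{-2,-1,0\}$.

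Two further points. Your second step wants a lower bound $\Delta^2\gtrsim n$ but gives no mechanism for it. Also, the restriction of $\det C$ to the moment line is a cubic in $\eta$ with leading coefficient $-2PQ\ne0$, so there is no quadratic whose discriminant you could use.

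\textbf{The missing idea is the integer bookkeeping of Lemma~\ref{lem:three-ineq}.} Put $\tau=2k-r-s-n$, $\Xi=n+2\tau-4\nu$ and $\sigma=(r+s+1)^2$. The relations $\omega_1\omega_2=n\nu$, $\omega_1+\omega_2=n+\tau$ and $\omega_2-\omega_1=\Delta$ give $\Delta^2=\tau^2+\Xi n$ and $\sigma=\Delta^2-n+1-\Xi$. Three consequences of $C\succeq0$ then suffice.
\begin{enumerate}
 \item[1.] $x\ge-1$ and $\eta\le2$, together with $P\le n^2/(4\Delta^2)$ (the two factors of $P$ add up to $n$), give $\Delta^2\le n^2/(n-4)$. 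This is what your first step becomes, with no appeal to multiplicities.
 \item[2.] The Cauchy--Schwarz inequality $(x-y)^2\le2(1-z)$ and the identity $P-Q=\tau/\Delta$ give $\tau^2\le n^2/(n-4)$.
 \item[3.] $x+y\ge-2$ and $2\kappa+2(P+Q)=n(1-\sigma)/\Delta^2$ give $\Delta^2\ge n(\sigma-1)/2$.
\end{enumerate}

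Since $\Delta^2$ and $\tau^2$ both lie in $[0,n^2/(n-4)]$ and differ by $\Xi n$, one gets $|\Xi|\le1$. For $n\ge13$ the first and third inequalities give $\sigma\in\{0,1\}$. The case $\Xi=-1$ is excluded, because $\tau^2=\Delta^2+n$ would force $\Delta^2<6$ while $\sigma\ge0$ forces $\Delta^2\ge n-2$. The lower bound $\Delta^2\ge n-1$ that you wanted then follows from $\sigma\ge0$.

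The cases $(\Xi,\sigma)=(1,0),(0,0),(0,1),(1,1)$ give (iv), (v), (i) and (ii), and (iii) respectively, up to complement. One spurious sign choice in the last case is removed by the multiplicity relation $k+fr+gs=0$.

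Finally, these inequalities are too weak for $5\le n\le12$, which your plan does not address. Proposition~\ref{prop:elementary-chan} handles those orders separately via $n-1\ge(\sqrt c+\sqrt d)^2$.
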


In (iv) one has $\displaystyle\theta=\frac{\sqrt n-1}{2}$, which is an integer exactly when $n$ is a square. In the other four families and their complements $\displaystyle k\ne\frac{n-1}{2}$, so these are not conference parameters, and $\theta$ is a positive integer by Lemma~\ref{lem:conference}. Since $G$ is connected, primitive and not complete, $\theta\ge2$ in (i), (ii) and (iii), and $\theta\ge1$ in (v). Table~\ref{tab:families} lists the data needed below.

\begin{table}[ht]
\centering
\renewcommand{\arraystretch}{2.3}
\setlength{\tabcolsep}{4.5pt}
\begin{tabular}{@{}lllllll@{}}
\toprule
case & $n$ & $k$ & $r$ & $s$ & $P$ & $Q$\\
\midrule
(i) & $4\theta^2$ & $2\theta^2-\theta$ & $\theta$ & $-\theta$ & $\theta^2-1$ & $\theta^2$\\
(i) compl. & $4\theta^2$ & $2\theta^2+\theta-1$ & $\theta-1$ & $-\theta-1$ & $\theta^2$ & $\theta^2-1$\\
(ii) & $4\theta^2$ & $2\theta^2+\theta$ & $\theta$ & $-\theta$ & $\theta^2$ & $\theta^2-1$\\
(ii) compl. & $4\theta^2$ & $2\theta^2-\theta-1$ & $\theta-1$ & $-\theta-1$ & $\theta^2-1$ & $\theta^2$\\
(iii) & $4\theta^2-1$ & $2\theta^2$ & $\theta$ & $-\theta$ & $\dfrac{(2\theta-1)^2(\theta+1)}{4\theta}$ & $\dfrac{(2\theta+1)^2(\theta-1)}{4\theta}$\\
(iii) compl. & $4\theta^2-1$ & $2\theta^2-2$ & $\theta-1$ & $-\theta-1$ & $\dfrac{(2\theta+1)^2(\theta-1)}{4\theta}$ & $\dfrac{(2\theta-1)^2(\theta+1)}{4\theta}$\\
(iv) & $(2\theta+1)^2$ & $2\theta^2+2\theta$ & $\theta$ & $-\theta-1$ & $\theta^2+\theta$ & $\theta^2+\theta$\\
(v) & $4\theta^2+4\theta+2$ & $2\theta^2+\theta$ & $\theta$ & $-\theta-1$ & $\dfrac{(n-2)^2}{4(n-1)}$ & $\dfrac{n^2}{4(n-1)}$\\
(v) compl. & $4\theta^2+4\theta+2$ & $2\theta^2+3\theta+1$ & $\theta$ & $-\theta-1$ & $\dfrac{n^2}{4(n-1)}$ & $\dfrac{(n-2)^2}{4(n-1)}$\\
\bottomrule
\end{tabular}
\caption{The parameter families of Theorem~\ref{thm:chan} and their complements, with the constants of \eqref{eq:PQ}. The conference parameters (iv) are self-complementary.}\label{tab:families}
\end{table}

We do not use Theorem~\ref{thm:chan} in the sequel. The following elementary argument proves the part that we need, and together with Lemma~\ref{lem:gram-hadamard} it gives a short proof of Theorem~\ref{thm:chan}.

\begin{lemma}[Three inequalities]\label{lem:three-ineq}
Let $G$ be primitive, and put
\[
 \tau=2k-r-s-n,\qquad \Xi=n+2\tau-4\nu,\qquad \sigma=(r+s+1)^2 .
\]
Then $\tau$ and $\Xi$ are integers, $\sigma$ is the square of the integer $\lambda-\nu+1$, and
\begin{equation}\label{eq:j-sigma}
 \Delta^2=\tau^2+\Xi n,\qquad \sigma=\Delta^2-n+1-\Xi .
\end{equation}
If the moment line \eqref{eq:line} contains a point with $C(x,y,z)\succeq0$, then
\begin{equation}\label{eq:three-ineq}
 \Delta^2\le\frac{n^2}{n-4},\qquad \tau^2\le\frac{n^2}{n-4},\qquad \Delta^2\ge\frac{n(\sigma-1)}{2}.
\end{equation}
\end{lemma}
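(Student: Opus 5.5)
The plan is to prove the two identities \eqref{eq:j-sigma} by direct algebra and then obtain the three inequalities \eqref{eq:three-ineq} from a small number of scalar consequences of $C(x,y,z)\succeq0$ on the moment line.

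\emph{Integrality and identities.} Since $r+s=\lambda-\nu$ and $rs=\nu-k$ are integers, $\tau=2k-(\lambda-\nu)-n$ and $\Xi$ are integers, and $r+s+1=\lambda-\nu+1$. For the first identity I would rewrite
\[
\tau^2+\Xi n=(\tau+n)^2-4\nu n=(2k-r-s)^2-4\nu n .
\]
The standard counting relation $k(k-\lambda-1)=\nu(n-k-1)$ is equivalent to $(k-r)(k-s)=\nu n$. Using it,
\[
(2k-r-s)^2-4(k-r)(k-s)=(r-s)^2=\Delta^2 .
\]
For the second identity I would expand
\[
\Delta^2-n+1-\Xi=\Delta^2-2n-2\tau+4\nu+1=(r+s)^2+2(r+s)+1-4(rs+k-\nu),
\]
and the last bracket vanishes because $rs=\nu-k$.

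\emph{Translating the moment line into the entry parameters.} I would express $\Delta,\tau,n$ through $a,b$. One has $a-b=n/\Delta$ and $a+b+1=(n-2k+r+s)/\Delta=-\tau/\Delta$. Hence
\[
Q-P=(a+1)(b+1)-ab=a+b+1=-\tau/\Delta,\qquad
P+Q=\frac{(a-b)^2-(a+b+1)^2-1}{2}=\frac{n^2-\tau^2-\Delta^2}{2\Delta^2}.
\]
Next, choose unit vectors $v_0,v_r,v_s$ with Gram matrix $C$, and put $W_u=v_0+uv_r-(1+u)v_s$. The proof of Theorem~\ref{thm:line} (in the form \eqref{eq:quadratic-entry}) says $\norm{W_a}^2=\norm{W_b}^2=n$ on the line. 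Also $W_a-W_b=(a-b)(v_r-v_s)$, and $\norm{v_r-v_s}^2=2(1-z)=2\eta$.

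\emph{Scalar inequalities and the target bounds.} The triangle inequality $\norm{W_a-W_b}\le2\sqrt n$ gives $\eta\le2\Delta^2/n$. The diagonal constraints $x,y\ge-1$ give $P\eta\ge\frac{n-4}{2}$ and $Q\eta\ge\frac{n-4}{2}$. The constraint $z\ge-1$ gives $\eta\le2$. I would then aim for each target as follows.
\begin{enumerate}
 \item[] $\Delta^2\ge n(\sigma-1)/2$: combine $\eta\le2\Delta^2/n$ with the lower bounds on $P\eta,Q\eta$, then rewrite $P,Q$ by the identities above and by $\sigma=\Delta^2-n+1-\Xi$.
 \item[] $\Delta^2\le n^2/(n-4)$ and $\tau^2\le n^2/(n-4)$: use the upper constraints $x,y\le1$ together with the full determinant condition $1+2xyz-x^2-y^2-z^2\ge0$. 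Rather than expanding the determinant, I would first try the cleaner test vectors $W_a+W_b$ and $(1+b)W_a-(1+a)W_b$, the latter eliminating $v_s$. Their norms must be consistent with $\norm{W_a}^2=\norm{W_b}^2=n$ and with unit lengths of $v_0,v_r,v_s$. Both give quadratic relations in $\eta$ whose discriminant conditions should be the two upper bounds, $\Delta^2$ arising from the $v_r$ versus $v_s$ direction and $\tau^2=\Delta^2(a+b+1)^2$ from the symmetric direction.
\end{enumerate}

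\emph{Main obstacle.} The difficult step is extracting exactly $n^2/(n-4)$ for $\Delta^2$ and $\tau^2$. This requires choosing which minor or test vector of $C$ to use, and then eliminating $\eta$ between an upper and a lower bound. I would first try maximizing $\norm{W_a+W_b}^2$ and $\norm{W_a-W_b}^2$ subject to unit vectors, noting that $4n=\norm{W_a+W_b}^2+\norm{W_a-W_b}^2$, and use the admissible range $\frac{n-4}{2\min(P,Q)}\le\eta$ to bound each piece. Primitivity ($\nu\ge1$, $n-2k+\lambda\ge1$) should be needed to guarantee $a\ne0,-1$ and $b\ne0,-1$, that is $P,Q>0$, so that these divisions are legitimate.
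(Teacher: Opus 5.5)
Your derivation of the identities \eqref{eq:j-sigma} is correct. Deriving the second one directly from $rs=\nu-k$ is slightly shorter than the paper's discriminant argument. Your formulas $P-Q=\tau/\Delta$ and $P+Q=(n^2-\tau^2-\Delta^2)/(2\Delta^2)$ are also correct.

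\textbf{The gap: the first two inequalities are never proved.} The inequalities are the substance of the lemma, and you do not actually prove $\Delta^2\le n^2/(n-4)$ or $\tau^2\le n^2/(n-4)$. The route you sketch does not lead there. Your test vector $(1+b)W_a-(1+a)W_b$ equals $(a-b)(v_r-v_0)$, so it only reproduces $x=\kappa+P\eta$ and $x\ge-1$. The constraints $x,y\le1$ and the full determinant play no role.

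The missing idea is an \emph{upper} bound on $P$. The two factors of $P\Delta^2=(n-k+r)(k-r)$ are positive with sum $n$; equivalently $P=-(a+1)(b+1)\le(a-b)^2/4$. Hence $P\le n^2/(4\Delta^2)$. Your lower bound $P\eta\ge(n-4)/2$ then gives $\eta\ge2(n-4)\Delta^2/n^2$, in particular $\eta>0$.
\begin{itemize}
\item Together with $\eta\le2$, this lower bound gives the first inequality.
\item For the second inequality, use Cauchy--Schwarz: $|x-y|=|\langle v_0,v_r-v_s\rangle|\le\norm{v_r-v_s}=\sqrt{2\eta}$. Since $x-y=\tau\eta/\Delta$, this gives $\eta\le2\Delta^2/\tau^2$ for $\tau\ne0$. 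Comparing with the same lower bound gives $\tau^2\le n^2/(n-4)$.
\end{itemize}
Bounding $\norm{W_a+W_b}$ by the triangle inequality, as you suggest, does yield $\eta\ge2(n-4)\Delta^2/n^2$. However, it does so only when $|\tau|\le\Delta$, which is not automatic.

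\textbf{A wrong pairing in the third inequality.} Your recipe combines the lower bounds with the wrong upper bound on $\eta$.
\begin{itemize}
\item From $(P+Q)\eta\ge n-4$ and $\eta\le2\Delta^2/n$ you only get $P+Q\ge n(n-4)/(2\Delta^2)$, that is, $\tau^2+\Delta^2\le4n$.
\item The target is equivalent to $P+Q\ge(n-4)/2$, that is, $2\kappa+2(P+Q)\ge-2$.
\item Your bound is weaker than this as soon as $\Delta^2>n$. This happens, for instance, for the parameters (iii) of Theorem~\ref{thm:chan}, where $\Delta^2=n+1$.
\end{itemize}
Using $\eta\le2$ instead gives $P+Q\ge(n-4)/2$ at once. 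Your expression for $P+Q$ together with \eqref{eq:j-sigma} turns this into $\Delta^2\ge n(\sigma-1)/2$.
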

\begin{proof}
Recall $\omega_1=k-r$ and $\omega_2=k-s$, so that $\omega_2-\omega_1=\Delta$ and $\omega_1+\omega_2=n+\tau$. Applying \eqref{eq:srg} to the all-ones vector gives $k^2=(r+s)k-rs+n\nu$, that is $\omega_1\omega_2=n\nu$. Hence $4n\nu=(\omega_1+\omega_2)^2-(\omega_2-\omega_1)^2=(n+\tau)^2-\Delta^2$, which is the first identity in \eqref{eq:j-sigma}. Since $\nu=k+rs=k+(k-\omega_1)(k-\omega_2)$, the numbers $k$ and $k-r-s-1$ are the two roots of $X^2-(\omega_1+\omega_2-1)X+\omega_1\omega_2-\nu$. So the discriminant of this quadratic equals $\sigma$. Expanding the discriminant with $\omega_1\omega_2=n\nu$ gives $n\sigma=\tau^2+(n-1)(\Delta^2-n)$, and the first identity turns this into the second.

In \eqref{eq:PQ} the two factors of $P$ are positive and add up to $n$. The same holds for $Q$, because $n-k+s$ is the degree of the complement minus its positive restricted eigenvalue $-1-s$, and this difference is positive since the complement is connected. Hence $\displaystyle0<P,Q\le\frac{n^2}{4\Delta^2}$. A direct computation from \eqref{eq:PQ} and \eqref{eq:j-sigma} gives
\[
 P-Q=\frac{\tau}{\Delta},\qquad 2\kappa+2(P+Q)=\frac{n(1-\sigma)}{\Delta^2}.
\]
Let $(x,y,z)$ be a point of the moment line with $C(x,y,z)\succeq0$ and $\eta=1-z$. Then $|x|,|y|,|z|\le1$, so $0\le\eta\le2$, and $\eta>0$ because $x=\kappa<-1$ when $\eta=0$. From $x\ge-1$ we get $\displaystyle P\eta\ge\frac{n}{2}-2$ and hence
\[
 \eta\ge\frac{n-4}{2P}\ge\frac{2(n-4)\Delta^2}{n^2}.
\]
Together with $\eta\le2$ this is the first inequality in \eqref{eq:three-ineq}. Positive semidefiniteness of $C$ gives $(x-y)^2\le2(1-z)$, that is $\displaystyle\frac{\tau^2\eta^2}{\Delta^2}\le2\eta$. So $\displaystyle\eta\le\frac{2\Delta^2}{\tau^2}$ when $\tau\ne0$, and the lower bound for $\eta$ gives the second inequality. Finally $x+y=2\kappa+(P+Q)\eta$ increases with $\eta$, so
\[
 -2\le x+y\le2\kappa+2(P+Q)=\frac{n(1-\sigma)}{\Delta^2},
\]
which is the third inequality.
\end{proof}

\begin{proposition}[Elementary form of Theorem~\ref{thm:chan}]\label{prop:elementary-chan}
Let $G$ be primitive. If the moment line contains a point with $C(x,y,z)\succeq0$, then $\Xi,\sigma\in\{0,1\}$, and $G$ or its complement has one of the parameter sets (i) to (v) of Theorem~\ref{thm:chan}.
\end{proposition}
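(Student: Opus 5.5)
The plan is to derive everything from the identities \eqref{eq:j-sigma} and the three inequalities \eqref{eq:three-ineq} of Lemma~\ref{lem:three-ineq}. First we pin down $\sigma$ and $\Xi$; then we solve for $(n,k,\lambda,\nu)$ in each of the four resulting cases. Throughout we use that $\tau,\Xi$ are integers and that $\sigma=(\lambda-\nu+1)^2$ is a perfect square.

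\emph{Step 1: $\sigma\in\{0,1\}$.} The first and third inequalities give
\[
 \sigma\le1+\frac{2\Delta^2}{n}\le1+\frac{2n}{n-4}.
\]
For $n>12$ the right side is $<4$, so the perfect square $\sigma$ is $0$ or $1$. The primitive parameter sets with $n\le12$ are few: $(5,2,0,1)$, $(9,4,1,2)$, $(10,3,0,1)$ and their complements. Note that $n\le4$ is impossible for a primitive graph, and $n\ge5$ is needed to make sense of $n-4$. For these few sets I would check $\sigma$ directly and see that each lies in the families.

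\emph{Step 2: $\Xi\in\{0,1\}$.} If $\Xi\ge2$, then $\Delta^2=\tau^2+\Xi n\ge2n$, which exceeds $n^2/(n-4)$ once $n>8$, contradicting the first inequality. If $\Xi\le-1$, then $\tau^2=\Delta^2-\Xi n>n$. Combined with $\tau^2\le n^2/(n-4)=n+4+16/(n-4)$, this gives
\[
 0<\Delta^2=\tau^2+\Xi n\le\tau^2-n\le4+\frac{16}{n-4}.
\]
On the other hand, the second identity gives $\Delta^2=\sigma+n-1+\Xi$. With $\sigma\le1$ this yields $n-1+\Xi\ge\Delta^2-1>0$, so $|\Xi|<n-1$. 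Substituting $\Delta^2=\sigma+n-1+\Xi$ into $\tau^2=\Delta^2-\Xi n$ gives
\[
 \tau^2=\sigma+n-1-\Xi(n-1)\ge\sigma+2(n-1),
\]
which contradicts $\tau^2\le n+4+16/(n-4)$ for $n$ beyond a small bound. The remaining small $n$ are covered by the finite list above. I expect this step, excluding negative $\Xi$ together with the small-$n$ bookkeeping, to be the main obstacle. If the crude bound is not sharp enough, I would try first to compare $\tau^2$ and $\Delta^2$ through $\sigma=\Delta^2-n+1-\Xi$ to squeeze $\Xi$.

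\emph{Step 3: the four cases.} With $\sigma,\Xi\in\{0,1\}$ we have
\[
 \Delta^2=n-1+\Xi+\sigma,\qquad \tau^2=\Delta^2-\Xi n.
\]
Moreover $r+s=-1$ if $\sigma=0$, and $r+s\in\{0,-2\}$ if $\sigma=1$. Passing to the complement swaps $r+s=0$ with $r+s=-2$ and $\tau$ with $-\tau$, so we may normalize. In each case $\Delta$ and $r+s$ determine $r,s$; then $k$ follows from $\tau=2k-r-s-n$; and $\lambda,\nu$ follow from $r+s=\lambda-\nu$ and $rs=\nu-k$.
\begin{itemize}
 \item $(\Xi,\sigma)=(0,0)$: $\tau^2=\Delta^2=n-1$ and $r+s=-1$. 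With $\Delta=2\theta+1$ this gives $n=4\theta^2+4\theta+2$ and family (v), or its complement according to the sign of $\tau$.
 \item $(0,1)$: $\tau^2=\Delta^2=n$. With $r=\theta$, $s=-\theta$ this gives $n=4\theta^2$ and $k=2\theta^2\pm\theta$, that is, families (i) and (ii).
 \item $(1,0)$: $\Delta^2=n$, $\tau=0$, $r+s=-1$. Then $k=(n-1)/2$, the conference family (iv).
 \item $(1,1)$: $\Delta^2=n+1$, $\tau=\pm1$. With $r=\theta=-s$ this gives $n=4\theta^2-1$ and $k=2\theta^2$ or $2\theta^2-1$. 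The second choice needs $r+s=0$ with $\tau=-1$, and I would check that it is the complement of (iii) or is excluded by integrality.
\end{itemize}
In each case the recovered parameters match Table~\ref{tab:families}.
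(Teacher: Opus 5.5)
Steps~1 and~2 are correct and track the paper closely. The paper gets $|\Xi|\le1$ at once from $|\Xi|\,n=|\Delta^2-\tau^2|\le n^2/(n-4)$. Your estimate $\tau^2=\sigma+(n-1)(1-\Xi)\ge2(n-1)$ excludes every $\Xi\le-1$ once $n\ge10$. Normalizing to $r+s=0$ via the complement is legitimate, since $\sigma$ and $\Xi=(\Delta^2-\tau^2)/n$ are invariant under complementation.

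\textbf{The gap is the last subcase of Step~3.} After your normalization, $(\Xi,\sigma)=(1,1)$ with $\tau=-1$ gives the set $(4\theta^2-1,\,2\theta^2-1,\,\theta^2-1,\,\theta^2-1)$. This is \emph{not} the complement of (iii): that complement has $r+s=-2$ and $k=2\theta^2-2$, and your normalization has already turned it into (iii). So this set must be excluded, and nothing in your proposal does so. It satisfies the counting identity $k(k-\lambda-1)=(n-k-1)\nu$, since both sides equal $(2\theta^2-1)(\theta^2-1)$. Its constants $\kappa,P,Q$ also coincide with those of the complement of (iii). Hence its moment line contains a point with $C\succeq0$, so the hypothesis of the proposition holds for it, and only a nonexistence argument can remove it.

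The missing ingredient is integrality of eigenvalue multiplicities. The condition $\operatorname{tr}A=0$ reads $k+f\theta-g\theta=0$, so $\theta\mid2\theta^2-1$. This is impossible for $\theta\ge2$, and $n=4\theta^2-1\ge13$ gives $\theta\ge2$. This is exactly how the paper closes the case.

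You should also justify that $\theta=r$ is an integer there. In the cases with $\Xi=0$ this is automatic because $\Delta=|\tau|$. When $\Delta^2=n+1$ you need Lemma~\ref{lem:conference}: nonintegral restricted eigenvalues force $r+s=-1$, contradicting $r+s=0$.

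Less seriously, for $n\le12$ you quote the list $(5,2,0,1)$, $(9,4,1,2)$, $(10,3,0,1)$, $(10,6,3,4)$ instead of deriving it. The list is correct, but the paper proves it without tables. In the nonintegral case Lemma~\ref{lem:conference} leaves only $n=5$. Otherwise $r\ge1$, $s\le-2$, and $n-1=\nu+c+d+cd/\nu\ge(\sqrt c+\sqrt d)^2$ with $c=(r+1)(|s|-1)$ and $d=r|s|$. This forces $(r,s)=(1,-2)$ and $\nu\mid4$.
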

\begin{proof}
Let first $n\ge13$. By \eqref{eq:j-sigma} and the first two inequalities of \eqref{eq:three-ineq}, $\displaystyle|\Xi|\,n=|\Delta^2-\tau^2|\le\frac{n^2}{n-4}$, so $|\Xi|\le1$. The first and third inequalities give $\displaystyle\sigma\le1+\frac{2n}{n-4}<4$, and $\sigma$ is a square, so $\sigma\in\{0,1\}$. If $\Xi=-1$, then $\displaystyle\tau^2=\Delta^2+n\le\frac{n^2}{n-4}$ gives $\displaystyle\Delta^2\le\frac{4n}{n-4}<6$, while $\sigma\ge0$ gives $\Delta^2\ge n-2\ge11$. Hence $\Xi,\sigma\in\{0,1\}$.

The four cases determine the parameters, since $\displaystyle k=\frac{n+\tau+r+s}{2}$ and $r,s$ follow from $r+s$ and $\Delta$. In the last three cases below $\Delta$ is an integer, because nonintegral restricted eigenvalues occur only for the conference parameters by Lemma~\ref{lem:conference}.
\begin{itemize}
 \item $\Xi=1$ and $\sigma=0$. Then $r+s=-1$, $\tau=0$ and $\Delta^2=n$. These are the conference parameters (iv).
 \item $\Xi=0$ and $\sigma=0$. Then $r+s=-1$, $\tau=\pm\Delta$ and $\Delta^2=n-1$. With $\Delta=2\theta+1$ this is (v) for $\tau=-\Delta$ and its complement for $\tau=\Delta$.
 \item $\Xi=0$ and $\sigma=1$. Then $r+s\in\{0,-2\}$, $\tau=\pm\Delta$ and $\Delta^2=n$. With $\Delta=2\theta$ the four choices give (i), (ii) and their complements.
 \item $\Xi=1$ and $\sigma=1$. Then $r+s\in\{0,-2\}$, $\tau^2=1$ and $\Delta^2=n+1$, so $\Delta=2\theta$ and $n=4\theta^2-1$. For $r+s=0$ and $\tau=1$ this is (iii), and for $r+s=-2$ and $\tau=-1$ it is the complement of (iii). The two remaining sign choices give $k=2\theta^2-1$. Then the relation $k+fr+gs=0$ for the multiplicities $f,g$ of $r,s$, with $f+g=n-1$, forces $\theta$ to divide $2\theta^2-1$, which is impossible for $\theta\ge2$.
\end{itemize}
For $5\le n\le12$ we argue directly. If the restricted eigenvalues are not integers, Lemma~\ref{lem:conference} gives $n\equiv1\pmod4$ with $n$ not a square, so $n=5$ with parameters $(5,2,0,1)$. Otherwise $r\ge1$ and $s\le-2$. Indeed $rs=\nu-k<0$, as $\nu=k$ would make $G$ complete multipartite, and likewise $(r+1)(s+1)<0$ for the complement. Put $d=r|s|$ and $c=(r+1)(|s|-1)$. Then $k=\nu+d$ and $k-\lambda-1=c$, so the relation $k(k-\lambda-1)=(n-k-1)\nu$ gives
\[
 n-1=\nu+c+d+\frac{cd}{\nu}\ge\left(\sqrt c+\sqrt d\right)^2 .
\]
For $n\le12$ this leaves only $(r,s)=(1,-2)$, where $c=d=2$ and $\nu$ divides $4$. The values $\nu=1,2,4$ give $(10,3,0,1)$, $(9,4,1,2)$ and $(10,6,3,4)$. These four sets lie in (iv), (v), (iv) and the complement of (v), with $(\Xi,\sigma)=(1,0)$ for $n=5,9$ and $(0,0)$ for $n=10$.
\end{proof}

Conversely, every parameter set in (i) to (v) and every complementary set satisfies the conditions of Lemma~\ref{lem:gram-hadamard}. For (iv) the point $(0,0,z_*)$ of \eqref{eq:conference-target} lies on the moment line \eqref{eq:conference-line}, and $C$ is positive definite there. For (i), (ii) and their complements, the proof of Theorem~\ref{thm:gmr-corrected} shows that $J-2A$ or $-(J-2A-2I)$ is a Hadamard matrix of the form in Lemma~\ref{lem:gram-hadamard}(b). For (iii), (v) and their complements, Propositions~\ref{prop:family-v} and~\ref{prop:family-iii} give UAM laws. So Lemma~\ref{lem:gram-hadamard} and Proposition~\ref{prop:elementary-chan} prove Theorem~\ref{thm:chan} together with its converse, which is the content of \cite[Theorem~2.5]{Chan11}.

\begin{corollary}\label{cor:chan-families}
If a primitive strongly regular graph admits UAM, then it or its complement has one of the parameter sets (i) to (v) of Theorem~\ref{thm:chan}.
\end{corollary}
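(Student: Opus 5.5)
The argument is a direct chain of earlier results; no new estimate is needed. Let $G$ be a primitive strongly regular graph that admits UAM, and fix a UAM time law $\mu$.

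By Theorem~\ref{thm:line}, the moments $(x,y,z)$ of $\mu$ lie on the moment line \eqref{eq:line} for some $\eta\in[0,2]$. By Remark~\ref{rem:gram}, the same moments satisfy $C(x,y,z)\succeq0$, because $C$ is the real part of the Gram matrix of the phase functions $1,e^{i\omega_1t},e^{i\omega_2t}$ in $L^2(\mu)$. So the moment line contains a point at which the Gram matrix is positive semidefinite. This is condition (a) of Lemma~\ref{lem:gram-hadamard}, and the last sentence of that lemma records exactly this implication.

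Next apply Proposition~\ref{prop:elementary-chan}. Its hypotheses are that $G$ is primitive and that the moment line contains a point with $C\succeq0$. The conclusion is that $G$ or its complement has one of the parameter sets (i) to (v) of Theorem~\ref{thm:chan}, which is the corollary.

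The only point to check is that the hypotheses match. Proposition~\ref{prop:elementary-chan}, through Lemma~\ref{lem:three-ineq}, is stated for a point on the moment line with $C\succeq0$, and does not require $\eta\in[0,2]$ in advance. Lemma~\ref{lem:three-ineq} derives $0<\eta\le2$ itself from $|x|,|y|,|z|\le1$. The point supplied by $\mu$ already has $\eta\in[0,2]$, so it is admissible in either reading. I expect no real obstacle, and in particular Chan's theorem is not invoked.
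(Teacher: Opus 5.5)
Your proposal is correct and follows the paper's own proof: a UAM law gives, via Theorem~\ref{thm:line} and Remark~\ref{rem:gram}, a point of the moment line with $C(x,y,z)\succeq0$, and Proposition~\ref{prop:elementary-chan} then yields the parameter sets (i) to (v). Your check that the $\eta$-range hypotheses are compatible is accurate, though not strictly needed.
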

\begin{proof}
A UAM law gives a point of the moment line with $C(x,y,z)\succeq0$ by Theorem~\ref{thm:line} and Remark~\ref{rem:gram}, so Proposition~\ref{prop:elementary-chan} applies.
\end{proof}

\subsection{The five families}

\begin{proposition}[Latin square types]\label{prop:latin}
Let $G$ or its complement have parameters (i) or (ii). The following statements are equivalent.
\begin{enumerate}
 \item[(a)] $G$ admits UAM.
 \item[(b)] $G$ admits instantaneous uniform mixing.
 \item[(c)] $\theta$ is even.
\end{enumerate}
In this case a probability law is a UAM time law if and only if it is concentrated on the set $\displaystyle S_\theta=\left\{\frac{(2j+1)\pi}{2\theta}\,\middle|\,j=0,1,2,\ldots\right\}$, and every point of $S_\theta$ is a time of instantaneous uniform mixing. In particular no absolutely continuous law gives UAM.
\end{proposition}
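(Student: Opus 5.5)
The plan is to reduce everything to the moment line of Theorem~\ref{thm:line} and extract one pointwise nonnegative combination whose integral is forced to vanish.

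First, the complement reduces to the graph itself. For a regular graph, $J-I-A$ commutes with $A$, and $e^{it(J-I)}$ acts on each eigenspace of $A$ by a scalar phase. Hence
\[
U_{\bar G}(t)=e^{it(J-I)}e^{-itA},
\]
and this matrix has entries of the same modulus as those of $\overline{U(t)}$. So $M_{\bar G}(t)=M(-t)=M(t)$. The complement therefore has exactly the same UAM laws and the same IUM times, and it suffices to treat (i) and (ii) themselves.

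For case (i), Table~\ref{tab:families} gives $r=\theta$ and $s=-\theta$. The frequencies are
\[
\omega_1=2\theta^2-2\theta,\qquad \omega_2=2\theta^2,\qquad \omega_3=2\theta,
\]
and $\kappa=1-2\theta^2$, $P=\theta^2-1$, $Q=\theta^2$. Put $\phi=2\theta t$, so that $x,y,z$ are the averages of $\cos((\theta-1)\phi)$, $\cos(\theta\phi)$ and $\cos\phi$. Eliminating $\eta=1-z$ from \eqref{eq:line} gives
\[
x+1=-(\theta^2-1)(1+z),\qquad 1-y=\theta^2(1+z).
\]
The first identity says that
\[
\int\bigl(1+\cos((\theta-1)\phi)\bigr)\,d\mu+(\theta^2-1)\int(1+\cos\phi)\,d\mu=0 .
\]
Both integrands are nonnegative and $\theta\ge2$, so $1+\cos\phi=0$ $\mu$-almost everywhere. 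Thus $\phi\equiv\pi\pmod{2\pi}$, that is, $\mu$ is concentrated on $S_\theta$, with $z=-1$ and $\eta=2$. The second identity then demands $y=1$. But on $S_\theta$ we have $\cos(\theta\phi)=(-1)^\theta$, so UAM forces $\theta$ to be even. This proves (a)$\Rightarrow$(c), and also the necessity half of the description of UAM laws.

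Conversely, let $\theta$ be even and $t\in S_\theta$. Then
\[
\cos\phi=-1,\qquad \cos((\theta-1)\phi)=-1,\qquad \cos(\theta\phi)=1 .
\]
This is precisely the point $\eta=2$ of the moment line, namely $(x,y,z)=(-1,1,-1)$. So $\delta_t$ is a UAM law by Theorem~\ref{thm:line}, which is IUM at $t$. The same holds for every law concentrated on $S_\theta$, since all such laws have these moments. This gives (c)$\Rightarrow$(b), and (b)$\Rightarrow$(a) is trivial.

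Case (ii) is the mirror image. Here $k=2\theta^2+\theta$, so the frequencies are
\[
\omega_1=2\theta^2,\qquad \omega_2=2\theta^2+2\theta,\qquad \omega_3=2\theta,
\]
with $P=\theta^2$ and $Q=\theta^2-1$. The same elimination gives $y+1=-(\theta^2-1)(1+z)$, now with $y$ the average of $\cos((\theta+1)\phi)$. This again forces $\phi\equiv\pi\pmod{2\pi}$. It then requires $x=\cos(\theta\phi)=1$, which once more holds exactly when $\theta$ is even.

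Since $S_\theta$ is countable, no absolutely continuous law is concentrated on it, which gives the last sentence of the proposition. The only delicate point is finding the right nonnegative combination. I expect the $x$-equation (respectively the $y$-equation in case (ii)) to do this, because $P$ or $Q$ equals $\theta^2-1>0$ and makes both terms nonnegative. I will double-check the table constants, since a sign slip there would break this step.
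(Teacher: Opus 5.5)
Your computation for the parameter sets (i) and (ii) themselves is correct and is essentially the paper's argument. Writing $x+1=-(\theta^2-1)(1+z)$ as a vanishing sum of two nonnegative integrals is the same observation as the paper's remark that the moment with coefficient $\theta^2-1$ is at least $-1$ only if $\eta\ge2$. The parity check on $S_\theta$ is identical.

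The gap is your opening reduction of the complement to $G$, because the identity $M_{\bar G}(t)=M(t)$ is false. The operator $e^{it(J-I)}$ acts by $e^{it(n-1)}$ on $\mathbf 1$ but by $e^{-it}$ on $\mathbf 1^{\perp}$, so it is not a global phase. In fact
\[
 U_{\bar G}(t)=e^{-it}\left(\overline{U(t)}+\frac{(e^{int}-1)e^{-ikt}}{n}\,J\right),
\]
which has the entry moduli of $\overline{U(t)}$ only when $e^{int}=1$.

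For small $t$ and $g\ne h$ one has $M(t)_{gh}=t^2A_{gh}+O(t^4)$, so already $M_{\bar G}\ne M_G$. Equivalently, $\bar G$ has frequencies $(n-k+s,\,n-k+r,\,\Delta)$. For (i) these are $(2\theta^2,\,2\theta^2+2\theta,\,2\theta)$ instead of $(2\theta^2-2\theta,\,2\theta^2,\,2\theta)$.

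The identity $M_{\bar G}(t)=M_G(t)$ does hold on $S_\theta$, because $nt\in2\pi\Z$ there. But invoking it for an arbitrary UAM law presupposes that the law is concentrated on $S_\theta$, which is what has to be proved. So the complementary parameter sets, which the proposition explicitly covers, are not handled.

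The repair is short. By Table~\ref{tab:families}, the complement of (i) has exactly the data of (ii): frequencies $(2\theta^2,\,2\theta^2+2\theta,\,2\theta)$, $\kappa=1-2\theta^2$, $P=\theta^2$ and $Q=\theta^2-1$. Likewise the complement of (ii) has exactly the data of (i).

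By Theorem~\ref{thm:line}, the set of UAM laws, and hence the set of IUM times, depends only on these numbers. So your argument for (ii) applies verbatim to the complement of (i), and your argument for (i) to the complement of (ii). This is how the paper proceeds. It treats all four cases at once, noting that the frequency $2\theta^2$ always carries the coefficient $\theta^2$, while the other frequency $2\theta(\theta\pm1)$ carries $\theta^2-1$.
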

\begin{proof}
By Table~\ref{tab:families}, in all four cases $\Delta=2\theta$, $\kappa=1-2\theta^2$ and $\{P,Q\}=\{\theta^2-1,\theta^2\}$. Moreover the frequencies $k-r$ and $k-s$ are $2\theta^2$ and one of $2\theta(\theta-1)$, $2\theta(\theta+1)$, and in each case the frequency $2\theta^2$ carries the coefficient $\theta^2$ while the other frequency carries $\theta^2-1$. The moment with coefficient $\theta^2-1$ is at least $-1$ only if $(\theta^2-1)\eta\ge2\theta^2-2$, that is $\eta\ge2$. Hence every UAM law has $\eta=2$ and $z=\displaystyle\int\cos(2\theta t)\,d\mu=-1$, so it is concentrated on $S_\theta$. At $\eta=2$ the line requires the moment with coefficient $\theta^2$ to equal $1$ and the other moment to equal $-1$. For $t\in S_\theta$ and an integer $m$ one has $\cos(2\theta mt)=(-1)^m$. So the moment of frequency $2\theta^2$ equals $(-1)^\theta$, and the moment of frequency $2\theta(\theta\pm1)$ equals $(-1)^{\theta\pm1}$. Both requirements hold if and only if $\theta$ is even. In that case every $t\in S_\theta$ gives the point of the moment line with $\eta=2$, so $\delta_t$ is a UAM law, which means instantaneous uniform mixing at $t$, and every law concentrated on $S_\theta$ is a UAM law. If $\theta$ is odd, no law satisfies the requirements.
\end{proof}

\begin{proposition}[Square conference parameters]\label{prop:square-conference}
Let $G$ have parameters (iv) with $\theta$ a positive integer, so that $n=(2\theta+1)^2$. Then $G$ admits UAM if and only if $\theta=1$.
\end{proposition}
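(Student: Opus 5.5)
The case $\theta=1$ is the parameter set $(9,4,1,2)$, for which the law \eqref{eq:nine} already gives UAM, so only the converse needs work. Fix $\theta\ge2$ and follow the pattern of the $(25,12,5,6)$ obstruction.

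First I compute the data for (iv) with $n=(2\theta+1)^2$. From Table~\ref{tab:families}, $r=\theta$ and $s=-\theta-1$, so $\Delta=2\theta+1$. Also $k=2\theta^2+2\theta$ and $P=Q=\theta^2+\theta$. The frequencies are
\[
 \omega_1=k-r=2\theta^2+\theta=\theta(2\theta+1),\quad \omega_2=k-s=(\theta+1)(2\theta+1),\quad \omega_3=2\theta+1 .
\]
So with $\phi=(2\theta+1)t$ and $c_j=\int\cos(j\phi)\,d\mu$ we have $x=c_\theta$, $y=c_{\theta+1}$ and $z=c_1$. We also have $\kappa=1-n/2=\tfrac12-2\theta^2-2\theta$. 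The moment line requires $c_\theta=c_{\theta+1}=\kappa+(\theta^2+\theta)\eta$ and $c_1=1-\eta$. Eliminating $\eta$ gives the linear relation
\[
 c_\theta-\kappa-(\theta^2+\theta)(1-c_1)=0,\quad\text{i.e.}\quad c_\theta+(\theta^2+\theta)c_1+\theta^2+\theta-\tfrac12=0 .
\]
I also keep the companion relation $c_{\theta+1}=c_\theta$.

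The goal is a trigonometric polynomial that is pointwise strictly positive but whose integral is forced to vanish by these relations. A convenient family is
\[
 L(\phi)=\alpha\bigl(\cos\theta\phi+(\theta^2+\theta)\cos\phi+\theta^2+\theta-\tfrac12\bigr)+\beta(\cos(\theta+1)\phi-\cos\theta\phi).
\]
Every UAM law gives $\int L\,d\mu=0$, so it suffices to find $\alpha,\beta$ with $L>0$ everywhere.

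The natural first attempt is $\beta=0$. Then I need $\cos\theta\phi+(\theta^2+\theta)\cos\phi\ge-(\theta^2+\theta)+\tfrac12$ with strict inequality somewhere. At $\phi=\pi$ this reads $(-1)^\theta-(\theta^2+\theta)>-(\theta^2+\theta)+\tfrac12$, which holds only when $\theta$ is even. So odd $\theta$ needs the $\beta$ term. Near $\phi=\pi$, with $\cos\phi\approx-1+\epsilon^2/2$, the cost of $\cos\phi$ rising is quadratic with weight $(\theta^2+\theta)/2$. The drop of $\cos\theta\phi$ is also quadratic, with weight up to $\theta^2/2$. I therefore expect $\beta=0$ to work for even $\theta$ by a Bernstein-type bound $\cos\theta\phi\ge$ its value at $\pi$ minus $\theta^2(1+\cos\phi)$, i.e. $|T_\theta(u)-T_\theta(-1)|\le\theta^2(1+u)$ on $[-1,1]$. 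This follows from Markov's inequality $|T_\theta'|\le\theta^2$, provided $(\theta^2+\theta)$ beats $\theta^2$ with a margin of at least $\tfrac12$ overall.

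For odd $\theta$ I take $\beta=\pm\tfrac12$, or $\beta$ chosen so that the value at $\phi=\pi$ becomes $\pm1$ in the favourable direction. At $\phi=\pi$ we have $\cos(\theta+1)\pi-\cos\theta\pi=2$ for odd $\theta$. In Chebyshev form with $u=\cos\phi$, the task is to show that a polynomial of degree $\theta+1$ is positive on $[-1,1]$. Away from $u=-1$ the linear term $(\theta^2+\theta)(1+u)$ dominates, while near $u=-1$ the constant margin plus Markov control handles it.

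If the direct inequality proves delicate, the fallback is the Hankel route of Section~\ref{sec:hankel}. I would exhibit an explicit sum-of-squares certificate $p(u)^2+(1-u^2)q(u)^2$ plus a positive constant, mirroring $2(\cos\phi+\tfrac32)^2$ in the $\theta=2$ case.

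The main obstacle is this positivity certificate uniformly in $\theta$, especially for odd $\theta$. The margin is only the constant $\tfrac12$, and Markov's bound is tight at the endpoint, so I expect to need the exact constraint $c_\theta=c_{\theta+1}$ there and not just the single relation.
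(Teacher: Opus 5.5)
Your reduction to $\phi=(2\theta+1)t$, the two linear relations, and the even-$\theta$ case are correct. With $\beta=0$ and Markov's bound $\mathsf T_\theta(u)\ge1-\theta^2(1+u)$ you get $L\ge\frac12+\theta(1+\cos\phi)>0$, which is the paper's even case written as a dual certificate. The gap is the odd case, which carries the content of the proposition and which you leave as an expectation.

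Your concrete choice fails there. For odd $\theta$ one has $L(\pi)=-\frac32\alpha+2\beta$, so any certificate needs $\beta>\frac34\alpha$. With $\alpha=1$, both $\beta=\pm\frac12$ give $L(\pi)<0$, and your alternative choice of $\beta$ is not checked. Moreover, ``constant margin plus Markov control'' has nothing to spare at $\theta=3$. There Markov's bound is attained at $\phi=\pi$, and the natural certificate below vanishes at that point, so strict positivity cannot come out uniformly.

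The missing step is the choice $\alpha=\theta+1$, $\beta=\theta$, which gives
\[
 L(\phi)=\cos\theta\phi+\theta\cos(\theta+1)\phi+\theta(\theta+1)^2(1+\cos\phi)-\frac{\theta+1}{2}.
\]
Bound $\cos\theta\phi\ge-1$. Since $\theta+1$ is even, Markov's bound gives $\cos(\theta+1)\phi=\mathsf T_{\theta+1}(\cos\phi)\ge1-(\theta+1)^2(1+\cos\phi)$. Together these give $L\ge\frac{\theta-3}{2}$, which disposes of every odd $\theta\ge5$.

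For $\theta=3$ this only yields $L\ge0$, with $L(\pi)=0$, so you need an equality argument instead of strict positivity. Here $\int L\,d\mu=0$ forces $\cos3\phi=-1$ almost everywhere for the law of $\phi$, hence $c_3=-1$. The companion relation $c_4=c_3$ then gives $c_4=-1$, so $\cos4\phi=-1$ almost everywhere as well. But $\cos3\phi=-1$ means $\cos\phi\in\{-1,\frac12\}$, where $\cos4\phi$ equals $1$ or $-\frac12$, a contradiction.

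This is the paper's argument read in primal form. Markov's bound on the even-index moment gives $\overline w\ge\frac{1}{2(\theta+1)}$. The odd-index moment being at least $-1$ gives $\overline w\le\frac{3}{2\theta(\theta+1)}$. Hence $\theta\le3$, and $\theta=3$ is excluded by the same equality analysis. With this step added, your proof is complete.
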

\begin{proof}
Here $\delta=2\theta+1$, $a_0=\theta$, $b_0=1$ and $D=\theta+1$. Let $\varrho$ be the image of a time law $\mu$ under $t\mapsto\xi=\cos((2\theta+1)t)$, and put $w=1+\xi\in[0,2]$ and $\displaystyle\overline w=\int w\,d\varrho$. Then $\displaystyle x=\int\mathsf T_\theta\,d\varrho$, $\displaystyle y=\int\mathsf T_{\theta+1}\,d\varrho$ and $\displaystyle z=\int\xi\,d\varrho=\overline w-1$. Since $P=Q$, a UAM law satisfies $x=y$, and \eqref{eq:conference-line} becomes
\begin{equation}\label{eq:square-conf}
 x=y=\frac{1}{2}-(\theta^2+\theta)\,\overline w.
\end{equation}
From $\mathsf T_m'(\cos\phi)\sin\phi=m\sin(m\phi)$ and $|\sin(m\phi)|\le m|\sin\phi|$ we get Markov's bound $|\mathsf T_m'|\le m^2$ on $[-1,1]$. Hence $\mathsf T_m(1-w)\ge1-m^2w$ for $w\in[0,2]$, while $\mathsf T_m(-1+w)=(-1)^m\mathsf T_m(1-w)$.

Let $m_e$ be the even one of $\theta$ and $\theta+1$. The corresponding moment is at least $1-m_e^2\overline w$, so \eqref{eq:square-conf} gives $\displaystyle(m_e^2-\theta^2-\theta)\,\overline w\ge\frac{1}{2}$. If $\theta$ is even, then $m_e=\theta$ and the left side equals $-\theta\overline w\le0$, a contradiction. If $\theta$ is odd, then $m_e=\theta+1$ and $\displaystyle\overline w\ge\frac{1}{2(\theta+1)}$. The other moment is at least $-1$, so \eqref{eq:square-conf} gives $\displaystyle\overline w\le\frac{3}{2\theta(\theta+1)}$. Together these force $\theta\le3$. If $\theta=3$, equality holds throughout, so $\displaystyle\overline w=\frac{1}{8}$ and $x=y=-1$. Then $\mathsf T_3(\xi)=-1$ and $\mathsf T_4(\xi)=-1$ hold $\varrho$-almost everywhere. This is impossible, because $\mathsf T_3(\xi)=-1$ only for $\displaystyle\xi\in\left\{-1,\frac{1}{2}\right\}$, where $\mathsf T_4$ takes the values $1$ and $\displaystyle-\frac{1}{2}$. Hence $\theta=1$, and for $n=9$ the law \eqref{eq:nine} gives UAM.
\end{proof}

\begin{proposition}[Regular conference matrix type]\label{prop:family-v}
Let $G$ or its complement have parameters (v), so $n=4\theta^2+4\theta+2$ with $\theta\ge1$. Put
\[
 t_*=\frac{4\pi\theta^2}{n},\qquad
 p=\frac{2(n-1)}{n\left(1+\cos\dfrac{2\pi}{n}\right)}.
\]
Then $\mu=(1-p)\delta_0+p\,\delta_{t_*}$ is a UAM time law for $G$. The same law serves for $G$ and for its complement.
\end{proposition}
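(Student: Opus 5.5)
The plan is to use the reformulation after Theorem~\ref{thm:line}. A law $\mu$ is a UAM time law exactly when $\int R\,d\mu=0$. For $\mu=(1-p)\delta_0+p\,\delta_{t_*}$, and using $R(0)=\frac{n}{2}(1,1)$, this becomes the single vector equation
\[
 (1-p)\,\frac{n}{2}(1,1)+p\,R(t_*)=0 .
\]
So I will compute $R(t_*)$ explicitly. I aim to show three things: its two coordinates are equal, the common value is negative, and solving the equation gives the stated $p$, which lies in $[0,1]$.

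\emph{Step 1: the three cosines at $t_*$.} For (v) we have $k=2\theta^2+\theta$, $r=\theta$ and $s=-\theta-1$, hence
\[
 \omega_1=2\theta^2,\qquad \omega_2=2\theta^2+2\theta+1=\frac{n}{2},\qquad \omega_3=2\theta+1 .
\]
First, $\omega_2t_*=2\pi\theta^2$, so $\cos(\omega_2t_*)=1$. Second, since $\omega_1=\omega_2-\omega_3$, we get $\cos(\omega_1t_*)=\cos(\omega_3t_*)$. Third, $\omega_3t_*=2\pi(2\theta+1)2\theta^2/n$, and the congruence
\[
 (2\theta+1)2\theta^2=\theta n-\left(\frac{n}{2}-1\right)
\]
gives $\omega_3t_*\equiv-\pi+\frac{2\pi}{n}\pmod{2\pi}$. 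Put $c=-\cos\frac{2\pi}{n}$. Then $x=z=c$ and $y=1$ at $t_*$.

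\emph{Step 2: equal coordinates and the weight.} Table~\ref{tab:families} gives
\[
 P-Q=\frac{(n-2)^2-n^2}{4(n-1)}=-1 .
\]
Hence
\[
 R_1(t_*)-R_2(t_*)=(c-1)-(P-Q)(1-c)=0 .
\]
The common value is
\[
 R_2(t_*)=\frac{n}{2}-Q\left(1+\cos\frac{2\pi}{n}\right).
\]
Substituting into the vector equation and using $Q=\frac{n^2}{4(n-1)}$, we get
\[
 p\,Q\left(1+\cos\frac{2\pi}{n}\right)=\frac{n}{2},
\]
which is exactly the stated $p$. Clearly $p>0$.

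\emph{Step 3: $p\le1$.} This is the only step that needs an estimate, though it is mild. The inequality $p\le1$ is equivalent to
\[
 1-\cos\frac{2\pi}{n}\le\frac{2}{n},\quad\text{that is}\quad \sin^2\frac{\pi}{n}\le\frac{1}{n}.
\]
Since $\sin^2u\le u^2$, it suffices that $\pi^2\le n$, and this holds because $\theta\ge1$ gives $n\ge10$.

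\emph{Step 4: the complement.} For the complement, Table~\ref{tab:families} gives the same $r,s$ but $k'=2\theta^2+3\theta+1$. Hence
\[
 \omega_1'=\frac{n}{2},\qquad \omega_2'=2\theta^2+4\theta+2,\qquad \omega_3'=\omega_3,
\]
with $P$ and $Q$ interchanged. Now $\cos(\omega_1't_*)=1$, and $\omega_2'=\omega_1'+\omega_3$ gives $\cos(\omega_2't_*)=\cos(\omega_3t_*)=c$. So the pair $(x,y)$ is swapped, and so is the pair $(P,Q)$. Therefore $R(t_*)$ for the complement is the coordinate swap of the vector above. Its coordinates are equal, so it is the same vector, and the same $p$ and the same law work.

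Finally, Corollary~\ref{cor:parameter} extends the conclusion from the parameters to every graph with those parameters. I expect the main care to lie in the modular computation of $\omega_3t_*$ in Step 1; everything after it is routine substitution.
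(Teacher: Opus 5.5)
Your proof is correct and follows essentially the paper's argument. It uses the same evaluation of the cosines at $t_*$ from $\frac{n}{2}t_*\in2\pi\Z$ and $\Delta t_*\equiv-\pi+\frac{2\pi}{n}\pmod{2\pi}$, and the same elementary estimate $1-\cos\frac{2\pi}{n}\le\frac{2}{n}$ for $n\ge\pi^2$ to get $p\le1$. The only difference is bookkeeping: you verify $\int R\,d\mu=0$ using $P-Q=-1$ and solve for $p$, whereas the paper computes $z=-\frac{n-2}{n}$, hence $\eta=\frac{2(n-1)}{n}$, and checks $x$ and $y$ directly against the moment line.
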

\begin{proof}
By Table~\ref{tab:families}, $\Delta=2\theta+1$, so $\Delta^2=n-1$. For (v) one has $\displaystyle k-s=\frac{n}{2}$, and for its complement $\displaystyle k-r=\frac{n}{2}$. Hence $\displaystyle\frac{n}{2}\,t_*=2\pi\theta^2\in2\pi\Z$. For (v) this gives $y=1$ at both atoms, and $\cos((k-r)t)=\cos((k-s)t-\Delta t)=\cos(\Delta t)$ gives $x=z$. For the complement it gives $x=1$ and $y=z$. Next $4\theta(\theta+1)=n-2$ and $\displaystyle\theta^2(2\theta+1)+\theta(\theta+1)=\frac{\theta n}{2}$ give
\[
 \Delta t_*=2\pi\theta-\frac{\pi(n-2)}{n},\qquad \cos(\Delta t_*)=-\cos\frac{2\pi}{n}.
\]
Therefore $\mu$ has $\displaystyle z=1-p\left(1+\cos\frac{2\pi}{n}\right)=-\frac{n-2}{n}$, and $\displaystyle\eta=1-z=\frac{2(n-1)}{n}$. For (v) the line requires
\[
 \kappa+Q\eta=1-\frac{n}{2}+\frac{n^2}{4(n-1)}\cdot\frac{2(n-1)}{n}=1,\qquad
 \kappa+P\eta=1-\frac{n}{2}+\frac{(n-2)^2}{2n}=-\frac{n-2}{n},
\]
which are the values $y=1$ and $x=z$ found above. For the complement $P$ and $Q$ are interchanged, and so are the roles of $x$ and $y$. Finally $0<p<1$, because $\displaystyle\cos\frac{2\pi}{n}>1-\frac{2\pi^2}{n^2}\ge1-\frac{2}{n}$ for $n\ge\pi^2$, and $n\ge10$.
\end{proof}

For $\theta=1$ this is the Petersen law \eqref{eq:petersen}, since $\displaystyle t_*=\frac{2\pi}{5}$ and $\displaystyle p=\frac{9(5-\sqrt5)}{25}$.

\begin{proposition}[Symplectic type]\label{prop:family-iii}
Let $n=4\theta^2-1$ with $\theta\ge2$.
\begin{enumerate}
 \item[(a)] Let $G$ have parameters (iii). Let $e$ be the unique solution in $\displaystyle\left(0,\frac{\pi}{4\theta}\right)$ of $\displaystyle\sin(2\theta e)=\frac{\cos e}{2\theta}$, and put
 \[
  t_2=\frac{1}{\theta}\left(\frac{\pi}{2}+(-1)^\theta e\right),\qquad
  g=\sin((2\theta-1)e)-\frac{1}{2\theta}+2P\sin^2e,\qquad
  p=\frac{-2g}{n-2g}.
 \]
 Then $g<0$, and $\mu=p\,\delta_0+(1-p)\delta_{t_2}$ is a UAM time law.
 \item[(b)] Let $G$ have the complementary parameters $(4\theta^2-1,\,2\theta^2-2,\,\theta^2-3,\,\theta^2-1)$. Put
 \[
  t_0=\frac{\pi}{2\theta},\quad t_1=\frac{3\pi}{2\theta},\quad
  \chi_j=\cos((k-r)t_j),\quad
  p_0=\frac{2\theta\chi_1+1}{2\theta(\chi_1-\chi_0)}.
 \]
 Then $0<p_0<1$, and $\mu=p_0\,\delta_{t_0}+(1-p_0)\delta_{t_1}$ is a UAM time law.
\end{enumerate}
\end{proposition}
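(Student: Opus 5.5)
The plan is to use the residual vector $R(t)$ of \eqref{eq:R} and the fact that $R(0)=\frac{n}{2}(1,1)$. In part (a) it suffices to show $R(t_2)=g\,(1,1)$ with $g<0$. The law $p\delta_0+(1-p)\delta_{t_2}$ with $p=\frac{-2g}{n-2g}\in(0,1)$ then satisfies $\int R\,d\mu=0$, which is UAM by Theorem~\ref{thm:line}. In part (b) both atoms will lie on $S_\theta=\{t:\cos(2\theta t)=-1\}$, which forces $\eta=2$. Only one scalar moment then has to be matched, by choosing the weight. Throughout I use the data of Table~\ref{tab:families}.

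\emph{Part (a).} Here $\omega_1=\theta(2\theta-1)$, $\omega_2=\theta(2\theta+1)$ and $\omega_3=2\theta$. Write $\epsilon=(-1)^\theta$, so that $\theta t_2=\frac{\pi}{2}+\epsilon e$. Direct reduction gives
\[
 \cos(\omega_1t_2)=\sin((2\theta-1)e),\qquad \cos(\omega_2t_2)=-\sin((2\theta+1)e),\qquad 1-\cos(\omega_3t_2)=2\cos^2e .
\]
From the table, $2P=2\theta^2-\frac32+\frac{1}{2\theta}$ and $2Q=2\theta^2-\frac32-\frac{1}{2\theta}$. Since $\kappa=\frac32-2\theta^2$, this gives $-\kappa-2P=-\frac{1}{2\theta}$ and $-\kappa-2Q=\frac{1}{2\theta}$. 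Writing $\cos^2e=1-\sin^2e$ then shows $R_1(t_2)=g$. Using $P-Q=\frac{1}{2\theta}$ and the sum-to-product formula for $\sin((2\theta-1)e)+\sin((2\theta+1)e)$, the difference becomes
\[
 R_1(t_2)-R_2(t_2)=2\cos e\left(\sin(2\theta e)-\frac{\cos e}{2\theta}\right),
\]
which vanishes by the defining equation of $e$. The function $\sin(2\theta e)-\frac{\cos e}{2\theta}$ is strictly increasing on $\left(0,\frac{\pi}{4\theta}\right)$. It is negative at $0$ and positive at $\frac{\pi}{4\theta}$, so $e$ exists and is unique.

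The main obstacle is proving $g<0$. I would first expand $\sin((2\theta-1)e)=\sin(2\theta e)\cos e-\cos(2\theta e)\sin e$ and substitute $\sin(2\theta e)=\frac{\cos e}{2\theta}$. This gives
\[
 g=\sin e\left(\left(2\theta^2-\tfrac32\right)\sin e-\cos(2\theta e)\right).
\]
Next, concavity of $\sin$ on $[0,\frac{\pi}{2}]$ together with $\sin(2\theta e)\le\frac{1}{2\theta}$ gives $e\le\frac{\pi}{8\theta^2}$. It also gives $\cos(2\theta e)\ge\sqrt{1-\frac{1}{4\theta^2}}\ge\frac{\sqrt{15}}{4}$. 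Hence $\left(2\theta^2-\frac32\right)\sin e<\frac{\pi}{4}<\frac{\sqrt{15}}{4}\le\cos(2\theta e)$, and so $g<0$.

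\emph{Part (b).} Now $\omega_1=(2\theta+1)(\theta-1)$, $\omega_3=2\theta$ and $\omega_2=\omega_1+2\theta$. At $t_0$ and $t_1$ we have $\cos(2\theta t)=-1$, so $z=-1$ and $\eta=2$. We also get $\cos(\omega_2t)=-\cos(\omega_1t)$ there, so $y=-x$ holds automatically. With the complementary values of $P$ and $Q$ from the table, the line requires $x=\kappa+2P=-\frac{1}{2\theta}$ and $y=\kappa+2Q=\frac{1}{2\theta}$; these are consistent with $y=-x$. So only $p_0\chi_0+(1-p_0)\chi_1=-\frac{1}{2\theta}$ is needed, and this is exactly the displayed formula for $p_0$. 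Reducing modulo $2\pi$ gives
\[
 \chi_0=-(-1)^\theta\sin\frac{\pi}{2\theta},\qquad \chi_1=(-1)^\theta\sin\frac{3\pi}{2\theta}.
\]
It remains to check that $-\frac{1}{2\theta}$ lies strictly between $\chi_0$ and $\chi_1$, which makes $0<p_0<1$.

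For even $\theta$, Jordan's inequality gives $\chi_0\le-\frac{1}{\theta}<-\frac{1}{2\theta}$, and $\chi_1>0$ because $\frac{3\pi}{2\theta}\le\frac{3\pi}{4}$. For odd $\theta\ge3$, we have $\chi_0>0$, and $\chi_1\le-\frac{3}{\theta}<-\frac{1}{2\theta}$. When $\theta=3$ this is simply $\chi_1=-1$.
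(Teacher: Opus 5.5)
Your proposal is correct and follows the paper's proof essentially step for step. In (a) you use the same residual vector $R$ with $R(0)=\frac{n}{2}(1,1)$ and $R(t_2)=g(1,1)$, and in (b) the same reduction to $\eta=2$ and $y=-x$, with Jordan's inequality for $\chi_0,\chi_1$. The only local difference is the proof of $g<0$. You substitute the defining equation of $e$ to get $g=\sin e\bigl((2\theta^2-\tfrac32)\sin e-\cos(2\theta e)\bigr)$ and then use $e\le\frac{\pi}{8\theta^2}$ and $\cos(2\theta e)\ge\frac{\sqrt{15}}{4}$, which is a cleaner alternative to the paper's arcsin estimate $e\le\frac{1}{4\theta^2}+\frac{1}{80\theta^4}$.
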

\begin{proof}
We use the function $R=(R_1,R_2)$ of \eqref{eq:R}. A law $\mu$ is a UAM time law if and only if $\displaystyle\int R\,d\mu=0$.

(a) Put $\phi=\theta t$. By Table~\ref{tab:families}, $\omega_1=\theta(2\theta-1)$, $\omega_2=\theta(2\theta+1)$, $\omega_3=2\theta$ and $\displaystyle P-Q=\frac{1}{2\theta}$, so
\[
 R_1-R_2=\cos((2\theta-1)\phi)-\cos((2\theta+1)\phi)-\frac{1-\cos(2\phi)}{2\theta}
 =2\sin\phi\left(\sin(2\theta\phi)-\frac{\sin\phi}{2\theta}\right).
\]
The function $\displaystyle\sin(2\theta e)-\frac{\cos e}{2\theta}$ increases on $\displaystyle\left[0,\frac{\pi}{4\theta}\right]$, is negative at $0$ and positive at $\displaystyle\frac{\pi}{4\theta}$, so $e$ exists and is unique. At $\displaystyle\phi_2=\theta t_2=\frac{\pi}{2}+(-1)^\theta e$ one has $\sin\phi_2=\cos e$ and $\sin(2\theta\phi_2)=\sin(2\theta e)$, hence $R_1(t_2)=R_2(t_2)$. Using $\cos(2\phi_2)=-\cos(2e)$, $\cos((2\theta-1)\phi_2)=\sin((2\theta-1)e)$ and $\displaystyle-\kappa-2P=-\frac{1}{2\theta}$, the common value is $R_1(t_2)=g$.

To see $g<0$, note that $\displaystyle2\theta e\le\arcsin\frac{1}{2\theta}$. For $\displaystyle0\le s\le\frac{1}{4}$ one has $\displaystyle\frac{1}{\sqrt{1-s^2}}-1\le\frac{3s^2}{5}$, hence $\displaystyle\arcsin y\le y+\frac{y^3}{5}$ for $\displaystyle0\le y\le\frac{1}{4}$, and
\[
 e\le e_\theta:=\frac{1}{4\theta^2}+\frac{1}{80\theta^4}.
\]
With $\sin u\le u$ for $u\ge0$ and $\displaystyle2P=2\theta^2-\frac{3}{2}+\frac{1}{2\theta}\le2\theta^2$, the number $g$ is at most the value of the increasing function $\displaystyle u\mapsto(2\theta-1)u-\frac{1}{2\theta}+2\theta^2u^2$ at $u=e_\theta$, which is
\[
 -\frac{1}{8\theta^2}+\frac{1}{40\theta^3}+\frac{1}{3200\theta^6}<0.
\]
Finally $\displaystyle R(0)=\frac{n}{2}(1,1)$ and $R(t_2)=g(1,1)$, so $pR(0)+(1-p)R(t_2)=0$ with $0<p<1$.

(b) Here $\Delta=2\theta$, so $\cos(\Delta t_j)=-1$ and $\eta=2$. By Table~\ref{tab:families} the line at $\eta=2$ requires
\[
 x=\kappa+2P=-\frac{1}{2\theta},\qquad y=\kappa+2Q=\frac{1}{2\theta}.
\]
Since $k-s=(k-r)+2\theta$, every $t$ with $\cos(2\theta t)=-1$ has $\cos((k-s)t)=-\cos((k-r)t)$, so $y=-x$ for every law concentrated on such times. From $\displaystyle\frac{k-r}{2\theta}=\frac{(2\theta+1)(\theta-1)}{2\theta}=\theta-\frac{1}{2}-\frac{1}{2\theta}$ one obtains
\[
 \chi_j=(-1)^{\theta+j+1}\sin\frac{(2j+1)\pi}{2\theta}.
\]
Thus $\chi_0$ and $\chi_1$ have opposite signs. The negative one has absolute value at least $\displaystyle\frac{1}{\theta}$ by $\displaystyle\sin u\ge\frac{2u}{\pi}$ on $\displaystyle\left[0,\frac{\pi}{2}\right]$, since it is $\chi_0$ when $\theta$ is even and $\chi_1$ with $\displaystyle\frac{3\pi}{2\theta}\le\frac{\pi}{2}$ when $\theta\ge3$ is odd. Hence $0<p_0<1$ and $\displaystyle p_0\chi_0+(1-p_0)\chi_1=-\frac{1}{2\theta}$, which gives the required moments.
\end{proof}

The laws of Propositions~\ref{prop:family-v} and~\ref{prop:family-iii} use two observation times, and one time does not suffice. For this we use the following $p$-adic form of Niven's theorem. For a rational number $q$ and a prime $p$ let $\operatorname{ord}_p^-(q)=\max\{0,-v_p(2q)\}$, where $v_p$ is the $p$-adic valuation.

\begin{lemma}[Rational cosines]\label{lem:rational-cos}
Let $\omega,\omega'$ be positive integers and $t$ a real number such that $q=\cos(\omega t)$ and $q'=\cos(\omega't)$ are rational. Then $\omega'\operatorname{ord}_p^-(q)=\omega\operatorname{ord}_p^-(q')$ for every prime $p$.
\end{lemma}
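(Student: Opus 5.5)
The plan is to compare the $p$-adic sizes of both rational cosines through the single number $\cos(\omega\omega' t)$. That number can be written as an integer polynomial in either cosine. Write $u=2q$ and $u'=2q'$, so that $\operatorname{ord}_p^-(q)=\max\{0,-v_p(u)\}$. The key tool is the normalized Chebyshev (Dickson) polynomial $V_m$. It is defined by $V_0=2$, $V_1=X$ and $V_m=XV_{m-1}-V_{m-2}$, and it satisfies $V_m(2\cos\phi)=2\cos(m\phi)$. By induction, $V_m$ is monic of degree $m$ with integer coefficients for every $m\ge1$.

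First apply this with $\phi=\omega t$ and $m=\omega'$, and then with $\phi=\omega' t$ and $m=\omega$. This gives the common value
\[
 w:=2\cos(\omega\omega' t)=V_{\omega'}(u)=V_{\omega}(u'),
\]
which is a rational number. The next step is a valuation lemma. Let $V$ be a monic integer polynomial of degree $m\ge1$ and let $u$ be rational. Then $\max\{0,-v_p(V(u))\}=m\max\{0,-v_p(u)\}$.

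The lemma splits into two cases. If $v_p(u)\ge0$, then $u$ is a $p$-adic integer, so $V(u)$ is one too, and both sides are $0$. If $v_p(u)=-a<0$, the leading term $u^m$ has valuation $-ma$. Every other term $c_ju^j$ with $j<m$ and $c_j\in\Z$ has valuation at least $-ja>-ma$. By the strict ultrametric inequality, $v_p(V(u))=-ma$, and hence both sides equal $ma$.

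Applying the lemma to both expressions for $w$ gives
\[
 \omega'\operatorname{ord}_p^-(q)=\max\{0,-v_p(w)\}=\omega\operatorname{ord}_p^-(q'),
\]
which is the claim.

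I expect no real obstacle here. The only point needing care is to verify the integrality and monicity of $V_m$ for $m\ge1$, which is immediate from the recursion. The proof also needs the strict dominance of the leading term when $v_p(u)<0$. This strictness is exactly why the normalization $2\cos$ is used rather than $\cos$: the ordinary Chebyshev polynomial $\mathsf T_m$ has leading coefficient $2^{m-1}$, which would disturb the valuation at $p=2$.
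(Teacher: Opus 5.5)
Your proof is correct, but it takes a different route from the paper's. The paper passes to the algebraic numbers $a=e^{i\omega t}$ and $b=e^{i\omega' t}$, which satisfy $a^{\omega'}=b^{\omega}$. It fixes a prime $\mathfrak P$ above $p$ in a number field containing both. From $a+a^{-1}=2q$ and $v_{\mathfrak P}(a^{-1})=-v_{\mathfrak P}(a)$ it deduces $|v_{\mathfrak P}(a)|=e_{\mathfrak P}\operatorname{ord}_p^-(q)$, and the claim follows by applying $v_{\mathfrak P}$ to $a^{\omega'}=b^{\omega}$.

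You instead stay inside $\mathbb Q$. The monic integer Dickson polynomials give the common rational value $2\cos(\omega\omega' t)=V_{\omega'}(2q)=V_{\omega}(2q')$. Leading-term dominance then shows that $\max\{0,-v_p(\cdot)\}$ is multiplied by the degree. Since $V_m(a+a^{-1})=a^m+a^{-m}$, your polynomial identity is the trace shadow of the power relation the paper uses, so both arguments encode the same homogeneity in $\omega$.

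Your version is more elementary: it avoids number fields, prime ideals and ramification indices, and it makes transparent why $\operatorname{ord}_p^-$ is defined through $2q$ rather than $q$. The paper's version is more conceptual. It identifies $\operatorname{ord}_p^-(q)$ with the normalized valuation $|v_{\mathfrak P}(e^{i\omega t})|/e_{\mathfrak P}$ of the unit-modulus number itself, after which the statement is just $v_{\mathfrak P}(a^m)=m\,v_{\mathfrak P}(a)$.
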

\begin{proof}
Put $a=e^{i\omega t}$ and $b=e^{i\omega't}$, so that $a^{\omega'}=b^{\omega}$. The number $a$ is a root of $X^2-2qX+1$, whose other root is $a^{-1}$. Fix a prime ideal $\mathfrak P$ above $p$ in a number field containing $a$ and $b$, and let $v_{\mathfrak P}$ be its valuation, so that $v_{\mathfrak P}(p)=e_{\mathfrak P}$ is the ramification index. If $v_p(2q)\ge0$, then $a$ and $a^{-1}$ are integral at $\mathfrak P$, so $v_{\mathfrak P}(a)=0$. If $v_p(2q)<0$, then $v_{\mathfrak P}(a)=-v_{\mathfrak P}(a^{-1})$ and $v_{\mathfrak P}(a+a^{-1})=e_{\mathfrak P}v_p(2q)<0$, so $|v_{\mathfrak P}(a)|=-e_{\mathfrak P}v_p(2q)$. In both cases $|v_{\mathfrak P}(a)|=e_{\mathfrak P}\operatorname{ord}_p^-(q)$, and likewise $|v_{\mathfrak P}(b)|=e_{\mathfrak P}\operatorname{ord}_p^-(q')$. Applying $v_{\mathfrak P}$ to $a^{\omega'}=b^{\omega}$ proves the lemma.
\end{proof}

\begin{proposition}[No single observation time]\label{prop:no-single-time}
No strongly regular graph whose parameters or complementary parameters lie in (iii) with $\theta\ge2$ or in (v) admits instantaneous uniform mixing.
\end{proposition}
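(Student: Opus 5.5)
The plan is to combine the one-atom degeneracy $\det C=0$ with the moment line, so that the candidate cosine values are forced. Lemma~\ref{lem:rational-cos} then rules those values out arithmetically. Throughout, a point mass $\delta_t$ is a UAM law exactly when $R(t)=0$, i.e.\ when $(x,y,z)=(\cos\omega_1t,\cos\omega_2t,\cos\omega_3t)$ lies on \eqref{eq:line}.

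\textbf{Step 1: the quadratic in $\eta$.} For a point mass the three phase vectors are planar, so $\det C(x,y,z)=0$. Since $\omega_2=\omega_1+\omega_3$, this is $(y-xz)^2=(1-x^2)(1-z^2)$. Substituting $x=\kappa+P\eta$, $y=\kappa+Q\eta$, $z=1-\eta$ gives a polynomial equation in $\eta$. The root $\eta=0$ (where $z=1$) I discard at once, since there $x=\kappa<-1$. Dividing by $\eta$ leaves a quadratic, whose two remaining roots I will compute for each of the four families (v), (v) compl., (iii), (iii) compl., using the constants in Table~\ref{tab:families}. I expect these roots to be rational. I will keep only roots with $|x|,|y|,|z|\le1$. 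For (v) one admissible root should be the one used in Proposition~\ref{prop:family-v}:
\[
 \eta=\tfrac{2(n-1)}{n},\qquad y=1,\qquad x=z=-\tfrac{n-2}{n},
\]
with $x$ and $y$ swapped for the complement. For (iii) and its complement the analogous boundary points are $\eta=2$, $z=-1$ and $x=-y=\mp\frac{1}{2\theta}$, as in Proposition~\ref{prop:family-iii}(b). Whichever other root appears, I will either exclude it by the box constraints or treat it in the same way.

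\textbf{Step 2: the arithmetic obstruction.} At every surviving root the three cosines are rational, and the frequencies $\omega_1,\omega_2,\omega_3$ are positive integers. I then apply Lemma~\ref{lem:rational-cos} to a suitable pair of frequencies.
\begin{itemize}
 \item[] For (v), take the pair $(\omega_2,\omega_3)$. Since $y=1$, every $\operatorname{ord}_p^-(y)$ vanishes, so every $\operatorname{ord}_p^-(z)$ vanishes and $2z\in\Z$. But $2z=-2+\frac4n$ is not an integer for $n\ge10$. The complement is identical with $\omega_1$ in place of $\omega_2$.
 \item[] For (iii) and its complement, take the pair $(\omega_3,\omega_1)$. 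Here $z=-1$ gives $\operatorname{ord}_p^-(z)=0$, while $2x=\pm\frac1\theta$ has $\operatorname{ord}_p^-(x)=v_p(\theta)>0$ for any prime $p\mid\theta$. This contradicts the lemma, since $\theta\ge2$ has a prime divisor.
\end{itemize}

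\textbf{Main obstacle.} The hardest part is Step 1: carrying out the quadratic exactly for each of the four families, confirming that its roots are rational, and checking the box constraints at each root. If a root turns out to be irrational, I would first try to exclude it by $|x|\le1$ or $|y|\le1$, using the bounds on $\eta$ coming from $x\ge-1$ and $y\le1$ as in Lemma~\ref{lem:three-ineq}. Only as a fallback would I use a Gelfond--Schneider argument as in Lemma~\ref{lem:no-single-time}. I expect the computation to reduce to the two boundary points already found in Propositions~\ref{prop:family-v} and~\ref{prop:family-iii}, because those are exactly where the two-atom laws place their nonzero atom.
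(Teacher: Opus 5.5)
Your framework is the paper's: restrict $\det C$ to the moment line, discard $\eta=0$, and rule out the remaining zeros with Lemma~\ref{lem:rational-cos}. The two zeros you treat are handled correctly. Your remark that a coordinate equal to $\pm1$ forces $2z\in\Z$ is a tidy alternative to the paper's choice of an odd prime dividing $n$.

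The gap is in Step~1: your expectation that the quadratic reduces to those boundary points is false. With $u=1+z$, the restricted determinant vanishes at $u\in\{2,\frac{2}{n},\frac{2n}{(n-2)^2}\}$ for (v). For (iii) it vanishes at $u\in\{2,0,\frac{2}{\Theta}\}$, where $\Theta=(4\theta^2-1)(\theta^2-1)$. So each family has an extra zero:
\[
 (x,y,z)=\left(-1,\ \frac{n^2-6n+4}{(n-2)^2},\ -1+\frac{2n}{(n-2)^2}\right)
 \quad\text{and}\quad
 x=\frac{2\theta-3}{2(2\theta+1)(\theta-1)},\ \ z=\frac{2}{\Theta}-1,
\]
with $x$ and $y$ interchanged for the complements. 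Both points satisfy $|x|,|y|,|z|\le1$. For the Petersen graph the first is the point with $z=-\frac{11}{16}$, and for $\theta=2$ the second is $(\frac{1}{10},-\frac{7}{18},-\frac{43}{45})$. So the box constraints cannot discard them, and your Step~2 says nothing about them.

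For (v) your trick does extend. Use the pair $(\omega_1,\omega_3)$: there $x=-1$ forces $2z\in\Z$, which contradicts $0<\frac{4n}{(n-2)^2}<1$ for $n\ge10$. For the complement use $y=-1$ and the pair $(\omega_2,\omega_3)$.

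For (iii), however, no coordinate equals $\pm1$, so ``the same way'' is not available and you need an explicit prime. Take any prime $p\mid 2\theta-1$. It is odd and coprime to $2\theta-3$, $2\theta+1$ and $\theta-1$, so $\operatorname{ord}_p^-(x)=0$. On the other hand $p\mid\Theta$ and $p\nmid 4-2\Theta$ give $\operatorname{ord}_p^-(z)=v_p(\Theta)>0$, contradicting the lemma. For the complement the extra zero has $x=-\frac{2\theta+3}{2(2\theta-1)(\theta+1)}$, and a prime $p\mid 2\theta+1$ works in the same way. Without this case the argument does not cover (iii) and its complement, so the proposal as written is incomplete.
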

\begin{proof}
For a point mass $\delta_t$ the matrix $C(x,y,z)$ is the Gram matrix of three unit vectors in the plane, so $\det C=0$, and $x$, $y$, $z$ are the cosines of $\omega_1t$, $\omega_2t$, $\omega_3t$. Write $u=1+z$. By Table~\ref{tab:families} the restriction of $\det C$ to the moment line is
\[
 \det C=\frac{n(u-2)(nu-2)\left((n-2)^2u-2n\right)}{8(n-1)^2}
\]
for (v) and its complement, and
\[
 \det C=\frac{n\,u(u-2)(\Theta u-2)}{8\theta^2},\qquad \Theta=(4\theta^2-1)(\theta^2-1),
\]
for (iii) and its complement. The zero $u=2$ means $z=1$ and $x=\kappa<-1$, which is impossible. At every other zero the coordinates are rational, and Lemma~\ref{lem:rational-cos} applied to two of them with a suitable prime gives a contradiction, because the frequencies are positive.
\begin{itemize}
 \item (v), $\displaystyle u=\frac{2}{n}$. Here $y=1$ and $\displaystyle z=-\frac{n-2}{n}$. For an odd prime $p$ dividing $n$ one has $\operatorname{ord}_p^-(y)=0$ and $\operatorname{ord}_p^-(z)=v_p(n)>0$.
 \item (v), $\displaystyle u=\frac{2n}{(n-2)^2}$. Here $x=-1$ and $\displaystyle z=-1+\frac{2n}{(n-2)^2}$. Since $n-2=4\theta(\theta+1)$ and $v_2(4n)=3$, one has $\operatorname{ord}_2^-(x)=0$ and $\operatorname{ord}_2^-(z)=2v_2(n-2)-3>0$.
 \item (iii), $u=0$. Here $z=-1$ and $\displaystyle x=\frac{1}{2\theta}$. For a prime $p$ dividing $\theta$ one has $\operatorname{ord}_p^-(x)=v_p(\theta)>0=\operatorname{ord}_p^-(z)$.
 \item (iii), $\displaystyle u=\frac{2}{\Theta}$. Here $\displaystyle x=\frac{2\theta-3}{2(2\theta+1)(\theta-1)}$ and $\displaystyle z=\frac{2}{\Theta}-1$. For a prime $p$ dividing $2\theta-1$ one has $\operatorname{ord}_p^-(x)=0$ and $\operatorname{ord}_p^-(z)=v_p(\Theta)>0$.
\end{itemize}
For the complements the roles of $x$ and $y$ are interchanged. In the last two items this means $\displaystyle x=-\frac{1}{2\theta}$ and $\displaystyle x=-\frac{2\theta+3}{2(2\theta-1)(\theta+1)}$, and then a prime $p$ dividing $\theta$, respectively $2\theta+1$, works in the same way.
\end{proof}

\subsection{The classification}

\begin{theorem}[Strongly regular graphs with UAM]\label{thm:classification}
Let $G$ be a connected strongly regular graph on $n$ vertices that is not complete. Then $G$ admits a UAM time law if and only if one of the following holds.
\begin{enumerate}
 \item[(a)] $G$ is a conference graph, and $n$ is not a square or $n=9$.
 \item[(b)] $G$ or its complement has parameters $(4\theta^2+4\theta+2,\,2\theta^2+\theta,\,\theta^2-1,\,\theta^2)$ with $\theta\ge1$.
 \item[(c)] $G$ or its complement has parameters $(4\theta^2-1,\,2\theta^2,\,\theta^2,\,\theta^2)$ with $\theta\ge2$.
 \item[(d)] $G$ or its complement has parameters $(4\theta^2,\,2\theta^2-\theta,\,\theta^2-\theta,\,\theta^2-\theta)$ or $(4\theta^2,\,2\theta^2+\theta,\,\theta^2+\theta,\,\theta^2+\theta)$ with $\theta\ge2$ even.
 \item[(e)] $G$ is the $4$-cycle $C_4$.
\end{enumerate}
\end{theorem}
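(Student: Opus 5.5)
The proof assembles the earlier results. I would split first into imprimitive and primitive graphs. If $G$ is imprimitive (connected, not complete, complement disconnected), Section~\ref{sec:edge} shows that $G=K_{m,\ldots,m}$ and that UAM holds only for $C_4$. This gives case (e) and excludes every other imprimitive graph. I would also check that none of the parameter sets in (a) to (d) is imprimitive, so that (e) is disjoint from the rest. This holds because $\nu\ge1$ and $n-2k+\lambda\ge1$ can be read off in each family with the stated ranges of $\theta$.

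For primitive $G$, necessity comes from Corollary~\ref{cor:chan-families}: a UAM law forces $G$ or its complement into one of (i) to (v). I then go through the families.
\begin{itemize}
\item For (i) and (ii) and their complements, Proposition~\ref{prop:latin} gives UAM exactly when $\theta$ is even. This is (d); note that $\theta\ge2$ there by primitivity.
\item For (iv), if $n$ is not a square the restricted eigenvalues are nonintegral, and Theorem~\ref{thm:conference} supplies a law. If $n$ is a square, Proposition~\ref{prop:square-conference} leaves only $n=9$. Together these give (a). Conversely, any conference graph (so $k=(n-1)/2$, $\lambda=(n-5)/4$, $\nu=(n-1)/4$) falls into (iv): either $n$ is not a square, or $\theta=(\sqrt n-1)/2$ is an integer because $n\equiv1\pmod 4$. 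So (a) is exactly the UAM subclass of (iv).
\item For (v) and its complement, Proposition~\ref{prop:family-v} gives a law. This is (b).
\item For (iii) and its complement, Proposition~\ref{prop:family-iii} gives a law. This is (c).
\end{itemize}

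Sufficiency is then the converse reading of the same propositions. Each listed graph admits UAM by Theorem~\ref{thm:conference} (nonsquare conference), by law \eqref{eq:nine} via Corollary~\ref{cor:parameter} ($n=9$), by Propositions~\ref{prop:family-v} and~\ref{prop:family-iii}, by Proposition~\ref{prop:latin}, or by Section~\ref{sec:edge} ($C_4$).

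The main obstacle is mostly bookkeeping, in three places.
\begin{itemize}
\item The word "conference graph" in (a) has to match (iv), including the case where a complement of a conference graph is again conference.
\item Small cases such as $\theta=1$ in (iii), which gives $n=3$ and $K_3$, must be excluded. Primitivity of each listed parameter set must be verified, which relies on the remark after Theorem~\ref{thm:chan} about the ranges of $\theta$.
\item Complements must be handled symmetrically. Since UAM is not obviously complement-invariant, I rely on each proposition treating the complement explicitly, as they do.
\end{itemize}
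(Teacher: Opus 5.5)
Your proposal is correct and follows the paper's proof essentially step for step. You handle the imprimitive case via Section~\ref{sec:edge}, get necessity for primitive graphs from Corollary~\ref{cor:chan-families}, and then treat the families one by one via Propositions~\ref{prop:latin}, \ref{prop:square-conference}, \ref{prop:family-v}, \ref{prop:family-iii} and Theorem~\ref{thm:conference}. Your extra bookkeeping is correct and only makes explicit what the paper leaves implicit: primitivity of the listed sets, matching ``conference graph'' with (iv), and the restriction $\theta\ge2$, which matters because $C_4$ is the complement of (i) with $\theta=1$.
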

\begin{proof}
An imprimitive $G$ admits UAM exactly when $G=C_4$ by Section~\ref{sec:edge}. A primitive $G$ with UAM belongs, up to complement, to one of the families (i) to (v) by Corollary~\ref{cor:chan-families}. Families (i) and (ii) admit UAM exactly for even $\theta$ by Proposition~\ref{prop:latin}, which is (d). Family (iii) and its complement admit UAM by Proposition~\ref{prop:family-iii}, which is (c). Family (v) and its complement admit UAM by Proposition~\ref{prop:family-v}, which is (b). In family (iv), a nonsquare order gives UAM by Theorem~\ref{thm:conference}, and a square order gives UAM exactly when $n=9$ by Proposition~\ref{prop:square-conference}, which is (a).
\end{proof}

\begin{corollary}[Number of observation times]\label{cor:times}
For the graphs of Theorem~\ref{thm:classification}, one observation time suffices exactly in case (d), for $n=9$ and for $C_4$. In these cases every UAM law is concentrated on a discrete set. In all other cases exactly two observation times are necessary and sufficient.
\end{corollary}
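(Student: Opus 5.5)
We go through the cases of Theorem~\ref{thm:classification} one at a time. For each case we decide whether a point mass is a UAM law. Wherever one time fails, Theorem~\ref{thm:twoatoms} then says that exactly two times are needed.

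\emph{One time suffices and UAM laws are discrete.} In case (d), Proposition~\ref{prop:latin} shows that every point of $S_\theta$ is an IUM time. It also shows that every UAM law is concentrated on the discrete set $S_\theta$. For $n=9$, the law \eqref{eq:nine} is a point mass. The computation after it shows that every UAM law is concentrated where $\cos(3t)=-\tfrac12$, which is a discrete set. For $C_4$, Section~\ref{sec:edge} gives the point mass at $\pi/4$. For discreteness we use Theorem~\ref{thm:line} with $Q=0$, which forces $y=\int\cos(\omega_2 t)\,d\mu=-1$. So $\mu$ lives on $\{t:\cos(\omega_2 t)=-1\}$, which is discrete because $\omega_2=k-s>0$.

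\emph{Two times are necessary and sufficient in the remaining cases.} Two cases remain. The first is case (a) with $n$ not a square, which is settled by Corollary~\ref{cor:conference-twoatoms}. The second consists of cases (b) and (c). Here Propositions~\ref{prop:family-v} and~\ref{prop:family-iii} supply explicit two-atom laws, and Proposition~\ref{prop:no-single-time} rules out any point mass. In every UAM case Theorem~\ref{thm:twoatoms} gives at most two times. Consequently, whenever no point mass works, the two-atom law must have two distinct times and weights strictly between $0$ and $1$.

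\emph{Exhaustiveness.} Finally we check that the list is complete. A conference graph of square order only admits UAM when $n=9$, by Proposition~\ref{prop:square-conference}. Together with the case split of Theorem~\ref{thm:classification}, this means the cases above cover every graph in the theorem.

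The only step needing real care is the discreteness claim for $C_4$. There we have to confirm that $Q=0$ forces the moment $y$ to its extreme value $-1$, so that the support condition follows. Everything else is direct bookkeeping over results already proved.
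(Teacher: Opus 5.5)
Your proposal is correct and follows essentially the same route as the paper. It uses Proposition~\ref{prop:latin} for case (d), the order-nine computation for $n=9$, and the fact that $Q=0$ forces $\int\cos(4t)\,d\mu=-1$ for $C_4$. Theorem~\ref{thm:twoatoms} gives sufficiency of two times, and Lemma~\ref{lem:no-single-time} (which you invoke through Corollary~\ref{cor:conference-twoatoms}) together with Proposition~\ref{prop:no-single-time} exclude a single time. Your exhaustiveness paragraph is harmless but redundant, since Theorem~\ref{thm:classification}(a) already restricts the square orders to $n=9$.
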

\begin{proof}
Case (d) is Proposition~\ref{prop:latin}, the order nine is Section~\ref{sec:examples}, and for $C_4$ the moment line forces $y=\displaystyle\int\cos(4t)\,d\mu=-1$. Two times always suffice by Theorem~\ref{thm:twoatoms}. A single time is impossible for conference graphs of nonsquare order by Lemma~\ref{lem:no-single-time}, and for the families (iii) and (v) and their complements by Proposition~\ref{prop:no-single-time}.
\end{proof}

\subsection{Bounded densities}

Write $R'$ for the derivative of the function $R$ in \eqref{eq:R}.

\begin{lemma}[Spreading two atoms]\label{lem:spread}
Let $\mu=p\,\delta_{t_1}+(1-p)\,\delta_{t_2}$ be a UAM time law with $0<p<1$ and $R(t_1)\ne0$. If $R(t_1)$ and $R'(t_1)$ are linearly independent, then $G$ admits a UAM time law with a bounded density of compact support.
\end{lemma}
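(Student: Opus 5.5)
The plan is to smear each of the two atoms into a short uniform block and then correct the position and weight of the first block so that $\int R\,d\mu$ stays exactly zero. Because $R$ is built from cosines, it is even in $t$ and smooth on all of $\R$. As noted in Section~\ref{sec:reduction}, a law on $\R$ may be folded by $t\mapsto|t|$ without changing any cosine moment. Folding a bounded compactly supported density gives a bounded compactly supported density on $[0,\infty)$, at most doubled. So I may work with laws on $\R$ and need not worry when $t_1$ or $t_2$ equals $0$.

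For $h\ge0$ and $a\in\R$ define the block averages
\[
 A_1(a,h)=\int_0^1 R(t_1+a+hv)\,dv,\qquad A_2(h)=\int_0^1 R(t_2+hv)\,dv .
\]
These are smooth in $(a,h)$, including at $h=0$, where $A_1(a,0)=R(t_1+a)$ and $A_2(0)=R(t_2)$. For $h>0$, $A_1(a,h)$ is $\int R\,d\nu$ for the uniform law $\nu$ on $[t_1+a,t_1+a+h]$, and similarly for $A_2$. Consider
\[
 \Phi(q,a,h)=q\,A_1(a,h)+(1-q)\,A_2(h)\in\R^2 .
\]
The UAM hypothesis $\int R\,d\mu=0$ says $\Phi(p,0,0)=pR(t_1)+(1-p)R(t_2)=0$.

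The key step is the implicit function theorem in the variables $(q,a)$ at the point $(p,0,0)$. The partial derivatives there are
\[
 \partial_q\Phi=R(t_1)-R(t_2)=\frac{R(t_1)}{1-p},\qquad \partial_a\Phi=p\,R'(t_1).
\]
The first uses $R(t_2)=-\frac{p}{1-p}R(t_1)$. By the hypothesis that $R(t_1)$ and $R'(t_1)$ are linearly independent, and since $0<p<1$, this $2\times2$ Jacobian is invertible. Hence there are continuous functions $q(h),a(h)$ on a neighbourhood of $h=0$ with $q(0)=p$, $a(0)=0$ and $\Phi(q(h),a(h),h)=0$.

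Fix a small $h>0$ with $0<q(h)<1$. The law
\[
 d\nu_h=\frac{q(h)}{h}\mathbf 1_{[t_1+a(h),\,t_1+a(h)+h]}\,dt+\frac{1-q(h)}{h}\mathbf 1_{[t_2,\,t_2+h]}\,dt
\]
is a probability law with bounded density of compact support on $\R$, and it satisfies $\int R\,d\nu_h=0$. Its fold to $[0,\infty)$ keeps the moments $x,y,z$, so it is a UAM time law by Theorem~\ref{thm:line}, and its density is bounded by $2/h$.

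The only delicate point is the invertibility of the Jacobian. It rests on two facts: $R(t_2)$ is a negative multiple of $R(t_1)$, which turns the $q$-direction into the $R(t_1)$ direction, and the independence hypothesis. Everything else is routine smoothness of the block averages in $h$ down to $h=0$, which follows from the representation $\int_0^1 R(\cdot+hv)\,dv$.
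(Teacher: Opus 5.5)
Your proof is correct, but it takes a different route from the paper's. The paper argues convex-geometrically. Because $R(t_1)$ and $R'(t_1)$ are independent, times $t'<t_1<t''$ close to $t_1$ have $R(t')$ and $R(t'')$ strictly on opposite sides of the line $\R R(t_1)$. So $0$ is an interior point of the triangle with vertices $R(t')$, $R(t'')$, $R(t_2)$. Interior points survive small perturbations of the vertices, so uniform laws on three short intervals around $t'$, $t''$, $t_2$, combined with barycentric weights, give the density.

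You keep two blocks instead and solve $\Phi(q,a,h)=0$ by the implicit function theorem. The same nondegeneracy appears there as invertibility of the Jacobian $\bigl[R(t_1)/(1-p),\ pR'(t_1)\bigr]$. The paper's version uses only first-order information at $t_1$ and continuity of $R$ elsewhere. It is also insensitive to the shape of the blocks, since any laws concentrated on $I_1,I_2,I_3$ would do. Yours produces a law supported on just two intervals and a continuous family $\nu_h$ converging to $\mu$ as $h\to0$. The price is that the implicit function theorem needs $\Phi$ to be $C^1$ in $(q,a,h)$; this holds here because $R$ is a trigonometric polynomial.

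Your folding step is harmless but unnecessary. Since $R'(0)=0$, the independence hypothesis already forces $t_1>0$, as the paper notes. Also $[t_2,t_2+h]\subset[0,\infty)$ automatically because $t_2\ge0$.
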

\begin{proof}
Since $pR(t_1)+(1-p)R(t_2)=0$, the vector $R(t_2)$ is a negative multiple of $R(t_1)$. Let $L$ be the line spanned by $R(t_1)$. As $R'(0)=0$, we have $t_1>0$. By the linear independence, all times $t'<t_1<t''$ sufficiently close to $t_1$ have $t'>0$, and $R(t')$, $R(t'')$ lie strictly on opposite sides of $L$. The segment joining $R(t')$ and $R(t'')$ meets $L$ at a point close to $R(t_1)$, and this point and $R(t_2)$ lie on opposite sides of $0$. Hence $0$ is an interior point of the triangle with vertices $R(t')$, $R(t'')$ and $R(t_2)$.

Interior points of a triangle stay interior under small perturbations of its vertices, and $R$ is continuous. So there are compact intervals $I_1,I_2,I_3\subset[0,\infty)$ of positive length, containing $t'$, $t''$ and $t_2$ respectively, such that $0$ is an interior point of the triangle with vertices $\bar R_1,\bar R_2,\bar R_3$ whenever each $\bar R_j$ lies in the closed convex hull of $R(I_j)$. Let $\mu_j$ be the uniform probability law on $I_j$ and $\displaystyle\bar R_j=\int R\,d\mu_j$. Then $0=p_1\bar R_1+p_2\bar R_2+p_3\bar R_3$ with positive weights of sum one, and $p_1\mu_1+p_2\mu_2+p_3\mu_3$ is a UAM time law with a bounded density supported in $I_1\cup I_2\cup I_3$.
\end{proof}

\begin{proposition}[Bounded densities]\label{prop:density}
A connected strongly regular graph that is not complete and admits UAM has a UAM time law with a bounded density of compact support if and only if it does not admit instantaneous uniform mixing.
\end{proposition}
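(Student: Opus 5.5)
The proof splits along Theorem~\ref{thm:classification} and Corollary~\ref{cor:times}. First I would check which graphs with UAM also admit instantaneous uniform mixing. By Lemma~\ref{lem:no-single-time} and Proposition~\ref{prop:no-single-time}, none of the graphs in cases (a) with $n$ nonsquare, (b) or (c) has IUM. By Corollary~\ref{cor:times}, the graphs with UAM and IUM are exactly those in case (d), the order $n=9$, and $C_4$.

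\emph{Graphs with IUM have no density law.} For these three cases every UAM law is concentrated on a discrete set.
\begin{itemize}
\item In case (d) the set is $S_\theta$, by Proposition~\ref{prop:latin}.
\item For $n=9$ the set is $\{\cos(3t)=-\frac12\}$, by the computation in Section~\ref{sec:examples}.
\item For $C_4$ the moment line forces $\int\cos(4t)\,d\mu=-1$, so the law sits on $\{\cos(4t)=-1\}$.
\end{itemize}
A discrete set has Lebesgue measure zero, so no law with a density, bounded or not, can be a UAM law.

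\emph{Graphs without IUM have a bounded density law.} For conference graphs of nonsquare order, Theorem~\ref{thm:conference} already gives a bounded compactly supported density. For families (iii), (v) and their complements I would apply Lemma~\ref{lem:spread} to the explicit two-atom laws of Propositions~\ref{prop:family-v} and~\ref{prop:family-iii}. This requires an atom $t_1>0$ with $R(t_1)\ne0$ and $R(t_1)$, $R'(t_1)$ linearly independent. Since no point mass is a UAM law, $R$ vanishes at neither atom, so only the independence needs checking.
\begin{itemize}
\item \emph{Family (iii), part (a).} We have $R(t_2)=g(1,1)$, so independence means $(R_1-R_2)'(t_2)\ne0$. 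From the factorisation
\[
 R_1-R_2=2\sin\phi\Bigl(\sin(2\theta\phi)-\frac{\sin\phi}{2\theta}\Bigr),
\]
at $\phi_2$ the bracket vanishes and $\sin\phi_2=\cos e\ne0$. So the derivative is $2\sin\phi_2$ times the derivative of the bracket. That bracket derivative is $\pm$ the derivative of $\sin(2\theta e)-\cos e/(2\theta)$ in $e$, namely $2\theta\cos(2\theta e)+\sin e/(2\theta)>0$ on $(0,\pi/(4\theta))$. Hence it is nonzero.
\item \emph{Family (v) and its complement.} Take $t_1=t_*$; here $\sin(\frac n2 t_*)=0$. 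For (v), $R_2(t_*)=1-\kappa-Q(1-\cos\Delta t_*)$ and $R_2'(t_*)=-Q\Delta\sin(\Delta t_*)$. For $R_1$ one uses $\omega_1=\frac n2-\Delta$. Then the determinant $\det[R(t_*),R'(t_*)]$ reduces to an explicit expression in $n$, $\cos(2\pi/n)$ and $\sin(2\pi/n)$, and I would show it is nonzero. The complement follows by swapping coordinates.
\item \emph{Family (iii), part (b).} Both atoms satisfy $\cos(2\theta t_j)=-1$ and $\sin(2\theta t_j)=0$, so $R'(t_j)=(-\omega_1\sin\omega_1t_j,\,-\omega_2\sin\omega_2t_j)$. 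Moreover $\sin(\omega_2t_j)=-\sin(\omega_1t_j)$. The vector $R(t_j)=(\chi_j-\kappa-2P,\,-\chi_j-\kappa-2Q)$ is explicit in $\chi_j$, so independence reduces to a polynomial inequality in $\chi_j$ and $\theta$.
\end{itemize}

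The main obstacle is these two determinant checks, for (v) and for (iii)(b). If the determinant happens to vanish at the chosen atom, I would first try the other atom. Failing that, I would use the fact that the zero set of $\det[R,R']$ is discrete, being a nonzero trigonometric polynomial. I would then perturb along the curve: on the line through $R(t_1)$, pick a nearby time where $R$ is still on the opposite side of $0$ from $R(t_2)$ and independence holds, and repeat the triangle argument of Lemma~\ref{lem:spread} directly.
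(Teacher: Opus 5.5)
Your plan follows the paper's proof. It makes the same split via Theorem~\ref{thm:classification} and Corollary~\ref{cor:times} and uses the same discrete-support argument in the IUM cases. It uses Theorem~\ref{thm:conference} for nonsquare conference graphs and applies Lemma~\ref{lem:spread} at the same atoms, and your (iii)(a) computation is identical to the paper's. The gap is that for (v) and for the complement of (iii) you never verify that $R$ and $R'$ are independent. ``I would show it is nonzero'' is the entire content of the converse for these families.

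The missing idea is collinearity, which reduces each determinant to one scalar derivative. For (v), the law $(1-p)\delta_0+p\,\delta_{t_*}$ cancels and $R(0)=\frac{n}{2}(1,1)$, so $R(t_*)=g(1,1)$ with $g\neq0$. Using $P-Q=\tau/\Delta=-1$, $\sin(\frac{n}{2}t_*)=0$ and $\sin(\Delta t_*)=-\sin\frac{2\pi}{n}$, one gets
\[
 \det[R(t_*),R'(t_*)]=-g\,(R_1-R_2)'(t_*)=g\,\frac{n}{2}\sin\frac{2\pi}{n}\neq0 .
\]
The complement of (v) is not literally a coordinate swap, since the frequency $\frac{n}{2}-\Delta$ becomes $\frac{n}{2}+\Delta$ and $\tau=+\Delta$. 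The same computation nevertheless gives the same value. For the complement of (iii), $\kappa+2P=-\frac{1}{2\theta}$ and $\kappa+2Q=\frac{1}{2\theta}$ give $R(t_j)=(\chi_j+\frac{1}{2\theta})(1,-1)$. Then $\sin(\Delta t_j)=0$ and $\sin(\omega_2t_j)=-\sin(\omega_1t_j)$ give
\[
 \det[R(t_j),R'(t_j)]=\Bigl(\chi_j+\frac{1}{2\theta}\Bigr)(R_1+R_2)'(t_j)=2\theta\Bigl(\chi_j+\frac{1}{2\theta}\Bigr)\sin(\omega_1t_j).
\]
This is nonzero at $t_0$, because $\omega_1t_0/\pi=(2\theta+1)(\theta-1)/(2\theta)$ is not an integer, as $(2\theta+1)(\theta-1)\equiv-1\pmod\theta$.

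The choice of atom matters here. For $\theta=3$ one has $\omega_1t_1=7\pi$ and hence $R'(t_1)=0$. Your rule ``try the other atom'' handles this, so the perturbation fallback is never needed. As written it would not work anyway. Points $R(t)$ with $t$ near $t_1$ do not in general lie on the line through $R(t_1)$. Moreover, the claim that $\det[R,R']$ is not identically zero is exactly what has to be proved.
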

\begin{proof}
If $G$ admits instantaneous uniform mixing, then Corollary~\ref{cor:times} and its proof show that $G$ is in case (d) of Theorem~\ref{thm:classification}, has order nine, or is $C_4$, and that every UAM law is concentrated on a discrete set. Such a law has no density.

Conversely, let $G$ admit UAM but not instantaneous uniform mixing. By Theorem~\ref{thm:classification} and Corollary~\ref{cor:times}, $G$ is a conference graph of nonsquare order, and then Theorem~\ref{thm:conference} applies, or $G$ or its complement has parameters (iii) or (v). In the latter case we apply Lemma~\ref{lem:spread} to the laws of Propositions~\ref{prop:family-v} and~\ref{prop:family-iii}, whose atoms satisfy $R\ne0$ because there is no instantaneous uniform mixing. It remains to check the linear independence of $R$ and $R'$ at one atom. By \eqref{eq:R} and $\displaystyle P-Q=\frac{\tau}{\Delta}$ from the proof of Lemma~\ref{lem:three-ineq},
\begin{align*}
 (R_1-R_2)'(t)&=-\omega_1\sin(\omega_1t)+\omega_2\sin(\omega_2t)-\tau\sin(\Delta t),\\
 (R_1+R_2)'(t)&=-\omega_1\sin(\omega_1t)-\omega_2\sin(\omega_2t)-(P+Q)\Delta\sin(\Delta t).
\end{align*}
\begin{itemize}
 \item (v) and its complement, at the atom $t_*$. Here $R(t_*)$ is a multiple of $\displaystyle R(0)=\frac{n}{2}(1,1)$, so it suffices that $(R_1-R_2)'(t_*)\ne0$. For (v) one has $\tau=-\Delta$, $\omega_2t_*\in2\pi\Z$ and $\displaystyle\sin(\omega_1t_*)=-\sin(\Delta t_*)=\sin\frac{2\pi}{n}$. For the complement one has $\tau=\Delta$, $\omega_1t_*\in2\pi\Z$ and $\displaystyle\sin(\omega_2t_*)=\sin(\Delta t_*)=-\sin\frac{2\pi}{n}$. In both cases $\displaystyle(R_1-R_2)'(t_*)=-\frac{n}{2}\sin\frac{2\pi}{n}\ne0$.
 \item (iii), at the atom $t_2$. Here $R(t_2)=g(1,1)$. With $\phi=\theta t$ the bracket in $\displaystyle R_1-R_2=2\sin\phi\left(\sin(2\theta\phi)-\frac{\sin\phi}{2\theta}\right)$ vanishes at $\phi_2=\theta t_2$, so
 \[
  (R_1-R_2)'(t_2)=2\theta\sin\phi_2\left(2\theta\cos(2\theta\phi_2)-\frac{\cos\phi_2}{2\theta}\right)
  =(-1)^\theta\,2\theta\cos e\left(2\theta\cos(2\theta e)+\frac{\sin e}{2\theta}\right),
 \]
 which is nonzero because $\displaystyle0<2\theta e<\frac{\pi}{2}$.
 \item The complement of (iii), at the atom $t_0$. Here $\displaystyle R(t_0)=\left(\chi_0+\frac{1}{2\theta}\right)(1,-1)$, so it suffices that $(R_1+R_2)'(t_0)\ne0$. Since $\Delta t_0=\pi$, one has $\sin(\Delta t_0)=0$ and $\sin(\omega_2t_0)=-\sin(\omega_1t_0)$, hence $(R_1+R_2)'(t_0)=2\theta\sin(\omega_1t_0)$. This is nonzero, because $\displaystyle\frac{\omega_1t_0}{\pi}=\frac{(2\theta+1)(\theta-1)}{2\theta}$ is not an integer, as $(2\theta+1)(\theta-1)\equiv-1\pmod\theta$.
\end{itemize}
\end{proof}

\section{A correction to the classification of instantaneous uniform mixing}\label{sec:gmr}
As printed, \cite[Theorem~5.1]{GMR} states that a primitive strongly regular graph with adjacency matrix $A$ has instantaneous uniform mixing exactly when (a) $J-2A$ is a regular symmetric Hadamard matrix of order $4\theta^2$ with constant diagonal and positive row sum and $\theta$ is even, or (b) $J-2A-2I$ is such a matrix and $\theta$ is odd, or (c) the graph is the Paley graph of order nine. In \cite{GMR} the letter $\theta$ also denotes the positive restricted eigenvalue of the graph, for instance in \cite[Lemma~5.2]{GMR}, while the order $4\theta^2$ forces $\displaystyle\theta=\frac{\sqrt n}{2}$. For the graphs in (b) these two readings differ by one. The statement has two defects.
\begin{itemize}
 \item Under either reading, the condition of positive row sum excludes every graph with parameters (ii) and every complement of a graph with parameters (i). For these graphs the Hadamard matrix among $J-2A$ and $J-2A-2I$ has row sum $-2\theta$, and by Proposition~\ref{prop:latin} they admit instantaneous uniform mixing when $\theta$ is even.
 \item Under the literal reading $\displaystyle\theta=\frac{\sqrt n}{2}$ the parity in (b) has to be even. The parity also fails in \cite[Lemma~5.2(b)]{GMR} for the graphs with parameters (ii), where both readings agree. Under the eigenvalue reading, the hypothesis of \cite[Lemma~5.2]{GMR} does not cover the intended complements of (i) and (ii). For example, the graph with parameters $(16,9,4,6)$ has positive restricted eigenvalue $1$, whereas its order is $4\cdot2^2$.
\end{itemize}
The following statement avoids $\theta$ altogether. For a regular graph the matrices $J-2A$ and $J-2A-2I$ are automatically symmetric and regular with constant diagonal.

\begin{theorem}\label{thm:gmr-corrected}
A primitive strongly regular graph $X$ on $n$ vertices with adjacency matrix $A$ admits instantaneous uniform mixing if and only if $X$ is the Paley graph of order nine, or $n$ is divisible by $16$ and one of the matrices $J-2A$ and $J-2A-2I$ is a Hadamard matrix. In the second case the times of instantaneous uniform mixing are exactly the odd multiples of $\displaystyle\frac{\pi}{\sqrt n}$.
\end{theorem}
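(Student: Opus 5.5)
The plan is to derive this from the classification, Theorem~\ref{thm:classification}, and its refinement Corollary~\ref{cor:times}, by translating the parameter families into the Hadamard condition. Instantaneous uniform mixing at $t$ means that $\delta_t$ is a UAM law. So a primitive $X$ with IUM admits UAM, and by Corollary~\ref{cor:times} it falls into case (d) or has order nine ($C_4$ is imprimitive and is excluded). Conversely, each graph in case (d) has IUM by Proposition~\ref{prop:latin}, and so does each graph of order nine by \eqref{eq:nine}. Every graph with parameters $(9,4,1,2)$ is the Paley graph of order nine, since these parameters force the $3\times3$ rook's graph. Hence the theorem reduces to the following claim: a primitive $X$ lies in case (d) if and only if $16\mid n$ and one of $J-2A$, $J-2A-2I$ is a Hadamard matrix.

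\textbf{Spectral characterization.} The matrices $H_0=J-2A$ and $H_1=J-2A-2I$ are symmetric with entries $\pm1$ and lie in the Bose--Mesner algebra. Their eigenvalues are:
\begin{itemize}
 \item for $H_0$: $n-2k$ on the all-ones vector, and $-2r$, $-2s$ on the restricted eigenspaces;
 \item for $H_1$: $n-2k-2$, $-2r-2$ and $-2s-2$.
\end{itemize}
A symmetric matrix satisfies $HH^{\mathsf T}=nI$ exactly when all its eigenvalues are $\pm\sqrt n$. For $H_0$ this means $r=-s=\sqrt n/2$ and $n-2k=\pm\sqrt n$. Since $r$ is then rational, Lemma~\ref{lem:conference} makes it an integer, so $\sqrt n=2\theta$ with $r=\theta$ a positive integer. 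Then $k=2\theta^2\mp\theta$, and $\lambda=r+s+\nu$ together with $\nu=k+rs$ recover exactly the parameter sets (i) and (ii) with $\theta=r$.

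For $H_1$ the same reasoning gives $r+1=-(s+1)=\theta$ and $k=2\theta^2-1\mp\theta$. These are the rows ``(i) compl.'' and ``(ii) compl.'' of Table~\ref{tab:families}, where $\theta$ is the positive eigenvalue of the complementary graph. Conversely, the table shows that these four families give eigenvalues $\pm2\theta$ for the corresponding matrix, so that matrix is Hadamard.

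\textbf{Parity and times.} In all four families $n=4\theta^2$, so $16\mid n$ holds if and only if $\theta$ is even. This matches the parity condition in (d), and it completes the equivalence. For the times, Proposition~\ref{prop:latin} states that the IUM times are exactly the points of $S_\theta$, namely the odd multiples of $\pi/(2\theta)=\pi/\sqrt n$.

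The main obstacle I expect is the bookkeeping in the spectral step. One must make sure that $\theta$ is read consistently (as $\sqrt n/2$ rather than as the eigenvalue of $X$) in the complementary cases, and that the sign of $n-2k$ is unrestricted. Allowing both signs is precisely what lets the correction include families (ii) and the complements of (i), which the positive row sum condition of \cite[Theorem~5.1]{GMR} excluded.
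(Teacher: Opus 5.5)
Your proposal is correct and follows essentially the same route as the paper. The spectral test on $J-2A$ and $J-2A-2I$ identifies families (i), (ii) and their complements with $n=4\theta^2$, the parity condition and the mixing times come from Proposition~\ref{prop:latin}, and the converse uses the classification together with the single-time exclusions. You invoke those exclusions through Corollary~\ref{cor:times} rather than citing Lemma~\ref{lem:no-single-time} and Propositions~\ref{prop:square-conference} and~\ref{prop:no-single-time} separately, which is only a packaging difference.

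The one step stated too quickly is that $r$ is rational. This follows because $n-2k=\pm\sqrt n$ (respectively $n-2k-2=\pm\sqrt n$) is an integer. Your remark that the parameters $(9,4,1,2)$ determine the Paley graph supplies a point the paper leaves implicit.
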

\begin{proof}
The symmetric matrix $J-2A$ has entries $\pm1$, and it is a Hadamard matrix exactly when its eigenvalues $n-2k$, $-2r$, $-2s$ all have modulus $\sqrt n$. This means $n=4\theta^2$, $r=\theta$, $s=-\theta$ and $n-2k=\pm2\theta$, which are the parameter sets (i) and (ii). Similarly $J-2A-2I$ is a Hadamard matrix exactly when $n-2k-2$, $-2r-2$, $-2s-2$ have modulus $2\theta$ with $n=4\theta^2$, which gives $r=\theta-1$, $s=-\theta-1$ and the complements of (i) and (ii). This correspondence is classical, see \cite[Section~10.5]{BH}. In all these cases $16\mid n$ means that $\theta$ is even, and then Proposition~\ref{prop:latin} shows that the times of instantaneous uniform mixing are exactly the elements of $S_\theta$, the odd multiples of $\displaystyle\frac{\pi}{2\theta}=\frac{\pi}{\sqrt n}$. If $\theta$ is odd there is no such time. The Paley graph of order nine mixes uniformly by \eqref{eq:nine}.

Conversely, a graph with instantaneous uniform mixing admits UAM, so it is one of the graphs in Theorem~\ref{thm:classification}(a) to (d). Conference graphs of nonsquare order are excluded by Lemma~\ref{lem:no-single-time}, and among square orders only $n=9$ remains by Proposition~\ref{prop:square-conference}. The families (iii) and (v) and their complements are excluded by Proposition~\ref{prop:no-single-time}. The remaining graphs are those of Theorem~\ref{thm:classification}(d), which are the Hadamard cases with $16\mid n$.
\end{proof}

\begin{example}[Counterexamples]\label{ex:clebsch}
The halved $5$-cube has parameters $(16,10,6,6)$, which is (ii) with $\theta=2$. Its matrix $J-2A$ is a Hadamard matrix with row sum $-4$, while $J-2A-2I$ has the eigenvalue $-6$. So it satisfies none of the conditions of \cite[Theorem~5.1]{GMR}, under either reading of $\theta$, and \cite[Lemma~5.2(b)]{GMR} excludes it as well. Nevertheless it mixes uniformly at time $\displaystyle\frac{\pi}{4}$ by Theorem~\ref{thm:gmr-corrected}, in agreement with \cite[Theorem~5.2]{Chan20}. The complement of $K_4\square K_4$, with parameters $(16,9,4,6)$, is a second example of this kind. By \cite[Lemma~2.1]{GMR} it mixes uniformly at time $\displaystyle\frac{\pi}{4}$ together with $K_4\square K_4$, which falls under (a).

The Clebsch graph is the complement of the halved $5$-cube, with parameters $(16,5,0,2)$. Its matrix $J-2A-2I$ is a Hadamard matrix with diagonal $-1$ and row sum $4$, so under the literal reading it falls under (b) with $\theta=2$ and is excluded. However
\[
 U\left(\frac{\pi}{4}\right)=\frac{1}{2}\exp\left(\frac{i\pi}{4}\right)\left(A+I-\frac{1}{2}J\right)
\]
has all entries of modulus $\displaystyle\frac{1}{4}$, see also \cite[Theorem~1]{BKMT} and \cite[Theorem~6.1]{Chan20}. In the other direction, strongly regular graphs with parameters $(36,14,4,6)$ exist \cite[Section~9.9]{BH}. Under the literal reading they fall under (b) with $\theta=3$, but by Proposition~\ref{prop:latin} no graph with these parameters admits UAM. Their complements, with parameters $(36,21,12,12)$, are included by \cite[Lemma~5.2(b)]{GMR} with $\theta=3$ and do not admit UAM either.
\end{example}

\begin{remark}[Origin of the discrepancy]\label{rem:gmr-source}
In the proof of \cite[Lemma~5.2]{GMR} the computation for family (i) gives the even parity, while the paragraph on family (ii) states the odd parity for the graph and the even parity for its complement. These cannot both hold, because for $n=4\theta^2$ and $\displaystyle t=\frac{\pi}{2\theta}$ one has $e^{itJ}=I$ and hence $U_{\overline X}(t)=e^{-it}\,\overline{U_X(t)}$, which is the argument of \cite[Lemma~2.1]{GMR}. The families (iii), (iv) and (v) are treated in \cite[Sections~5.2 to~5.4]{GMR}, where the argument for (iii) is only sketched. Lemma~\ref{lem:no-single-time} and Propositions~\ref{prop:square-conference} and~\ref{prop:no-single-time} give independent proofs that these families admit no instantaneous uniform mixing except for the order nine.
\end{remark}

\begin{remark}[Two open problems of Coutinho and Guo]\label{rem:cg}
By Theorem~\ref{thm:gmr-corrected} and its proof, together with
Sections~\ref{sec:examples} and~\ref{sec:edge}, every time $t$
of instantaneous uniform mixing on a connected strongly regular
graph with at least two vertices is a rational multiple of $\pi$,
and all adjacency eigenvalues of such a graph are integers.
Consequently, all eigenvalues of $U(t)$ are roots of unity.
This answers \cite[Open Problem~3.1]{CG} within the class of
strongly regular graphs.

Together with Section~\ref{sec:edge}, the corrected classification
in Theorem~\ref{thm:gmr-corrected} also answers
\cite[Open Problem~3.3]{CG} within this class, by giving necessary
and sufficient conditions for instantaneous uniform mixing
in terms of the spectrum alone, without reference to a
specific mixing time.
\end{remark}

\section*{Disclosure of AI assistance}
Through a reference-checking process supported by the large language model Claude Opus 5.5, we found the need for the correction of Theorem~5.1 and Lemma~5.2 in \cite{GMR} given in Section~\ref{sec:gmr}, and the classification in Theorem~\ref{thm:classification}. The model also proposed the arguments of Lemma~\ref{lem:observable-compression}, Remarks~\ref{rem:two-scope} and~\ref{rem:conference-explicit} and Propositions~\ref{prop:elementary-chan}, \ref{prop:no-single-time} and~\ref{prop:density}. The author checked all contents, including Theorem~\ref{thm:classification}, and takes full responsibility for them.


\begin{thebibliography}{99}

\bibitem{ABTW}
A.~Ahmadi, R.~Belk, C.~Tamon, and C.~Wendler,
\emph{On mixing in continuous-time quantum walks on some circulant graphs},
Quantum Inf. Comput. \textbf{3}(6) (2003), 611--618.

\bibitem{BCM}
P.~Baptista, G.~Coutinho, and V.~Marques,
\emph{Unexpected averages of mixing matrices},
Quantum Information Processing \textbf{23}(7), article 272 (2024).
\url{https://doi.org/10.1007/s11128-024-04480-0}.
Preprint: \url{https://arxiv.org/abs/2308.16378}. Statement numbers refer to the arXiv version.

\bibitem{BK}
I.~B\'ar\'any and R.~Karasev,
\emph{Notes about the Carath\'eodory number},
Discrete \& Computational Geometry \textbf{48} (2012), 783--792.
See Section~1 and Theorem~1.5 for the Fenchel and
Hanner--R{\aa}dstr\"om refinements.
\url{https://arxiv.org/abs/1112.5942}.

\bibitem{BT}
C.~Bayer and J.~Teichmann,
\emph{The proof of Tchakaloff's theorem},
Proc. Amer. Math. Soc. \textbf{134}(10) (2006), 3035--3040.
\url{https://doi.org/10.1090/S0002-9939-06-08249-9}.

\bibitem{BKMT}
A.~Best, M.~Kliegl, S.~Mead-Gluchacki, and C.~Tamon,
\emph{Mixing of quantum walks on generalized hypercubes},
Int. J. Quantum Inf. \textbf{6}(6) (2008), 1135--1148.

\bibitem{BH}
A.~E.~Brouwer and W.~H.~Haemers,
\emph{Spectra of Graphs}, Universitext, Springer, 2012.
\url{https://doi.org/10.1007/978-1-4614-1939-6}.

\bibitem{Cameron}
P.~J.~Cameron,
\emph{Strongly regular graphs}, in L.~W.~Beineke and R.~J.~Wilson (eds.),
Topics in Algebraic Graph Theory, Cambridge University Press, 2004, 203--221.
Draft of April 2001:
\url{https://maths.qmul.ac.uk/~lsoicher/designtheory.org/library/preprints/srg.pdf}.

\bibitem{Chan11}
A.~Chan,
\emph{Complex Hadamard matrices and strongly regular graphs},
arXiv:1102.5601v2 (2020).

\bibitem{Chan20}
A.~Chan,
\emph{Complex Hadamard matrices, instantaneous uniform mixing and cubes},
Algebraic Combinatorics \textbf{3}(3) (2020), 757--774.
\url{https://doi.org/10.5802/alco.112}.

\bibitem{CG}
G.~Coutinho and K.~Guo,
\emph{Selected open problems in continuous-time quantum walks},
Special Matrices \textbf{12}(1) (2024), article 20240025.
\url{https://doi.org/10.1515/spma-2024-0025}.

\bibitem{GMR}
C.~Godsil, N.~Mullin, and A.~Roy,
\emph{Uniform mixing and association schemes},
Electronic Journal of Combinatorics \textbf{24}(3), P3.22 (2017).
\url{https://doi.org/10.37236/4745}.

\bibitem{Sch}
K.~Schm\"udgen,
\emph{The Moment Problem},
Graduate Texts in Mathematics 277, Springer, 2017.
\url{https://doi.org/10.1007/978-3-319-64546-9}.

\bibitem{Zhan}
H.~Zhan,
\emph{$\varepsilon$-uniform mixing in discrete quantum walks},
arXiv:2311.18797 (2023).

\end{thebibliography}
\end{document}